\tikzstyle{max}=[thick,draw,minimum size=1.4em,inner sep=0em]
\tikzstyle{min}=[diamond,thick,draw,minimum size=1.4em,%
\tikzstyle{ran}=[circle,thick,draw,minimum size=1.4em,%
\tikzstyle{mc}=[rounded corners,thick,draw,minimum size=1.4em,%
\tikzstyle{tran}=[thick,draw,->,>=stealth]
\tikzstyle{loop left}=[tran, to path={.. controls +(150:.5)
\tikzstyle{loop right}=[tran, to path={.. controls +(30:.5)
\tikzstyle{loop above}=[tran, to path={.. controls +(60:.5)
\tikzstyle{loop below}=[tran, to path={.. controls +(240:.5)
\newcommand{\odrazkaaaa}{$*$}
\newcommand{\mylabel}{a}
\newlength{\odsazeni}
\renewenvironment{itemize}%
  {\advance\cvycet by 1
   \ifnum\cvycet=1 \renewcommand{\mylabel}{\makebox[0ex][r]{\odrazkaa}}%
                   \setlength{\odsazeni}{2.2ex}\fi%
   \ifnum\cvycet=2 \renewcommand{\mylabel}{\makebox[0ex][r]{\odrazkaaa}}%
                   \setlength{\odsazeni}{1.5ex}\fi%
   \ifnum\cvycet=3 \renewcommand{\mylabel}{\makebox[0ex][r]{\odrazkaaaa}}
                   \setlength{\odsazeni}{1.5ex}\fi%
   \begin{list}{\mylabel}{\setlength{\itemsep}{0ex}\setlength{\topsep}{.5ex}%
        \setlength{\parsep}{0pt}\setlength{\parskip}{0ex}%
        \setlength{\itemindent}{0pt}\setlength{\labelwidth}{4ex}%
        \setlength{\partopsep}{0pt}\setlength{\listparindent}{3ex}%
        \setlength{\leftmargin}{\the\odsazeni}}}%
  {\end{list}\advance\cvycet by -1}
\newcommand{\tran}[1]{{}\mathchoice%
    {\stackrel{#1}{\rightarrow}}
    {\mathop {\smash\rightarrow}\limits^{\vrule width 0pt height 0pt
                                                depth 4pt\smash{#1}}}
    {\stackrel{#1}{\rightarrow}}
    {\stackrel{#1}{\rightarrow}}
{}}
\newcommand{\ltran}[1]{{}\mathchoice%
    {\stackrel{#1}{\longrightarrow}}
    {\mathop {\smash\longrightarrow}\limits^{\vrule width 0pt height 0pt
                                                depth 4pt\smash{#1}}}
    {\stackrel{#1}{\longrightarrow}}
    {\stackrel{#1}{\longrightarrow}}
{}}
\newcommand{\A}{\mathcal{A}}
\newcommand{\calO}{\mathcal{O}}
\newcommand{\calS}{\mathcal{S}}
\newcommand{\fpath}{\mathit{FPath}}
\newcommand{\run}{\mathit{Run}}
\newcommand{\play}{\mathit{Play}}
\newcommand{\safe}{\mathit{Safe}}
\newcommand{\cover}{\mathit{Cover}}
\newcommand{\scomp}{\textit{S-comp}}
\newcommand{\ccomp}{\textit{C-comp}}
\newcommand{\len}[1]{\mathit{len}(#1)}
\newcommand{\Zset}{\mathbb{Z}}
\newcommand{\T}{\mathcal{T}}
\newcommand{\C}{\mathcal{C}}
\newcommand{\D}{\mathcal{D}}
\newcommand{\B}{\mathcal{B}}
\newcommand{\gtran}{\mapsto}
\newcommand{\coNP}{\mathbf{coNP}}
\newcommand{\NP}{\mathbf{NP}}
\newcommand{\DP}{\mathbf{DP}}
\newcommand{\PSPACE}{\mathbf{PSPACE}}
\newcommand{\EXPTIME}{\mathbf{EXPTIME}}
\newcommand{\EXPSPACE}{\mathbf{EXPSPACE}}
\newcommand{\PTIME}{\mathbf{P}}
\newcommand{\zerog}{\mathit{Zero}}
\newcommand{\firstom}{\mathit{first}}
\newcommand{\theoremlike}[2]{\par\medskip\penalty-250%
{{\bfseries\noindent
#2 \ref{#1}.}}\it}
\newcommand{\thmhelperpre}[2]{\theoremlike{#1}{#2}}
\newcommand{\thmhelperpost}{\par\medskip}
\begin{document}

\title{Efficient Controller Synthesis for Consumption Games with Multiple 
  Resource Types}
\titlerunning{Consumption Games}

\author{Tom\'{a}\v{s} Br\'{a}zdil\inst{1}$^{\star}$ \and
        Krishnendu Chatterjee\inst{2}$^{\star}$ \and
        Anton\'{\i}n Ku\v{c}era\inst{1}$^{\star}$ \and
        Petr Novotn\'{y}\inst{1}$^{\star}$}    
\authorrunning{Br\'{a}zdil, Chatterjee, Ku\v{c}era, Novotn\'{y}} 
\institute{Faculty of Informatics, Masaryk University, Brno, Czech Republic.\\
    \texttt{\{brazdil,kucera,xnovot18\}@fi.muni.cz} \and
     IST Austria, Klosterneuburg, Austria.\\
  \texttt{krish.chat@gmail.com}}

\maketitle

\begin{abstract}
\noindent\let\thefootnote\relax\footnote{\makebox[0ex][r]{$^\star~$}%
  Tom\'{a}\v{s} Br\'{a}zdil, Anton\'{\i}n Ku\v{c}era, and Petr
  Novotn\'{y} are supported by the Czech Science Foundation, grant
  No.~P202/10/1469. Krishnendu Chatterjee is supported by the FWF
  (Austrian Science Fund) NFN Grant No S11407-N23 (RiSE) and ERC Start
  grant (279307: Graph Games).}%
We introduce \emph{consumption games}, a model for discrete
interactive system with multiple resources that are consumed or reloaded
independently. More precisely, a consumption game is 
a finite-state graph where each transition is labeled by a vector
of resource updates, where every update is a non-positive number 
or~$\omega$. The $\omega$ updates model the reloading of a given 
resource. Each vertex belongs either to player~$\Box$ or 
player~$\Diamond$, where the aim of player~$\Box$ is to play so that
the resources are never exhausted. We consider several natural 
algorithmic problems about consumption games, and show that although
these problems are computationally hard in general, they are 
solvable in polynomial time for every fixed number of resource types
(i.e., the dimension of the update vectors) and bounded resource updates.  
\end{abstract}

 \pagenumbering{arabic}
\section{Introduction}
\label{sec-intro}

In this paper we introduce \emph{consumption games}, a model for discrete
interactive systems with multiple resources that can be consumed and
reloaded independently. We show that consumption games, despite their 
rich modelling power, still admit efficient algorithmic analysis
for a ``small'' number of resource types. This property distinguishes
consumption games from other related models, such as games 
over vector addition systems or multi-energy games (see below), that are 
notoriously intractable.

Roughly speaking, a consumption game is a finite-state directed graph
where each state belongs either to player~$\Box$ (controller) or
player~$\Diamond$ (environment).  Every transition $s \tran{} t$ is
labeled by a $d$-dimensional vector $\delta$ such that each component
$\delta(i)$, where $1 \leq i \leq d$, is a non-positive integer
(encoded in \emph{binary}) or~$\omega$.  Intuitively, if $\delta(i) =
-n$, then the current load of the $i$-th resource is decreased by $n$
while performing $s \tran{} t$, and if $\delta(i) = \omega$, then the
$i$-th resource can be ``reloaded'' to an arbitrarily high
value greater than or equal to the current load.  
A \emph{configuration} of a consumption game is
determined by the current control state and the current load of all
resources, which is a $d$-dimensional vector of positive integers. A
play of a consumption game is initiated in some state and some initial
load of resources.  The aim of player~$\Box$ is to play \emph{safely},
i.e., select transitions in his states so that the vector of current
resource loads stays positive in every component (i.e., the resources
are never exhausted). Player~$\Diamond$ aims at the opposite.

The resources may correspond to fuel, electricity, money, or even more abstract 
entities such as time or patience. To get a better intuition behind
consumption games and the abstract problems studied in this paper, let
us discuss one particular example in greater detail. 

The public transport company of Brno city\footnote{DPMB, Dopravn\'{\i} Podnik
M\v{e}sta Brna.} maintains the network of public trams, buses, trolleybuses,
and boats. Due to the frequent failures and breakdowns in electrical
wiring, rails, railroad switches, and the transport vehicles themselves, 
the company has several emergency teams which travel from one accident 
to another according 
to the directives received from the central supervisory service. 
Recently, the company was considering the possibility of replacing their
old diesel vans by new cars equipped with more ecological natural gas 
engines. The  problem is that these cars have smaller range and 
can be tanked only at selected gas stations. So, it is not clear
whether the cars are usable at all, i.e., whether they can always
visit a gas station on time regardless where and when an accident 
happens, and what are the time delays caused by detours to gas 
stations. Now we indicate how to construct the associated consumption 
game model and how to rephrase the above questions in abstract terms.  

We start with a standard graph $G$ representing the city road
network, i.e., the nodes of $G$ correspond to distinguished locations
(such as crossings) and the edges correspond to the connecting roads.
Then we identify the nodes corresponding to gas stations that sell
natural gas, and to each edge (road) we assign two negative numbers
corresponding to the expected time and fuel needed to pass the
road. Every morning, a car leaves a central garage (where it is fully
tanked) and returns to the same place in the evening. The maximal
number of accidents serviced per day can be safely overestimated
by~$12$. Our consumption game $\C$ has two resource types 
modeling the fuel and time in the expected way. The fuel is
consumed by passing a transition (road), and can be reloaded 
by the outgoing transitions of gas stations. The time
is also consumed by passing the roads, and the only node where it can 
be reloaded is
the central garage, but only after completing the $12$~jobs. In the
states of $\C$ we remember the current job number (from $1$
to $12$) and the current target node. At the beginning, and also after
visiting the current target node, the next target node is selected
by player~$\Diamond$. Technically, the current target node
belongs to player $\Diamond$, and there is a transition for every 
(potential) next target node. Performing such a transition does not 
consume the resources, but the information  about the next target node 
is stored in the chosen state, job index is increased,
and the control over the play is given back to player~$\Box$ who models 
the driver. This goes on until the job index reaches~$12$. Then, 
player~$\Diamond$ makes no further choice, but it is possible to reload
the time resource at the node corresponding to the central garage, and
hence  player~$\Box$ aims at returning to this place as quickly as possible
(without running out of gas). Note that $\C$ has about $12 \cdot n^2$ 
states, where $n$ is the number of states of~$G$.

The question whether the new cars are usable at all can now be formalized 
as follows: \emph{Is there is \emph{safe} strategy for player~$\Box$ in the   
initial configuration such that the fuel resource is never reloaded
to a value which is higher than the tank capacity of the car?} 
In the initial configuration, the fuel resource is initialized to~$1$ 
because it can be immediately reloaded in the central garage, and the 
time resource is initialized to a ``sufficiently high value'' which is 
efficiently computable due to  the \emph{finite reload property} 
formulated in Corollary~\ref{cor-frp}.
Similarly, the extra time delays caused by detours to gas stations can 
be estimated by computing the \emph{minimal initial credit} for 
the time resource, i.e., the minimal initial value sufficient for performing
a safe strategy, and comparing this number with the minimal initial 
credit for the time resource in a simplified consumption game where 
the fuel is not  consumed at all (this corresponds to an ideal
``infinite tank capacity''). Similarly, one could also analyze the extra
fuel costs, or model the consumption of the material needed to 
perform the repairs, and many other aspects. 

An important point of the above example is that the number 
of resources is relatively small, but the number of states is very
large. This motivates the study of \emph{parameterized complexity} of 
basic decision/optimization problems for consumption games, where the
parameters are the following:
\begin{itemize}
\item $d$, the number of resources (or \emph{dimension});
\item $\ell$, the maximal finite $|\delta(i)|$ such that 
   $1 \leq i \leq d$ and $\delta$ is a label of some transition.
\end{itemize}
\smallskip 

\noindent
\textbf{Main Results.} For every state $s$ of a consumption game $\C$, 
we consider the following sets of vectors 
(see Section~\ref{sec-prelim} for precise definitions):
\begin{itemize}
\item $\safe(s)$ consists of all vectors $\alpha$ of positive integers
  such that player~$\Box$ has a safe strategy in the configuration
  $(s,\alpha)$. That is, $\safe(s)$ consists of all vectors describing
  a sufficient \emph{initial} load of all resources needed to perform 
  a safe strategy.
\item $\cover(s)$ consists of all vectors $\alpha$ of positive
  integers such that player~$\Box$ has a safe strategy $\sigma$ in the
  configuration $(s,\alpha)$ such that for every strategy $\pi$ for
  player~$\Diamond$ and every configuration $(t,\beta)$ visited during
  the play determined by $\sigma$ and $\pi$ we have that $\beta \leq
  \alpha$.  Note that 
  physical resources (such as fuel, water, electricity, etc.) are
  stored in devices with finite capacity (tanks, batteries, etc.), and
  hence it is important to know what capacities of these devices
  are sufficient for performing a safe strategy. These sufficient
  capacities correspond to the vectors of $\cover(s)$.
\end{itemize}
Clearly, both $\safe(s)$ and  $\cover(s)$ are upwards closed 
with respect to component-wise ordering. Hence, these sets 
are fully determined by their \emph{finite} sets 
of minimal elements. In this paper we aim at answering the very
basic algorithmic problems about $\safe(s)$ and $\cover(s)$, which
are the following:
\begin{enumerate}
\item[(A)] \emph{Emptiness}. For a given state $s$, decide whether 
  $\safe(s) = \emptyset$
  (or $\cover(s) = \emptyset$).
\item[(B)] \emph{Membership}. For a given state $s$ and a vector $\alpha$,
  decide whether $\alpha \in \safe(s)$ (or \mbox{$\alpha \in \cover(s)$}).
  Further, decide whether $\alpha$ is a \emph{minimal} vector 
  of $\safe(s)$ (or $\cover(s)$).
\item[(C)] \emph{Compute the set of minimal vectors} of $\safe(s)$
  (or $\cover(s)$). 
\end{enumerate}
Note that these problems subsume the questions of our motivating example.
We show that \emph{all of these problems are computationally hard}, but 
solvable in \emph{polynomial time} for every fixed choice of the
parameters~$d$ and~$\ell$ introduced above. Since the degree of
the bounding polynomial increases with the size of the parameters, we do
\emph{not} provide fixed-parameter tractability results in the usual
sense of parameterized complexity (as it is mentioned in 
Section~\ref{sec-general}, this would  imply a solution to a long-standing 
open problem in algorithmic study of graph games). Still, these results clearly show that
for ``small'' parameter values, the above problems \emph{are} practically
solvable even if the underlying graph of~$\C$ is very large. 
More precisely, we show the following for game graphs with $n$ states:
\begin{itemize}
\item The emptiness problems for $\safe(s)$ and $\cover(s)$ are
  $\coNP$-complete, and solvable in $\calO(d!\cdot n^{d+1})$ time.
\item The membership problems for $\safe(s)$ and $\cover(s)$ are
  $\PSPACE$-hard and solvable in time 
  \mbox{$|\alpha| \cdot (d \cdot \ell  \cdot n)^{\calO(d)}$}
  and \mbox{$\calO(\Lambda^2 \cdot n^2)$}, respectively,
  where  $|\alpha|$ is the encoding size of $\alpha$ and
  $\Lambda = \Pi_{i=1}^d \alpha(i)$.
\item The set of minimal elements of $\safe(s)$ and $\cover(s)$
  is computable in time 
  \mbox{$(d \cdot \ell  \cdot n)^{\calO(d)}$}
  and $(d \cdot \ell \cdot n)^{\calO(d\cdot d!)}$, respectively.
\end{itemize}

Then, in Section~\ref{sec-restricted}, we show that the complexity of some of 
the above problems can be substantially improved for two natural 
subclasses of \emph{one-player} and \emph{decreasing} consumption 
games by employing special methods. A consumption game is
\emph{one-player} if all states are controlled by player~$\Box$,
and \emph{decreasing} if every resource is either reloaded or decreased 
along every cycle in the graph of $\C$. For example, the game constructed 
in our motivating example is decreasing, and we give a motivating
example for one-player consumption games in Section~\ref{sec-restricted}.
In particular, we prove that
\begin{itemize}
\item the emptiness problem for $\safe(s)$ and $\cover(s)$ is solvable
  in polynomial time both for one-player and decreasing consumption 
  games;
\item the membership problem for $\safe(s)$ is $\PSPACE$-complete 
  (resp. $\NP$-complete) for decreasing consumption games (resp. one-player 
  consumption games).
\item Furthermore, for both these subclasses we present algorithms to 
  compute the minimal elements of $\safe(s)$ by a reduction to 
  \emph{minimum multi-distance reachability} problem, and solving the minimum 
  multi-distance reachability problem on game graphs.
  Though the algorithms do not improve the worst case complexity 
  over general consumption games, the algorithms are iterative and 
  potentially terminate much earlier (we refer to Section~\ref{subsec-oneplayer}
  and Section~\ref{subsec-decreasing} for details).
\end{itemize}
\smallskip

\noindent
\textbf{Related Work.}  Our model of consumption games is related but
incomparable to energy games studied in the literature. In energy
games both positive and non-positive weights are allowed, but in
contrast to consumption games there are no $\omega$-weights.
Energy games with single resource type were introduced
in~\cite{CdAHS03}, and it was shown that the minimal initial credit
problem (and also the membership problem) can be solved in exponential
time.  It follows from the results of~\cite{CdAHS03} that the related
problem of existence of finite initial credit (which is same as the
emptiness problem), which was also shown to be equivalent to
two-player mean-payoff games~\cite{BFLM10}, lies in $\NP \cap \coNP$.
Games over extended vector addition systems with states (eVASS games), 
where the weights in transition labels are in $\{-1,0,1,\omega\}$, 
were introduced and studied in \cite{BJK}. In this paper, it 
was shown that the question whether player~$\Box$ has a safe strategy
in a given configuration is decidable, and the winning region of 
player~$\Box$ is computable in $d$-$\EXPTIME$, where $d$ is the 
eVASS dimension, and hence the provided solution is impractical even for
very small $d$'s. A closely related model of energy games with 
multiple resource types (or multi-energy games) were
considered in~\cite{CDHR10}.  The minimal initial credit problem (and
also the membership problem) for multi-energy games can be reduced to
the problem of games over eVASS with an exponential reduction to encode
the integer weights into weights $\{-1,0,1\}$.  Thus the minimal
initial credit problem can be solved in $d$-$\EXPTIME$, and the
membership problem is $\EXPSPACE$-hard (the hardness follows from the
classical result of Lipton~\cite{Lipton76}).  The existence of finite
initial credit problem (i.e., the emptiness problem) is
$\coNP$-complete for multi-energy games~\cite{CDHR10}.  
Thus the complexity of the membership and the minimal initial credit problem 
for consumption games is much better (it is in $\EXPTIME$ and $\PSPACE$-hard and can be solved in
polynomial time for every fixed choice of the parameters) 
as compared to eVASS games or multi-energy games ($\EXPSPACE$-hard and can be 
solved in $d$-$\EXPTIME$).  
For eVASS games with fixed dimensions, the
problem can be solved in polynomial time for $d=2$ (see \cite{Chaloupka}),
and it is open whether the complexity can be improved for other constants.
Moreover, for the important subclasses of one-player and decreasing 
consumption games we show much better bounds (polynomial time 
algorithms for emptiness and optimal complexity bounds for membership in 
$\safe(s)$).
The complexity bounds are summarized in Table~\ref{tb:complexity}, along with known bounds for energy games.

\begin{table}%
\label{tb:complexity}
\centering
\caption{Complexities of studied problems (A)-(C). Empty cell indicates problem that was not, to our best knowledge, considered for energy games.}
\begin{tabular}{|l||p{2.75cm}|p{2cm}|p{1.95cm}|p{2.15cm}|}
 \hline
 Problem & Energy games & Consumption games &  {One-player CGs} & {Decreasing CGs} \\
 \hline 
 \hline
 \parbox[t]{2.6cm}{Emptiness of $\safe(s)$ and $\cover(s)$} & \mbox{$\coNP$-compl.} \cite{CDHR10} & \mbox{$\coNP$-compl.}& $\in\PTIME$ & $\in\PTIME$\\
 \hline \hline 
 \parbox[t]{2.6cm}{Membership in\\ $\safe(s)$ }& \mbox{$\EXPSPACE$-hard}, \mbox{$\in d$-$\EXPTIME$} \cite{BJK,CDHR10} & \mbox{$\PSPACE$-hard}, \mbox{$\in\EXPTIME$} & \mbox{$\NP$-compl.} & \mbox{$\PSPACE$-compl.} \\
 \hline
 \parbox[t]{2.6cm}{Membership in \\set of minimums\\ of $\safe(s)$} & \mbox{$\EXPSPACE$-hard}, \mbox{$\in d$-$\EXPTIME$} \cite{BJK,CDHR10} & \mbox{$\PSPACE$-hard}, \mbox{$\in\EXPTIME$} & \mbox{$\DP$-compl.} & \mbox{$\PSPACE$-compl.} \\
 \hline
 \parbox[t]{2.6cm}{Membership in \\(minimums of) \\$\cover(s)$} & \mbox{$\EXPTIME$-compl.} \cite{FLLS11:EnGames} (lower bound holds even for fixed $d>1$)& \mbox{$\PSPACE$-hard}, \mbox{$\in\EXPTIME$} & \mbox{$\NP$-hard}, \mbox{$\in\EXPTIME$} & \mbox{$\PSPACE$-hard},  \mbox{$\in\EXPTIME$} \\
 \hline \hline 
 \parbox[t]{2.6cm}{Compute min. vectors in $\safe(s)$} & \mbox{$\in d$-$\EXPTIME$} \cite{BJK,CDHR10} & $(d \cdot \ell \cdot |S|)^{\calO(d)}$ & \parbox[t]{1.9cm}{$(d \cdot \ell \cdot |S|)^{\calO(d)}$,\\ iterative algorithm} &\parbox[t]{2.1cm}{$(d \cdot \ell \cdot |S|)^{\calO(d)}$,\\ iterative algorithm} \\
 \hline \parbox[t]{2.6cm}{Compute min. vectors in $\cover(s)$} & & $(d \cdot \ell \cdot |S|)^{\calO(d\cdot d!)}$ &  $(d \cdot \ell \cdot |S|)^{\calO(d\cdot d!)}$&  $(d \cdot \ell \cdot |S|)^{\calO(d\cdot d!)}$\\
 \hline
 \hline
 \multicolumn{5}{|p{11.3cm}|}{Moreover, all problems considered are tractable for every fixed choice of $d$ and $\ell$ in general consumption games. (In energy games, none of the problems is known to be tractable for $d\geq 3$.)}\\
 \hline
\end{tabular}
\end{table}

The paper is organized as follows. After presenting necessary definitions
in Section~\ref{sec-prelim}, we present our solution to the three
algorithmic problems~(A)-(C) for general consumption games
in Section~\ref{sec-general}. In Section~\ref{sec-restricted}, 
we concentrate on the two 
subclasses of decreasing and one-player consumption games and give
optimized solutions to some of these problems. Finally, in 
Section~\ref{sec-concl} we give a short list of open problems which, 
in our opinion, address some of the fundamental properties of 
consumption games that deserve further attention.
Due to the lack of space, the proofs were shifted to~Appendix.

\section{Definitions}
\label{sec-prelim}

In this paper, the set of all integers is denoted by $\Zset$.
For a given operator ${\Join} \in \{{>},{<},{\leq},{\geq}\}$, we
use $\Zset_{\Join 0}$ to denote the set $\{i \in \Zset \mid i \Join 0\}$,
and $\Zset_{\Join 0}^\omega$ to denote the set 
$\Zset_{\Join 0} \cup \{\omega\}$, where $\omega \not\in \Zset$ is a special
symbol representing an ``infinite amount'' with the usual conventions
(in particular, $c+\omega = \omega +c = \omega$ and $c < \omega$ for 
every $c \in \Zset$).
For example, $\Zset_{<0}$ is the set of all negative integers, and
$\Zset_{<0}^\omega$ is the set $\Zset_{<0} \cup \{\omega\}$. 
We use Greek letters  $\alpha,\beta,\ldots$ to denote
vectors over $\Zset_{\Join 0}$ or $\Zset_{\Join 0}^\omega$, and $\vec{0}$
to denote the vector of zeros. The
$i$-th component of a given $\alpha$ is denoted by $\alpha(i)$.
The standard component-wise ordering over vectors is denoted by $\leq$,
and we also write $\alpha < \beta$ to indicate that 
$\alpha(i) < \beta(i)$ for every~$i$.

Let $M$ be a finite or countably infinite alphabet. A \emph{word}
over~$M$ is a finite or infinite sequence of elements of~$M$. The
empty word is denoted by~$\varepsilon$, and the set of all finite
words over $M$ is denoted by $M^*$. Sometimes we also use $M^+$ to
denote the set $M^* \smallsetminus \{\varepsilon\}$.  The length of a
given word $w$ is denoted by $\len{w}$, where $\len{\varepsilon} = 0$
and the length of an infinite word is~$\infty$.  The individual
letters in a word $w$ are denoted by $w(0),w(1),\ldots$, and for every
infinite word $w$ and every $i \geq 0$ we use $w_i$ to denote the
infinite word $w(i),w(i{+}1),\ldots$.

A \emph{transition system}\index{transition system} 
is a pair $\T = (V,{\tran{}})$, where 
$V$ is a finite or countably infinite set of \emph{vertices} and 
${\tran{}} \subseteq V \times V$ a \emph{transition relation}
such that for every $v \in V$ there is at least one outgoing transition
(i.e., a transition of the form $v \tran{} u$).
A \emph{path} in $\T$ is a finite or infinite word $w$ over~$V$
such that $w(i) \tran{} w(i{+}1)$ for every $0 \leq i < \len{w}$.
We call a finite path a \emph{history} and infinite path a \emph{run}. 
The sets of all finite paths and all runs in~$\T$ 
are denoted by $\fpath(\T)$ and $\run(\T)$, respectively.

\begin{definition}
\label{def-game}
A \emph{(2-player) game} is a triple $G =
(V,\gtran,(V_{\Box},V_\Diamond))$ where $(V,{\gtran})$ is a
transition system and $(V_{\Box},V_\Diamond)$ is a partition
of~$V$. If $V_\Diamond = \emptyset$, then $G$ is a \mbox{\emph{1-player game}}.
\end{definition}
A game $G$ is played by two players, $\Box$ and $\Diamond$,
who select transitions in the vertices of $V_{\Box}$ and $V_{\Diamond}$,
respectively. Let $\odot \in \{\Box,\Diamond\}$.
A \emph{strategy} for player~$\odot$ is a function which to each 
$wv \in V^*V_{\odot}$ assigns a state $v' \in V$ such that 
$v \gtran v'$. 
The sets of all strategies for player~$\Box$ and 
player~$\Diamond$ are denoted by 
$\Sigma_G$ and $\Pi_G$ (or just by $\Sigma$ and $\Pi$ if $G$ is understood),
respectively. We say that a strategy $\tau$
is \emph{memoryless} if $\tau(wv)$ depends just on the last
state~$v$, for every $w\in V^*$. Strategies that are not necessarily memoryless are called 
\emph{history-dependent}. 
Note that every initial vertex $v$ and every pair of strategies 
$(\sigma,\pi) \in \Sigma \times \Pi$ determine a unique 
infinite path in~$G$ initiated in~$v$, which is called a \emph{play}
and denoted by $\play_{\sigma,\pi}(v)$.

\begin{definition}
\label{def-cgame}
  Let $d \geq 1$.
  A \emph{consumption game of dimension $d$} is a tuple 
  $\C = (S,E,(S_{\Box},S_{\Diamond}),L)$ where
  $S$ is a finite set of \emph{states}, $(S,E)$ is a transition
  system, $(S_{\Box},S_{\Diamond})$ is a partition of~$S$, and 
  $L$ is \emph{labelling} which to every $(s,t) \in E$ assigns
  a vector $\delta = (\delta(1),\ldots,\delta(d))$ such that
  $\delta(i) \in \Zset_{\leq 0 }^\omega$ for every $1 \leq i \leq d$.
  If $s \in S_{\Diamond}$, we require that $\delta(i) \neq \omega$ for
  all $1 \leq i \leq d$.
  We  write $s \tran{\delta} t$ to indicate that $(s,t) \in E$ and 
  $L(s,t) = \delta$. 

  We say that $\C$ is \emph{one-player} if $S_\Diamond = \emptyset$,
  and \emph{decreasing} if for every $n \geq 1$, every  $1 \leq i \leq d$,
  and every  path
  $s_0 \ltran{\delta_1} s_1 \ltran{\delta_2} \cdots \ltran{\delta_n} s_n$
  such that $s_0 = s_n$,
  there is some $j\leq n$ where $\delta_j(i) \neq 0$.
\end{definition}

\noindent
Intuitively, if
$s \tran{\delta} t$, then the system modeled by $\C$ can move from
the state $s$ to the state $t$ so that its resources are consumed/reloaded
according to $\delta$. More precisely, if $\delta(i) \leq 0$, then 
the current load of resource~$i$ is decreased by $|\delta(i)|$, and
if $\delta(i) = \omega$, then the resource~$i$ can be reloaded to an arbitrarily 
high positive value larger than or equal to the current load. The aim of 
player~$\Box$ is to play so that
the resources are never exhausted, i.e., the vector of current loads
stays positive in every component. The aim of player~$\Diamond$ is
to achieve the opposite. 

The above intuition is formally captured by defining the associated  
infinite-state game $G_{\C}$ for $\C$. The vertices of
$G_{\C}$ are \emph{configurations} of $\C$, i.e., the elements of 
$S \times \Zset_{> 0}^d$ together with a special
configuration $F$ (which stands for ``fail''). The transition relation 
$\gtran$ of $G_{\C}$ is determined as follows:
\begin{itemize}
\item $F \gtran F$.
\item For every configuration $(s,\alpha)$ and every transition 
  $s \tran{\delta} t$ of $\C$ such that $\alpha(i) + \delta(i) > 0$
  for all $1 \leq i \leq d$, there
  is a transition $(s,\alpha) \gtran (t,\alpha{+}\gamma)$ for every
  $\gamma \in \Zset^d$ such that 
  \begin{itemize}
  \item $\gamma(i) = \delta(i)$ for every $1 \leq i \leq d$ where 
     $\delta(i) \neq \omega$;
  \item $\gamma(i) \geq 0$ for every $1 \leq i \leq d$ where 
     $\delta(i) = \omega$.
  \end{itemize}
\item If $(s,\alpha)$ is a configuration and
  $s \tran{\delta} t$ a transition of $\C$ such that 
  $\alpha(i) + \delta(i) \leq 0$
  for some $1 \leq i \leq d$, then there is a transition 
  $(s,\alpha) \gtran F$.
\item There are no other transitions.  
\end{itemize}

\noindent
A strategy $\sigma$ for player~$\Box$ in $G_{\C}$ is \emph{safe} in
a configuration $(s,\alpha)$ iff for every strategy $\pi$ for 
player~$\Diamond$ we have that $\play_{\sigma,\pi}(s,\alpha)$ does
\emph{not} visit the configuration~$F$. For every $s \in S$, we use
\begin{itemize}
\item $\safe(s)$ to denote the set of all $\alpha \in \Zset_{>0}^d$ 
   such that player~$\Box$ has a safe strategy in $(s,\alpha)$;
 \item $\cover(s)$ to denote the set of all $\alpha \in \Zset_{>0}^d$ 
   such that player~$\Box$ has a safe strategy $\sigma$ in
   $(s,\alpha)$ such that for every strategy $\pi$ for
   player~$\Diamond$ and every configuration $(t,\beta)$ visited by
   $\play_{\sigma,\pi}(s,\alpha)$ we have that $\beta \leq \alpha$.
\end{itemize}
If $\alpha \in \safe(s)$, we say that $\alpha$ is \emph{safe in $s$},
and if $\alpha \in \cover(s)$, we say that $\alpha$ is \emph{covers~$s$}.  
Obviously, $\cover(s) \subseteq \safe(s)$, and both
$\safe(s)$ and $\cover(s)$ are upwards closed w.r.t.\ component-wise
ordering (i.e., if $\alpha \in \safe(s)$ and $\alpha \leq \alpha'$,
then $\alpha' \in \safe(s)$). This means that $\safe(s)$ and
$\cover(s)$ are fully described by its finitely many minimal elements.

Intuitively, $\safe(s)$ consists of all vectors describing a sufficiently
large \emph{initial} amount of all resources needed to perform a safe
strategy. Note that during a play, the resources can be reloaded to 
values that are larger than the initial one. Since physical resources 
are stored in ``tanks'' with finite capacity, we need to know
what \emph{capacities} of these tanks are sufficient for performing
a safe strategy. These sufficient capacities are encoded by the vectors
of $\cover(s)$.

\section{Algorithms for General Consumption Games}
\label{sec-general}

In this section we present a general solution for the three algorithmic 
problems~(A)-(C) given in Section~\ref{sec-intro}. 

We start by a simple observation that connects the study of consumption 
games to a more mature theory of Streett games. A Streett
game is a tuple \mbox{$\calS = (V,\gtran,(V_{\Box},V_\Diamond),\A)$},
where $(V,\gtran,(V_{\Box},V_\Diamond))$ is a 2-player game with finitely many
vertices, and $\A = \{(G_1,R_1),\ldots,(G_m,R_m)\}$, where $m \geq 1$
and $G_i,R_i \subseteq {\gtran}$ for all $1 \leq i \leq m$, is a Streett 
(or strong fairness) winning condition (for technical convenience, we consider 
$G_i,R_i$ as subsets of edges rather than vertices). For an infinite path $w$ 
in $\calS$, let $\inf(w)$ be the
set of all edges that are executed infinitely often along~$w$.
We say that $w$ \emph{satisfies $\A$} iff $\inf(w) \cap G_i \neq \emptyset$
implies $\inf(w) \cap R_i \neq \emptyset$ for every $1 \leq i \leq m$.   
A strategy $\sigma \in \Sigma_{\calS}$ is \emph{winning} in $v \in V$ 
if for every $\pi \in \Pi_{\calS}$ we have that $\play_{\sigma,\pi}(v)$ 
satisfies $\A$. The problem whether player~$\Box$ has a winning
strategy in a vertex $v \in V$ is $\coNP$-complete~\cite{EJ88},
and the problem can be solved in 
$\calO(m! \cdot |V|^{m+1})$ time~\cite{PP06}.

For the rest of this section, we fix a consumption game 
$\C = (S,E,(S_{\Box},S_{\Diamond}),L)$ of dimension~$d$, and we use
$\ell$ to denote the maximal finite $|\delta(i)|$ such that 
$1 \leq i \leq d$ and $\delta$ is a label of some transition.
A proof of the next lemma is given in Appendix~\ref{app-Street}.

\begin{lemma}
\label{lem-con-Street}
Let $\calS_{\C} = (S,E,(S_{\Box},S_{\Diamond}),\A)$ be a Streett game where
$\A = \{(G_1,R_1),\ldots,(G_d,R_d)\}$, 
$G_i = \{(s,t) \in E \mid L(s,t)(i) < 0\}$, and
$R_i = $ \mbox{$\{(s,t) \in E \mid L(s,t)(i) = \omega\}$} for every 
$1 \leq i \leq d$. 
Then for every $s \in S$ the following assertions hold:
\begin{enumerate}
 \item If $\safe(s) \neq \emptyset$, then player~$\Box$ has a winning
      strategy in $s$ in the Streett game $\calS_{\C}$.
 \item If player~$\Box$ has a winning strategy in $s$ in the Streett game 
      $\calS_{\C}$, then 
      \mbox{$(d! \cdot |S|\cdot \ell +1 ,\ldots,d! \cdot |S|\cdot \ell +1) 
\in \safe(s) \cap \cover(s)$}.
 \end{enumerate}
\end{lemma}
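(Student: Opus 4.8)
The plan is to prove the two implications separately. The first is a soundness argument that transfers strategies between the two games; the second is where the quantitative bound $d!\cdot|S|\cdot\ell+1$ is extracted, and it is the harder direction.

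For assertion~(1) I would argue by simulation. Take $\alpha\in\safe(s)$ witnessed by a safe strategy $\sigma$ in $(s,\alpha)$ of $G_\C$, and build a $\Box$-strategy $\sigma'$ in $\calS_\C$ as follows: $\sigma'$ maintains the configuration of $G_\C$ reached along the current finite path (starting from $(s,\alpha)$), answers each $\Box$-move by the state prescribed by $\sigma$ at the corresponding $G_\C$-history (with reload amounts as chosen by $\sigma$), and updates the tracked load on every edge according to its label. Since $\Diamond$-states carry no $\omega$, every $\sigma'$-consistent play of $\calS_\C$ is the projection of a $\sigma$-consistent play of $G_\C$, which by safety never reaches $F$ and hence keeps every component positive forever. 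If such a play violated $\A$, some pair $i$ would satisfy $\inf(w)\cap G_i\neq\emptyset$ and $\inf(w)\cap R_i=\emptyset$; but then, past the last $\omega$-reload of resource $i$, component $i$ would be decreased by at least $1$ infinitely often and never increased, forcing it below $0$, a contradiction. Hence $\sigma'$ is winning. (Equivalently, one may invoke determinacy of Streett games: if $\Box$ did not win at $s$, lifting a $\Diamond$-winning strategy to $G_\C$ would exhaust some resource against every $\Box$-strategy and every initial load, giving $\safe(s)=\emptyset$.)

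For assertion~(2), fix a winning $\Box$-strategy in $\calS_\C$ at $s$. I would first use the standard fact that in a Streett game with $d$ pairs the winning player has a finite-memory winning strategy using a memory $\mathbb{M}$ of size at most $d!$ (the index/latest-appearance-record memory). Taking the synchronous product of the game graph with $\mathbb{M}$, restricting to the moves chosen by this strategy, and keeping the part reachable from the initial vertex yields a finite graph $\widehat{G}$ on at most $|S|\cdot d!$ vertices in which $\Box$-moves are forced, $\Diamond$ still chooses freely, and every infinite path satisfies $\A$. The key structural observation is then: for each $i$, every cycle of $\widehat{G}$ that uses a $G_i$-edge also uses an $R_i$-edge. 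Indeed, looping such a cycle forever is a legal play consistent with the strategy (reachable from the start, $\Box$-moves forced, $\Diamond$ following the cycle), so it must satisfy $\A$; a cycle with a $G_i$-edge but no $R_i$-edge would make $\inf(w)\cap G_i\neq\emptyset$ while $\inf(w)\cap R_i=\emptyset$, violating $\A$.

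From this I would bound the consumption of each resource between reloads. Deleting all $R_i$-edges from $\widehat{G}$ gives a graph $H_i$ with no cycle through a $G_i$-edge, so every $G_i$-edge is an inter-SCC edge of $H_i$. Along any $R_i$-free path of $\widehat{G}$ the visited SCCs of $H_i$ form a strictly increasing chain in the topological order, so the path crosses at most $|S|\cdot d!-1$ inter-SCC edges and hence traverses at most $|S|\cdot d!-1$ many $G_i$-edges; as each decreases component $i$ by at most $\ell$, the total decrease of resource $i$ over any $R_i$-free segment is at most $(|S|\cdot d!-1)\cdot\ell$. Now define a $\Box$-strategy in $G_\C$ that follows the product strategy (its moves are independent of the numeric loads) and, whenever it takes an $\omega$-edge on component $i$, reloads that component to exactly $d!\cdot|S|\cdot\ell+1$. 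Starting from $\alpha=(d!\cdot|S|\cdot\ell+1,\dots,d!\cdot|S|\cdot\ell+1)$, each component stays in the interval $[\ell+1,\ d!\cdot|S|\cdot\ell+1]$: it never exceeds its cap, so every visited $\beta\leq\alpha$, giving the $\cover$ property; and because each $R_i$-free segment starts at the cap and loses at most $(|S|\cdot d!-1)\ell$, it never drops to $0$, so $F$ is avoided and the strategy is safe. Hence $\alpha\in\safe(s)\cap\cover(s)$. The main obstacle is exactly this quantitative step: turning the qualitative guarantee ``every $G_i$-cycle reloads $i$'' into a uniform numeric bound via the SCC/acyclicity argument, with the factor $d!$ entering solely through the memory size of the Streett winning strategy, so that memory bound is the external ingredient the proof relies on.
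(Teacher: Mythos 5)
Your proposal is correct, and its core coincides with the paper's proof where it matters most. For assertion~(2) you follow essentially the same route as the paper: invoke the $d!$ memory bound for Streett winning strategies (the latest-appearance-record result of Zielonka/Dziembowski--Jurdzi\'{n}ski--Walukiewicz), pass to the synchronous product of size $d!\cdot|S|$, observe that every reachable cycle containing a $G_i$-edge must contain an $R_i$-edge (else $\Diamond$ loops on it and defeats the strategy), convert this into the bound that any $R_i$-free segment traverses at most $d!\cdot|S|-1$ edges of $G_i$, and then run the product strategy in $G_{\C}$ reloading each resource to the cap $d!\cdot|S|\cdot\ell+1$, which simultaneously yields membership in $\safe(s)$ and $\cover(s)$. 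The only difference is cosmetic: you justify the quantitative step by an explicit SCC/topological-order argument on the graph $H_i$ obtained by deleting $R_i$-edges, whereas the paper states the ``within every $d!\cdot|S|$ visits of $G_i$ there is a visit to $R_i$'' claim directly from the cycle condition (implicitly a pigeonhole argument); your version is a clean, fully spelled-out way to get the same bound. Where you genuinely diverge is assertion~(1): the paper proves the contrapositive, using determinacy of Streett/Rabin games and the existence of \emph{memoryless} winning strategies for the Rabin player to fix a memoryless $\Diamond$-strategy $\pi$ and show that every SCC of the resulting graph reachable from $s$ contains a $G_i$-edge but no $R_i$-edge for some $i$, so that every play against $\pi$ exhausts a resource regardless of the initial credit. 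Your primary argument instead transfers the safe strategy directly into $\calS_{\C}$ and derives a contradiction from a violated Streett pair (a resource decremented infinitely often but reloaded only finitely often must drop below zero); this is more elementary, as it needs neither determinacy nor memoryless Rabin strategies, while the paper's contrapositive yields the slightly stronger by-product of a single memoryless $\Diamond$-strategy witnessing $\safe(s)=\emptyset$ uniformly over all initial loads. Your parenthetical alternative is exactly the paper's argument, so both routes are available in your write-up.
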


\noindent
An immediate consequence of Lemma~\ref{lem-con-Street} is that 
$\safe(s) = \emptyset$ iff $\cover(s) = \emptyset$.
Our next lemma shows that the existence of a wining strategy in Streett
games is polynomially reducible to the problem whether 
$\safe(s) = \emptyset$ in consumption games.

\begin{lemma}
\label{lem-Street-con}
Let $\calS = (V,\gtran,(V_{\Box},V_\Diamond),\A)$ be a Streett game where
$\A = \{(G_1,R_1),\ldots,(G_m,R_m)\}$. Let 
$\C_{\calS} = (V,\gtran,(V_{\Box},V_\Diamond),L)$  be a consumption game
of dimension~$m$ where $L(u,v)(i)$ is either $-1$, $\omega$, or $0$,
depending on whether $(u,v) \in G_i$, $(u,v) \in R_i$, or 
$(u,v) \not\in G_i \cup R_i$, respectively. Then for every $v \in V$
we have that player~$\Box$ has a winning strategy in $v$ (in $\calS$)
iff $\safe(v) \neq \emptyset$ (in $\C_{\calS}$). 
\end{lemma}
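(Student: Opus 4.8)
The plan is to observe that the map $\calS \mapsto \C_\calS$ of this lemma is exactly the inverse of the map $\C \mapsto \calS_\C$ from Lemma~\ref{lem-con-Street}, and then to read off both implications from that lemma applied to the consumption game $\C := \C_\calS$. The whole argument thus reduces to verifying a single identity: $\calS_{\C_\calS} = \calS$.

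To establish this identity I would simply unfold the two constructions. The underlying game $(V,\gtran,(V_\Box,V_\Diamond))$ of $\calS_{\C_\calS}$ is the one of $\C_\calS$, which is the one of $\calS$; and $\C_\calS$ has dimension $m$, the number of Streett pairs. For the winning condition, Lemma~\ref{lem-con-Street} sets the $i$-th green set of $\calS_{\C_\calS}$ to $\{(u,v) \mid L(u,v)(i) < 0\}$, which by the definition of $L$ is exactly the set of edges with $L(u,v)(i) = -1$, i.e.\ exactly $G_i$; and it sets the $i$-th red set to $\{(u,v)\mid L(u,v)(i)=\omega\}$, i.e.\ exactly $R_i$. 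Hence the two Streett conditions agree pair by pair and $\calS_{\C_\calS}=\calS$. Two well-formedness points are worth noting. First, the definition of $L$ is unambiguous only when the alternatives $(u,v)\in G_i$ and $(u,v)\in R_i$ are mutually exclusive; this is w.l.o.g., since replacing each $G_i$ by $G_i\smallsetminus R_i$ leaves the set of paths satisfying $\A$ unchanged --- an edge in $G_i\cap R_i$ that lies in $\inf(w)$ puts $\inf(w)$ into both $G_i$ and $R_i$ simultaneously. Second, for $\C_\calS$ to obey Definition~\ref{def-cgame} no $\omega$ may label a $\Diamond$-edge; one arranges this, if necessary, by first subdividing every edge with a fresh $\Box$-vertex carrying the pair membership, which changes neither the winner nor the dimension.

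With $\calS_{\C_\calS}=\calS$ in hand, both directions are immediate from Lemma~\ref{lem-con-Street} instantiated at $\C := \C_\calS$, whose labels lie in $\{-1,0,\omega\}$ (so $\ell = 1$) and whose state set is $V$. If $\safe(v)\neq\emptyset$ in $\C_\calS$, then part~1 yields a winning strategy for player~$\Box$ at $v$ in $\calS_{\C_\calS}=\calS$. Conversely, if player~$\Box$ wins at $v$ in $\calS=\calS_{\C_\calS}$, then part~2 produces the concrete vector $(m!\cdot|V|+1,\ldots,m!\cdot|V|+1)\in\safe(v)\cap\cover(v)$, whence $\safe(v)\neq\emptyset$. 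I expect no genuine mathematical obstacle here: all the content sits in Lemma~\ref{lem-con-Street}, and the only care required is the bookkeeping of the two well-formedness technicalities together with the verification that the two constructions are mutually inverse.
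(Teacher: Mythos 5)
Your proof is correct and is essentially the paper's own argument: the paper likewise observes that applying the construction of Lemma~\ref{lem-con-Street} to $\C_{\calS}$ gives back the Streett game $\calS$ itself (i.e.\ $\calS_{\C_{\calS}}=\calS$), and then invokes both parts of Lemma~\ref{lem-con-Street}. The only difference is that you additionally patch two well-formedness issues the paper silently glosses over --- possible overlap of $G_i$ and $R_i$, and $\omega$-labels on edges leaving $\Diamond$-vertices, which Definition~\ref{def-cgame} forbids --- and your fixes (replacing $G_i$ by $G_i\smallsetminus R_i$, and subdividing edges with fresh $\Box$-vertices) are sound and indeed needed for the reduction to apply to arbitrary Streett games.
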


\noindent
A direct consequence of Lemma~\ref{lem-con-Street} and 
Lemma~\ref{lem-Street-con} is the following:

\begin{theorem}
\label{thm-safe-empty}
The emptiness problems for $\safe(s)$ and $\cover(s)$ are $\coNP$-complete
and solvable in $\calO(d! \cdot |S|^{d+1})$ time.
\end{theorem}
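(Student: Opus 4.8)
The plan is to show that both emptiness problems are polynomial-time inter-reducible with the problem of deciding the winner in a Streett game, and then simply invoke the $\coNP$-completeness and the $\calO(m!\cdot|V|^{m+1})$ bound recorded in the preliminaries. The first step is to combine the two parts of Lemma~\ref{lem-con-Street} into a single equivalence: for every state $s$, $\safe(s)\neq\emptyset$ iff player~$\Box$ has a winning strategy at $s$ in $\calS_{\C}$. Part~1 is exactly the forward implication, and part~2 gives the converse because a win exhibits the explicit vector $(d!\cdot|S|\cdot\ell+1,\ldots,d!\cdot|S|\cdot\ell+1)$ lying in $\safe(s)$. Since the same vector is placed in $\cover(s)$ by part~2, and $\cover(s)\subseteq\safe(s)$ always holds, the noted consequence $\safe(s)=\emptyset \iff \cover(s)=\emptyset$ lets us transfer every statement below from $\safe$ to $\cover$ verbatim; in particular $\cover(s)\neq\emptyset$ iff $\Box$ wins $\calS_{\C}$ as well.

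For membership in $\coNP$, I would note that $\calS_{\C}$ shares the vertex set $S$ and edge set $E$ of $\C$ and merely relabels the edges with $d$ Streett pairs, so it is constructed in polynomial time. By the equivalence above, deciding (non)emptiness of $\safe(s)$ (equivalently of $\cover(s)$) is precisely the Streett winning problem for $\Box$ on $\calS_{\C}$, which is in $\coNP$. For $\coNP$-hardness I would use Lemma~\ref{lem-Street-con}: from an arbitrary Streett game with $m$ pairs it builds, in polynomial time, a consumption game $\C_{\calS}$ of dimension $m$ in which $\Box$ wins at $v$ iff $\safe(v)\neq\emptyset$. As deciding the Streett winner is $\coNP$-hard~\cite{EJ88}, so is the emptiness problem for $\safe$, and the $\cover$ case again follows from the $\safe$/$\cover$ equivalence. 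Together these give $\coNP$-completeness.

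For the running time, I would construct $\calS_{\C}$ in polynomial time and run the cited Streett-solving algorithm~\cite{PP06}, which costs $\calO(m!\cdot|V|^{m+1})$; instantiating $m=d$ and $|V|=|S|$ yields $\calO(d!\cdot|S|^{d+1})$, and this single computation decides emptiness of both $\safe(s)$ and $\cover(s)$.

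Since all the genuine content sits in Lemmas~\ref{lem-con-Street} and~\ref{lem-Street-con}, what remains is essentially bookkeeping, and I do not expect a real obstacle. The two points that need care are the parameter correspondence of the reductions — verifying that the number of Streett pairs matches the dimension ($m\leftrightarrow d$) and the vertex count matches the state count ($|V|\leftrightarrow|S|$), so that the Streett bounds transfer without loss — and the polarity of the decision problem: the quantity that is literally the $\coNP$-complete Streett winning problem is non-emptiness (existence of a finite initial credit), and it is this problem, shared by $\safe$ and $\cover$ through the equivalence, whose complexity we report.
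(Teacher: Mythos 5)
Your proposal is correct and takes essentially the same route as the paper: the paper likewise presents Theorem~\ref{thm-safe-empty} as a direct consequence of Lemma~\ref{lem-con-Street} (yielding the equivalence of non-emptiness of $\safe(s)$, and via the noted $\safe$/$\cover$ equivalence also of $\cover(s)$, with player~$\Box$ winning the Streett game $\calS_{\C}$) and Lemma~\ref{lem-Street-con} (yielding $\coNP$-hardness), combined with the cited $\coNP$-completeness result of~\cite{EJ88} and the $\calO(m!\cdot|V|^{m+1})$ algorithm of~\cite{PP06} instantiated with $m=d$ and $|V|=|S|$. Your closing remarks on the parameter correspondence and on the polarity of the decision problem (the $\coNP$-complete question being existence of a finite initial credit) agree with the paper's conventions.
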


\noindent
Also observe that if managed to prove that the emptiness problem for 
$\safe(s)$ or $\cover(s)$ is fixed-parameter tractable in~$d$ for consumption 
games where $\ell$ is equal to one (i.e.,
if we proved that the problem is solvable in time 
$F(d) \cdot n^{\calO(1)}$ where $n$ is the size of the game and $F$
a computable function), then
due to Lemma~\ref{lem-Street-con} we would immediately obtain that
the problem whether player~$\Box$ has a winning strategy in a given
Streett game is also fixed-parameter tractable. That is, we would
obtain a solution to one of the long-standing open problems of
algorithmic study of graph games.

Now we show how to compute the set of minimal elements of $\safe(s)$.
A key observation is the following lemma whose proof is
non-trivial.

\begin{lemma}
\label{lem-safe-min-bounded}
  For every $s \in S$ and every minimal $\alpha \in \safe(s)$ we have
  that $\alpha(i) \leq d \cdot \ell \cdot |S|$ for every $1 \leq i \leq d$.
\end{lemma}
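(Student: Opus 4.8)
The plan is to prove the contrapositive: if $\alpha \in \safe(s)$ satisfies $\alpha(i_0) > d\cdot\ell\cdot|S|$ for some coordinate $i_0$, then $\alpha$ is \emph{not} minimal, because the vector $\alpha'$ that agrees with $\alpha$ except for $\alpha'(i_0)=d\cdot\ell\cdot|S|$ still lies in $\safe(s)$. Since this applies to every coordinate, every minimal safe vector is bounded by $d\cdot\ell\cdot|S|$ in each component, which is exactly the statement. So the whole task reduces to the following \emph{capping claim}: from $(s,\alpha')$, where $\alpha'$ coincides with $\alpha$ off coordinate $i_0$ and $\alpha'(i_0)=d\cdot\ell\cdot|S|$, player~$\Box$ still has a safe strategy.

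The observation driving the construction is that, because an $\omega$-transition lets player~$\Box$ reload a resource to an arbitrarily large value, the \emph{initial} stock of resource $i_0$ only needs to survive until the \emph{first} moment $i_0$ is reloaded; thereafter player~$\Box$ can keep $i_0$ arbitrarily high. Hence it suffices to exhibit a safe strategy along which, on every play, an $i_0$-reload edge is reached within $d\cdot|S|$ transitions (so at most $(d\cdot|S|-1)\cdot\ell < d\cdot\ell\cdot|S|$ units of resource $i_0$ are consumed beforehand), while no resource $j\neq i_0$ ever consumes more than the budget $\alpha(j)$ it already had under the original safe strategy. The bound $d\cdot|S|$ comes from a \emph{phase} decomposition of the journey towards an $i_0$-reload: in each phase player~$\Box$ heads, by an attractor (reachability-game) strategy of length at most $|S|$, either straight to an $i_0$-reload or, if some resource $j\neq i_0$ has become critical, to the nearest $j$-reload, where he reloads $j$ to a value large enough to survive all remaining phases. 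Since each of the $d-1$ resources $j\neq i_0$ must be rescued in this way at most once before $i_0$ is reloaded, there are at most $d-1$ rescue phases plus one final phase, i.e.\ at most $d$ phases of length $\le |S|$ each.

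Concretely I would realize this by \emph{pruning loops} from the tree of plays of the original safe strategy $\sigma$: maintain a ``shadow'' position in $\sigma$'s play tree so that the actual configuration always dominates the shadow configuration componentwise, and copy $\sigma$'s moves; whenever the actual play completes a cycle that reloads no resource not already held at a safe-high level, cut it and reset the shadow, which can only raise every resource and so preserves safety. A ``reload potential'' counting the resources reloaded high since the last $i_0$-reload increases at most $d-1$ times and never decreases, so between two of its increases a state must repeat within $|S|$ steps, exposing precisely the cuttable loops and yielding the $d\cdot|S|$ bound. The main obstacle — and the reason the statement is flagged as non-trivial — is to turn this pruning into a genuine strategy that works simultaneously against \emph{all} counter-strategies of player~$\Diamond$ (the cuts on different branches must be compatible at player~$\Diamond$'s choice points), and to keep the dependence on $d$ \emph{linear} rather than letting it degenerate into the exponential $2^{d}$ (or the $d!$ of Lemma~\ref{lem-con-Street}) that naive bookkeeping of the \emph{set} of reloaded resources would produce. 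The dominance invariant, combined with reloading each rescued resource high enough to last the remaining $\le d\cdot|S|$ steps, is what guarantees at most one rescue per resource and leaves the budgets $\alpha(j)$ for $j\neq i_0$ untouched, completing the capping claim.
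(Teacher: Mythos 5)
Your reduction to the capping claim is sound --- it is equivalent to the finite reload property (Corollary~\ref{cor-frp}), and capping one coordinate at a time does recover the lemma --- and your quantitative intuition (at most $d$ stages of at most $|S|$ steps, each step consuming at most $\ell$) matches the true accounting. The genuine gap is the one you name yourself and then leave unresolved: every mechanism you describe (loop-cutting, the shadow position, the reload potential) acts on a \emph{single play} of the original strategy $\sigma$, and you never construct the actual strategy for player~$\Box$ that realizes these cuts against \emph{all} counter-strategies of player~$\Diamond$ at once. Flagging the cross-branch compatibility of cuts as ``the main obstacle'' does not discharge it; that obstacle \emph{is} the lemma. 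The paper avoids pruning altogether: its new strategy \emph{mimics} $\sigma$ (started with one more unit in the capped coordinate) until one of two history-measurable stopping conditions fires --- either the play reaches a configuration $(t,\gamma)$ already known to be safe and dominated by the current abstract bound (condition C1, guaranteed by a pigeonhole on $|S|$ states), or player~$\Box$ can take a reload edge for a tracked resource into a safe configuration (condition C2) --- and at that moment it \emph{switches to a safe strategy whose existence is already known} (for C1 because $\sigma$ stays above a large constant on the untracked components; for C2 by an induction hypothesis). Because the switch is triggered by the history alone and the switched-to strategy is safe outright, no compatibility between cuts on different branches ever has to be argued. Likewise, the $2^d$/$d!$ bookkeeping you worry about is not avoided by a dominance invariant but by the paper's induction on the \emph{precision} of abstract load vectors, which yields the bound $i\cdot\ell\cdot|S|$ at precision $i$: after each tracked reload the strategy falls back on the induction hypothesis, not on a record of which resources have been reloaded.

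Two further points would break the sketch even per play. First, a safe play need never reload $i_0$ at all (e.g.\ if $i_0$ is consumed only finitely often), so the invariant ``an $i_0$-reload edge is reached within $d\cdot|S|$ transitions'' is simply false in general; it must be replaced by a bound on the total $i_0$-consumption before the first reload, and cycles that consume nothing have to be exempted from cutting --- otherwise cutting every cuttable cycle of an infinite play that never reloads $i_0$ would ``shorten'' it to a finite object, which is absurd. Second, ``attractor (reachability-game) strategy'' is the wrong tool: an attractor toward the set of $i_0$-reload edges ignores the consumption of the other resources, which is exactly what cannot be ignored here; the paper's conditions C1/C2, together with the switch to known-safe strategies, are the correct substitute.
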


\noindent
Observe that Lemma~\ref{lem-safe-min-bounded} does \emph{not} follow
from Lemma~\ref{lem-con-Street}~(2.). Apart from Lemma \ref{lem-safe-min-bounded} providing better bounds, 
Lemma~\ref{lem-con-Street}~(2.) only says that if \emph{all} 
resources are loaded enough, then there
is a safe strategy. However, we aim at proving a substantially stronger
result saying that \emph{no} resource needs to be reloaded to more
than $d \cdot \ell \cdot |S|$ \emph{regardless} how large is the 
current load of other resources. 

Intuitively, Lemma~\ref{lem-safe-min-bounded} is obtained by 
a somewhat tricky inductive argument where we first consider 
all resources as being ``sufficiently large'' and then bound the
components one by one. Since a similar technique is also used to 
compute the minimal elements of $\cover(s)$, we briefly introduce
the main underlying notions and ideas. 

An \emph{abstract load vector} $\mu$ is an element of $(\Zset_{>0}^\omega)^d$. 
The \emph{precision} of $\mu$ is the number of components 
different from~$\omega$. The standard componentwise ordering
is extended also to abstract load vectors by stipulating that $c < \omega$
for every $c \in \Zset$. Given an abstract load vector~$\mu$
and a vector $\alpha \in (\Zset_{>0})^d$, we say that $\alpha$ 
\emph{matches $\mu$} if $\alpha(j) = \mu(j)$ for all 
$1 \leq j \leq d$ such that $\mu(j) \neq \omega$. Finally, we say that 
$\mu$ is \emph{compatible} with $\safe(s)$ (or $\cover(s)$) if 
there is some $\alpha \in \safe(s)$ (or $\alpha \in \cover(s)$) 
that matches~$\mu$. 

The proof of  Lemma~\ref{lem-safe-min-bounded} is obtained by showing
that for every \emph{minimal} abstract load vector $\mu$ with precision~$i$ 
compatible with  $\safe(s)$ we have that 
$\mu(j) \leq i \cdot \ell \cdot |S|$ for every $1 \leq j \leq d$ such that
$\mu(j) \neq \omega$.  Since the minimal elements of $\safe(s)$ are 
exactly the minimal abstract vectors of precision~$d$ compatible 
with $\safe(s)$, we obtain the desired result. The claim is proven by
induction on~$i$. In the induction step, we pick a minimal abstract
vector $\mu$ with precision~$i$ compatible with~$s$, and 
choose a component~$j$ such that $\mu(j) = \omega$. Then we show
that if we replace $\mu(j)$ with some $k$ whose value is bounded 
by $(i+1) \cdot \ell \cdot |S|$, we yield
a minimal compatible abstract vector with precision~$i+1$.
The proof of this claim is the very core of the whole argument, and it
involves several
subtle observations about the structure of minimal abstract load 
vectors. The details are given in Appendix~\ref{app-abstract-load}.

An important consequence of
Lemma~\ref{lem-safe-min-bounded} is the following:

\begin{corollary}[Finite reload property]
\label{cor-frp}
  If $\alpha \in \safe(s)$ and $\beta(i) = 
  \min\{\alpha(i), d \cdot \ell \cdot |S|\}$ for every $1 \leq i \leq d$,
  then $\beta \in \safe(s)$.
\end{corollary}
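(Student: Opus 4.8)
The plan is to derive the Finite Reload Property as a short consequence of Lemma~\ref{lem-safe-min-bounded}, using only the fact (established in Section~\ref{sec-prelim}) that $\safe(s)$ is upwards closed and hence determined by its finitely many minimal elements. Write $K = d \cdot \ell \cdot |S|$ for the bound, so that by definition $\beta(i) = \min\{\alpha(i), K\}$ for every $1 \leq i \leq d$. The whole argument reduces to exhibiting a safe vector that lies componentwise below $\beta$, after which upward closure does the rest.

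First I would use the hypothesis $\alpha \in \safe(s)$ together with upward closure: since $\safe(s)$ has only finitely many minimal elements and $\alpha$ belongs to the set, there is a \emph{minimal} element $\gamma \in \safe(s)$ with $\gamma \leq \alpha$. Next I would invoke Lemma~\ref{lem-safe-min-bounded} applied to this minimal $\gamma$, which yields $\gamma(i) \leq K$ for every $1 \leq i \leq d$. This is precisely the place where the nontrivial work of the paper is being imported; the corollary itself carries no new difficulty beyond correctly combining these two facts.

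The key step is then a componentwise comparison showing $\gamma \leq \beta$. I would argue by cases on each coordinate $i$: if $\alpha(i) \leq K$, then $\beta(i) = \alpha(i) \geq \gamma(i)$ since $\gamma \leq \alpha$; and if $\alpha(i) > K$, then $\beta(i) = K \geq \gamma(i)$ by the bound just obtained. In either case $\gamma(i) \leq \beta(i)$, so $\gamma \leq \beta$. Finally, since $\gamma \in \safe(s)$ and $\safe(s)$ is upwards closed, I would conclude $\beta \in \safe(s)$, completing the proof.

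I do not expect any genuine obstacle here, as the corollary is essentially a repackaging of Lemma~\ref{lem-safe-min-bounded}; the only point requiring care is the existence of a minimal element of $\safe(s)$ below $\alpha$, which is guaranteed by upward closure and finiteness of the minimal generators, and the correct handling of both cases in the comparison so that the capped value $K$ is never smaller than the corresponding coordinate of the witness $\gamma$.
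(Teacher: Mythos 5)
Your proof is correct and matches the paper's intended derivation: the corollary is stated as an immediate consequence of Lemma~\ref{lem-safe-min-bounded}, and the argument is exactly the one you give --- pick a minimal $\gamma \in \safe(s)$ below $\alpha$, bound it by $d \cdot \ell \cdot |S|$ via the lemma, observe $\gamma \leq \beta$ coordinatewise, and conclude by upward closure. The existence of a minimal element of $\safe(s)$ below $\alpha$ is indeed unproblematic (the elements of $\safe(s)$ dominated by $\alpha$ form a finite nonempty set of positive integer vectors), so there is no gap.
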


\noindent
Due to Corollary~\ref{cor-frp}, for every minimal  $\alpha \in \safe(s)$
there is a safe strategy which never reloads any resource to more than 
$d \cdot \ell \cdot |S|$. Thus, we can significantly improve the bound of Lemma~\ref{lem-con-Street}~(2.).

\begin{corollary}
\label{cor-global-bound}
If $\safe(s)\neq \emptyset$, then \mbox{$(d \cdot \ell \cdot |S|, \ldots, d \cdot \ell \cdot |S|) \in  \safe(s) \cap \cover(s)$}.
\end{corollary}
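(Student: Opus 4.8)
If $\safe(s)\neq\emptyset$, then the vector with all components equal to $d\cdot\ell\cdot|S|$ lies in $\safe(s)\cap\cover(s)$.

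Let me sketch a proof.

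The plan is to chain together the results already established, using Lemma on Streett games (con-Street) for one direction and Corollary (frp) to sharpen the bound. Let me write $N = d\cdot\ell\cdot|S|$ for brevity.

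First, I would produce SOME safe-and-covering vector using Lemma~\ref{lem-con-Street}. Since $\safe(s)\neq\emptyset$, part~(1.) of that lemma gives a winning strategy for player~$\Box$ in the Streett game $\calS_\C$ at $s$, and then part~(2.) delivers the uniform vector $M=(d!\cdot|S|\cdot\ell+1,\ldots,d!\cdot|S|\cdot\ell+1)\in\safe(s)\cap\cover(s)$. So membership in $\safe(s)$ is immediate for a large enough uniform vector.

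Next I would push the $\safe(s)$ membership down to the smaller bound $N$. Applying Corollary~\ref{cor-frp} to $\alpha=M\in\safe(s)$, I set $\beta(i)=\min\{M(i),N\}$. Since $N=d\cdot\ell\cdot|S|\leq d!\cdot|S|\cdot\ell < M(i)$ for every $i$ (using $d!\geq d$), the minimum equals $N$ in every coordinate, so $\beta=(N,\ldots,N)$, and Corollary~\ref{cor-frp} yields $(N,\ldots,N)\in\safe(s)$.

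The remaining, and slightly more delicate, part is to argue that this same vector lies in $\cover(s)$, i.e. that there is a safe strategy along which no resource is ever reloaded above $N$. This is exactly the content highlighted in the discussion following Corollary~\ref{cor-frp}: by Lemma~\ref{lem-safe-min-bounded}, for every minimal $\alpha\in\safe(s)$ all components are bounded by $N$, so the safe strategy witnessing safety from such a minimal $\alpha$ need never reload any resource above $N$ — reloading to a higher value is never forced, since a value of $N$ already suffices in every coordinate. Taking the safe strategy $\sigma$ witnessing $(N,\ldots,N)\in\safe(s)$ and capping all its reloads at $N$ keeps it safe (capping a reload can only make a resource smaller, but it stays at the safe threshold $N$) and guarantees every visited configuration $(t,\beta)$ satisfies $\beta\leq(N,\ldots,N)$; hence $(N,\ldots,N)\in\cover(s)$. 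The main obstacle is precisely this capping argument: one must check that truncating reloads to $N$ does not destroy safety, which relies on the finite-reload property of Corollary~\ref{cor-frp} applied along the play (every reachable configuration from which $\Box$ can stay safe can stay safe with loads capped at $N$).

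Combining the two memberships gives $(N,\ldots,N)=(d\cdot\ell\cdot|S|,\ldots,d\cdot\ell\cdot|S|)\in\safe(s)\cap\cover(s)$, as required. $\qquad\Box$
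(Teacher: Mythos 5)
Your proof is correct and takes essentially the paper's route: the substance of Corollary~\ref{cor-global-bound} is exactly the capping argument you give, namely that Corollary~\ref{cor-frp}, applied at every reload point along the play, shows no resource ever needs to be reloaded above $d \cdot \ell \cdot |S|$, which yields membership in $\cover(s)$ as well as in $\safe(s)$. Your detour through Lemma~\ref{lem-con-Street} to produce a first safe vector is harmless but unnecessary: since $\safe(s)$ is nonempty and upwards closed, Lemma~\ref{lem-safe-min-bounded} (or one application of Corollary~\ref{cor-frp} to any element of $\safe(s)$) already gives $(d \cdot \ell \cdot |S|, \ldots, d \cdot \ell \cdot |S|) \in \safe(s)$.
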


\noindent
Another consequence of Corollary~\ref{cor-frp} is that
one can reduce the problem of computing
the minimal elements of $\safe(s)$ to the problem of determining a
winning set in a finite-state 2-player safety game with at most
$|S| \cdot d^d \cdot \ell^{d}  \cdot |S|^d + 1$ vertices, which is obtained
from $\C$ by storing the vector of current resource loads explicitly
in the states. Whenever we need to reload some resource, it can be
safely reloaded to $d \cdot \ell  \cdot |S|$, and we simulate this
reload be the corresponding transition. Since the winning set in a 
safety game with $n$ states and $m$ edges can be computed in time linear 
in~$n+m$ \cite{Immerman81,Beeri80}, we obtain the following:

\begin{corollary}
\label{thm-safe-compute}
  The sets of all minimal elements of all $\safe(s)$ are computable
  in time \mbox{$(d \cdot \ell  \cdot |S|)^{\calO(d)}$}.
\end{corollary}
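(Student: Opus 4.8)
The plan is to turn the informal reduction sketched just above the statement into an explicit finite safety game and to read the minimal vectors off its winning region. Write $B = d\cdot\ell\cdot|S|$ for the bound furnished by Lemma~\ref{lem-safe-min-bounded} and Corollary~\ref{cor-frp}. I would build a two-player safety game $\mathcal{G}$ whose vertices are the fail vertex $F$ together with all pairs $(s,\alpha)$ where $s\in S$ and $\alpha\in\{1,\dots,B\}^d$; the owner of $(s,\alpha)$ is the owner of $s$, and $F$ is an absorbing sink. For each transition $s\tran{\delta}t$ of $\C$ I install edges out of $(s,\alpha)$ that copy the dynamics of $G_{\C}$ but truncate each load at $B$: a component with $\delta(i)\neq\omega$ becomes $\alpha(i)+\delta(i)$, a component with $\delta(i)=\omega$ may be set by player~$\Box$ to any value in $\{\alpha(i),\dots,B\}$, and the edge is redirected to $F$ as soon as some non-$\omega$ component would fall to $0$ or below. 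Player~$\Box$'s objective is to avoid $F$ forever. Note that $\omega$-labelled transitions emanate only from $\Box$-states, so all reload choices are genuinely resolved by player~$\Box$.

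The heart of the argument is to show that, for each $s\in S$, the minimal elements of $\safe(s)$ coincide with the minimal elements of the set $W_s = \{\beta\in\{1,\dots,B\}^d \mid (s,\beta)\in W\}$, where $W$ is player~$\Box$'s winning region in $\mathcal{G}$. This needs only two half-statements, neither of which requires a monotonicity argument. First, if $(s,\beta)\in W$ with $\beta\le B$, then $\beta\in\safe(s)$: a $\Box$-strategy avoiding $F$ in $\mathcal{G}$ transfers verbatim to $G_{\C}$, because a capped reload to a value in $\{\alpha(i),\dots,B\}$ is a legal reload in $G_{\C}$ (its target is at least the current load), and the induced plays coincide and never reach $F$. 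Second, every minimal $\mu\in\safe(s)$ lies in $W$: by Lemma~\ref{lem-safe-min-bounded} we have $\mu\le B$, and by the remark following Corollary~\ref{cor-frp} there is a safe strategy from $(s,\mu)$ that never reloads any resource above $B$; starting from $\mu\le B$ and never exceeding $B$, every configuration it produces lies in $\{1,\dots,B\}^d$ and every move it uses is present in $\mathcal{G}$, so it witnesses $(s,\mu)\in W$. Combining the resulting inclusions---every minimal element of $\safe(s)$ lies in $W_s$, and $W_s\subseteq\safe(s)$---with minimality in $\safe(s)$ yields the identity.

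Granting the identity, the algorithm is immediate. Compute $W$ by solving the safety game $\mathcal{G}$ once; a safety game with $n$ vertices and $m$ edges is solvable in time $\calO(n+m)$~\cite{Immerman81,Beeri80}. Here $\mathcal{G}$ has at most $|S|\cdot B^d + 1$ vertices and, since each vertex has at most $|E|\cdot B^d$ outgoing edges (one per transition of $\C$ times the reload choices), at most $(|S|\cdot B^d+1)\cdot|E|\cdot B^d$ edges; as $B=d\cdot\ell\cdot|S|$ and $|E|\le|S|^2$, both counts are $(d\cdot\ell\cdot|S|)^{\calO(d)}$. For every $s$ I then extract the minimal vectors of $W_s$, which by the identity are exactly the minimal elements of $\safe(s)$; sieving the at most $B^d$ candidates per state for minimality costs time polynomial in $|S|\cdot B^d$, again $(d\cdot\ell\cdot|S|)^{\calO(d)}$.

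The genuinely hard step was already discharged by Lemma~\ref{lem-safe-min-bounded} and its Corollary~\ref{cor-frp}: those results are exactly what let us cap all loads at $B$ without losing any safe behaviour. What remains here is bookkeeping, and the only point that deserves care is the second half-statement above---that a safe strategy may be assumed never to reload beyond $B$---which is precisely the content of the finite reload property. Given it, checking that the truncated game faithfully simulates $G_{\C}$ and that the vertex and edge counts stay inside $(d\cdot\ell\cdot|S|)^{\calO(d)}$ is routine.
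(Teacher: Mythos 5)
Your proposal is correct and follows essentially the same route as the paper: by Corollary~\ref{cor-frp} all loads can be capped at $d\cdot\ell\cdot|S|$, the problem reduces to a finite safety game whose vertices store the load vector explicitly, and the linear-time safety-game algorithm of \cite{Immerman81,Beeri80} gives the stated bound. The only cosmetic difference is that you allow reloads to any value in $\{\alpha(i),\dots,B\}$ whereas the paper reloads directly to $B$; both stay within $(d\cdot\ell\cdot|S|)^{\calO(d)}$ edges, so nothing changes.
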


\noindent
The complexity bounds for the algorithmic problems~(B) and~(C) 
for~$\safe(s)$ are given in our next theorem. The proofs of
the presented lower bounds are given in Appendix~\ref{app-lb}.

\begin{theorem}
\label{thm-safe}
  Let $\alpha \in \Zset_{>0}^d$ and $s \in S$.
  \begin{itemize}
  \item The problem whether $\alpha \in \safe(s)$ 
     is $\PSPACE$-hard and solvable in time 
     \mbox{$|\alpha| \cdot (d \cdot \ell  \cdot |S|)^{\calO(d)}$}, 
     where $|\alpha|$ is the encoding size of $\alpha$.
  \item The problem whether $\alpha$ is a minimal vector of $\safe(s)$ 
     is $\PSPACE$-hard and solvable in time
     \mbox{$|\alpha| \cdot (d \cdot \ell  \cdot |S|)^{\calO(d)}$}, 
     where $|\alpha|$ is the encoding size of $\alpha$.
   \item The set of all minimal vectors of $\safe(s)$
     is computable in time
     \mbox{$(d \cdot \ell  \cdot |S|)^{\calO(d)}$}
  \end{itemize}
\end{theorem}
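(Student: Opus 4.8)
The upper bounds are fairly direct consequences of the machinery already developed. For the membership problem $\alpha \in \safe(s)$, the key is the finite reload property (Corollary~\ref{cor-frp}): setting $\beta(i) = \min\{\alpha(i),\, d \cdot \ell \cdot |S|\}$ for every $i$, we have $\beta \leq \alpha$, and by Corollary~\ref{cor-frp} together with the upward closure of $\safe(s)$ we get $\alpha \in \safe(s)$ iff $\beta \in \safe(s)$. Since every component of $\beta$ is now bounded by $d \cdot \ell \cdot |S|$, the question $\beta \in \safe(s)$ can be answered either by solving the bounded safety game underlying Corollary~\ref{thm-safe-compute}, or by computing the minimal elements of $\safe(s)$ in time $(d \cdot \ell \cdot |S|)^{\calO(d)}$ and testing whether $\beta$ dominates one of them; together with reading the binary representation of $\alpha$ this stays within the claimed bound $|\alpha| \cdot (d \cdot \ell \cdot |S|)^{\calO(d)}$. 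The third item is precisely Corollary~\ref{thm-safe-compute}.

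For the minimality question I would proceed via Lemma~\ref{lem-safe-min-bounded}: every minimal element of $\safe(s)$ has all components bounded by $d \cdot \ell \cdot |S|$, so $\alpha$ can be minimal only if it already satisfies this bound, and one answers ``no'' immediately otherwise. In that case $\alpha$ is a minimal vector of $\safe(s)$ exactly when it occurs among the minimal elements produced by Corollary~\ref{thm-safe-compute}; equivalently, one checks that $\alpha \in \safe(s)$ while each of the $d$ vectors obtained from $\alpha$ by decreasing a single component by one fails to be safe (a vector with a non-positive component being trivially unsafe). Either formulation is a number of membership tests bounded in terms of $d$ plus one invocation of Corollary~\ref{thm-safe-compute}, yielding the stated running time.

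The genuine work, and the part I expect to be the main obstacle, is the $\PSPACE$-hardness of the two membership problems; the upper bounds above only place them in single-exponential time, so a matching hardness proof must exploit the binary encoding. My plan is a reduction from a canonical $\PSPACE$-complete problem -- \emph{true quantified Boolean formulae}, or equivalently acceptance by a polynomial-space alternating Turing machine -- in which the binary-encoded decrements and the binary-encoded target vector $\alpha$ are used to build counters whose range is exponential in the input size, so that the explicit configuration graph on which the two players interact has exponential size while the descriptions of $\C$ and of $\alpha$ remain polynomial. The players $\Box$ and $\Diamond$ realize the existential and universal quantifier alternation: in a selection phase they route the play through ``true''/``false'' branches, each commitment being recorded by consuming a prescribed amount from dedicated resources, and in a verification phase the chosen assignment is checked, with an unsatisfied clause letting player~$\Diamond$ force some resource below zero (reaching $F$) and a satisfying assignment letting player~$\Box$ reload and survive. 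The delicate point -- the crux of the construction -- is to enforce, using \emph{only} non-positive updates and $\omega$-reloads, that the assignment committed in the selection phase is faithfully reused during verification: since resources can never be finitely incremented, this consistency must be encoded through carefully balanced decrement budgets, so that any attempt by player~$\Box$ to cheat on a variable is punishable by player~$\Diamond$ through resource exhaustion. Finally, I would engineer the reduction so that the produced target vector $\alpha$ is a \emph{minimal} safe vector precisely when the formula is true, so that the same construction yields hardness of the minimality variant without a separate argument.
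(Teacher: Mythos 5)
Your handling of the upper bounds is correct and is essentially the paper's own argument: membership is reduced via Corollary~\ref{cor-frp} together with upward closure to the capped vector $\beta(i)=\min\{\alpha(i),d\cdot\ell\cdot|S|\}$ and then decided on the bounded safety game; minimality is handled via the bound of Lemma~\ref{lem-safe-min-bounded} plus $d{+}1$ membership tests (or comparison against the set computed by Corollary~\ref{thm-safe-compute}); and the third item is exactly Corollary~\ref{thm-safe-compute}. No complaint there.

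The genuine gap is in the $\PSPACE$-hardness, which you yourself call ``the genuine work'' but never actually carry out: what you offer is a plan whose central difficulty --- enforcing, with only non-positive updates and $\omega$-reloads, that the assignment committed in a selection phase is faithfully reused in a verification phase --- you explicitly leave unresolved. The plan also rests on a misconception. You argue that a hardness proof ``must exploit the binary encoding,'' but the upper bound $|\alpha|\cdot(d\cdot\ell\cdot|S|)^{\calO(d)}$ is exponential only through the dimension $d$, so hardness can just as well come from unbounded dimension with weights of polynomial magnitude --- and that is what the paper does (Proposition~\ref{prop:hard-main}, Lemma~\ref{lem:np}). Its reduction from QBF uses one resource per \emph{clause}: states $s_1,\ldots,s_{n+1}$ with literal states in between, quantifier alternation realized by giving $s_i$ to player~$\Diamond$ iff $Q_i=\forall$; the transition into literal $L_i$ costs $2(n-i+1)$ on every clause-resource whose clause contains $L_i$ and $1$ elsewhere, and the transition out of $L_i$ reloads exactly those clause-resources. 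The budgets are balanced so that every resource drops by exactly $2n$ before its first possible reload; hence from the initial vector $(2n{+}1,\ldots,2n{+}1)$ player~$\Box$ survives iff he can force every clause-resource to be reloaded before $s_{n+1}$, i.e., iff $\psi$ is true, and since every resource must drop by at least $2n$ before any reload, the very same vector is also minimal, so one reduction yields both hardness claims. This construction sidesteps your consistency crux entirely: clause satisfaction is never stored and later re-checked, it is consumed and reloaded on the spot. (A further slip: TQBF corresponds to \emph{alternating polynomial time}; acceptance of a polynomial-space alternating Turing machine is $\EXPTIME$-complete, so your proposed source problem, taken literally, would overshoot what can currently be proved.)
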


\noindent
Now we provide analogous results for~$\cover(s)$. Note that deciding
the membership to~$\cover(s)$ is trivially reducible to the problem
of computing the winning region in a finite-sate game obtained from
$\C$ by constraining the vectors of current resource loads by $\alpha$.
Computing the minimal elements of~$\cover(s)$ is more problematic.
One is tempted to conclude that all components of the minimal vectors
for each~$\cover(s)$ are bounded by a ``small'' number, analogously
to Lemma~\ref{lem-safe-min-bounded}. In this case, we obtained only
the following bound, which is still polynomial for every fixed~$d$ and $\ell$, but
grows \emph{double-exponentially} in~$d$. The question whether this bound
can be lowered is left open, and seems to require a deeper insight
into the structure of covering vectors.

\begin{lemma}
  \label{lem-cover-min-bounded}
  For every $s \in S$ and every minimal $\alpha \in \cover(s)$ we have
  that $\alpha(i) \leq (d \cdot \ell \cdot |S|)^{d!}$ for every 
  $1 \leq i \leq d$.
\end{lemma}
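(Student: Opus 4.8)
The plan is to follow the abstract-load-vector strategy already used for Lemma~\ref{lem-safe-min-bounded}, but to track how the capping of reloads forces a faster, super-polynomial growth of the bound. First I would recall that, by the remark following Corollary~\ref{cor-frp}, membership $\alpha \in \cover(s)$ is equivalent to player~$\Box$ winning the finite-state safety game $H_\alpha$ obtained from $\C$ by storing the current load explicitly and capping every reload at the corresponding component of~$\alpha$ (a configuration is losing as soon as some component would drop to~$0$). Safety games admit memoryless winning strategies, so whenever $\alpha \in \cover(s)$ there is a memoryless $\sigma$ all of whose plays from $(s,\alpha)$ stay inside $S \times \prod_{i}[1,\alpha(i)]$. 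I then reuse the notion of an \emph{abstract load vector} $\mu$ and of $\mu$ being \emph{compatible} with $\cover(s)$, where now an $\omega$-component means that the corresponding resource is reloaded without any cap. The claim to prove by induction on the precision~$i$ is: every minimal abstract load vector $\mu$ of precision~$i$ that is compatible with $\cover(s)$ satisfies $\mu(j) \le B_i$ for each finite component~$j$, where $B_i := (d\cdot\ell\cdot|S|)^{i!}$. Since the minimal elements of $\cover(s)$ are exactly the minimal compatible abstract vectors of precision~$d$, the case $i=d$ yields the lemma.

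For the induction step I would start from a minimal compatible $\mu$ of precision~$i$ (so its $i$ finite components are $\le B_i$) and an $\omega$-component~$j$, and determine the least value $k$ such that replacing $\mu(j)$ by $k$ produces a compatible abstract vector of precision~$i+1$; the goal is $k \le B_{i+1}$, which together with the inductive bound on the other components gives the claim. Fix a memoryless covering strategy witnessing compatibility of~$\mu$. Because component~$j$ is reloaded only by $\omega$-edges and otherwise only decreases, between two consecutive reloads of~$j$ its value drops by at most $\ell$ times the number of steps in the intervening \emph{$j$-reload-free segment}; hence the minimal admissible cap satisfies $k \le 1 + \ell \cdot L$, where $L$ is the maximal length of a $j$-reload-free segment occurring in an optimal covering play. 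The key structural step is to bound~$L$ by the number of distinct configurations that such a segment can traverse \emph{when only the control state and the $i$ already-fixed finite components are recorded}, i.e. $L \le |S|\cdot \prod_{m} \mu(m) \le |S|\cdot B_i^{\,i}$, the product ranging over the finite components $m\neq j$ of~$\mu$. This is justified by a pumping argument: if such a configuration repeated inside a $j$-reload-free segment, the enclosed loop could be excised, which can only lower the consumption of~$j$ while leaving the control state and all capped components in the same position, so the truncated play is still a covering play.

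Combining the two inequalities gives the recurrence $B_{i+1} \le 1 + \ell\cdot|S|\cdot B_i^{\,i}$, with base case $B_0 = 1$ (giving $B_1 \le 1+\ell|S|$). Writing $B_i = (d\ell|S|)^{e_i}$ and absorbing the additive~$1$ and the factor $\ell|S|$ into the base $d\ell|S|$, the exponents obey $e_{i+1} \le 1 + i\cdot e_i$, whence $e_0=0$, $e_1=1$, $e_2=2$, $e_3=5$, $e_4=16,\ldots$, and a trivial induction shows $e_i \le i!$ for all~$i$ (indeed $1 + i\cdot i! \le (i+1)!$ since $1\le i!$). Therefore $B_d \le (d\cdot\ell\cdot|S|)^{d!}$, as required.

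The main obstacle is the pumping step that bounds~$L$. Unlike the safe case, where an $\omega$-component could be treated as ``sufficiently large and henceforth ignorable'' because reloads were unbounded, here every finite component is simultaneously capped, so excising a loop from a $j$-reload-free segment must be shown not to starve any \emph{other} resource: in particular a capped resource $m\neq j$ must still be reloaded often enough, and an $\omega$-capped resource must still stay positive. I would handle this by choosing the repeated configuration to agree on the control state and on all finite components (so every capped resource is reloaded exactly as before after the excision), and by invoking the Streett condition guaranteed by Lemma~\ref{lem-con-Street} to reload each $\omega$-component to a high enough value at its (infinitely many) reload edges, so that excising finitely many loops never drives it to zero. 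Making this loop-surgery argument precise while preserving covering for all $d$ resources at once is the technical heart of the proof.
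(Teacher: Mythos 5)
Your outer scaffolding coincides with the paper's: the same induction on the precision of abstract load vectors compatible with $\cover(s)$, the same bound $B_i=(d\cdot\ell\cdot|S|)^{i!}$, and the same arithmetic $e_{i+1}\le 1+i\cdot e_i\le (i+1)!$. The divergence is in how the induction step (bounding the least $k$ with $\mu[j/k]$ compatible) is discharged, and this is where your proposal has a genuine gap. The paper does \emph{not} pump: it constructs a new consumption game $\C_{\mu}$ of dimension $d-i$ in which the $i$ bounded resources (whose values stay in $[1,B_i]$ by the induction hypothesis and by the covering cap) are folded into the control states, so that $\C_{\mu}$ has at most $|S|\cdot B_i^{\,i}=|S|\cdot(d\cdot\ell\cdot|S|)^{i\cdot i!}$ states; then Corollary~\ref{cor-global-bound} applied to $\C_{\mu}$ immediately gives that the uniform vector with all components equal to $(d-i)\cdot\ell\cdot|S|\cdot(d\cdot\ell\cdot|S|)^{i\cdot i!}\le(d\cdot\ell\cdot|S|)^{(i+1)!}$ lies in $\safe\cap\cover$ of $\C_{\mu}$, which translates back to compatibility of $\mu[j/k]$ with $k\le B_{i+1}$. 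No loop surgery is needed anywhere.

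Your step instead rests on the pumping claim $L\le|S|\cdot\prod_{m}\mu(m)$, justified by excising a loop whenever the control state and the $i$ tracked components repeat inside a $j$-reload-free segment --- and you yourself flag this as the unproven ``technical heart.'' The obstacles are real and your proposed fixes do not meet them. First, your witness strategy is memoryless in the product game $H_\alpha$, i.e.\ it reads the \emph{full} configuration including the untracked resources; a repeat of only the partial configuration does not make that strategy behave identically at the two occurrences, so ``excising the loop'' is not a well-defined operation on the strategy, and in the two-player setting any such surgery must additionally remain consistent against all behaviors of player~$\Diamond$ at once. Second, your worry points in the wrong direction: excision cannot ``starve'' an untracked resource, it \emph{raises} its downstream values by the net drop across the removed loop; for $\safe$ this is harmless, but for $\cover$ it is exactly what is forbidden, since the raised values must still stay below the fixed finite cap $\alpha(m)$, reloads in a covering play may never exceed $\alpha(m)$ and hence need not absorb the raise, and the raises can accumulate over the infinitely many $j$-reload-free segments of an infinite play --- so ``excising finitely many loops'' is unjustified, and the appeal to Lemma~\ref{lem-con-Street} fails because under the covering constraint one cannot reload ``high enough'' at will. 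As written, the induction step therefore does not go through; replacing the pumping step by the paper's reduction to $\C_{\mu}$ plus Corollary~\ref{cor-global-bound} repairs it while keeping the rest of your argument intact.
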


\noindent 
The proof of Lemma~\ref{lem-cover-min-bounded} is given in
Appendix~\ref{app-abstract-load}. It is based
on re-using and modifying some ideas introduced in \cite{BJK}
for general eVASS games. The following theorem sums up the complexity bounds for problems~(B) and~(C) for $\cover(s)$.

\begin{samepage}
\begin{theorem}
\label{thm-cover}
  Let $\alpha \in \Zset_{>0}^d$ and $s \in S$.
  \begin{itemize}
  \item The problem whether $\alpha \in \cover(s)$ 
     is $\PSPACE$-hard and solvable in 
     \mbox{$\calO(\Lambda^2 \cdot |S|^2)$} time, where 
     $\Lambda = \Pi_{i=1}^d \alpha(i)$.
  \item The problem whether $\alpha$ is a minimal element of $\cover(s)$ 
     is $\PSPACE$-hard and solvable in 
     \mbox{$\calO(d \cdot \Lambda^2 \cdot |S|^2)$} time, where 
     $\Lambda = \Pi_{i=1}^d \alpha(i)$.
  \item The set of all minimal vectors of $\cover(s)$ is 
     computable in $(d \cdot \ell \cdot |S|)^{\calO(d\cdot d!)}$ time.
  \end{itemize}
\end{theorem}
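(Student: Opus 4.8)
The plan is to derive all three upper bounds from a single reduction to finite-state safety games, taking the $\PSPACE$-hardness lower bounds from the reduction in Appendix~\ref{app-lb}. For the membership problem I would fix $\alpha$ and build the finite safety game $\mathcal{G}_\alpha$ whose vertices are the configurations $(t,\beta)$ with $\vec{1} \le \beta \le \alpha$ together with the fail vertex $F$, and whose moves are those of $G_{\C}$ restricted to this set, with every $\omega$-reload capped at the corresponding component of $\alpha$. Since $\omega$-labels occur only at states of $S_\Box$, player~$\Box$ alone chooses reload amounts, so respecting the covering constraint $\beta \le \alpha$ costs $\Box$ nothing beyond never reloading above $\alpha$, while consumption can only lower loads; hence $\alpha \in \cover(s)$ iff $\Box$ wins $\mathcal{G}_\alpha$ from $(s,\alpha)$. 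With $\Lambda = \Pi_{i=1}^d \alpha(i)$ the game has $\Lambda\cdot|S|+1$ vertices and, after collapsing parallel edges, at most quadratically many edges; as a safety winning region is computable in time linear in vertices plus edges~\cite{Immerman81,Beeri80}, this gives the $\calO(\Lambda^2 \cdot |S|^2)$ bound.

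For the minimality problem I would exploit that $\cover(s)$ is upwards closed: $\alpha$ is minimal iff $\alpha \in \cover(s)$ and, for every coordinate $i$ with $\alpha(i) \ge 2$, the vector $\alpha - e_i$ is not in $\cover(s)$ (coordinates with $\alpha(i) = 1$ need no test, as lowering them leaves $\Zset_{>0}^d$). Indeed, any $\beta < \alpha$ in $\cover(s)$ satisfies $\beta \le \alpha - e_i$ for the coordinate $i$ in which it drops, whence upward closure gives $\alpha - e_i \in \cover(s)$. Minimality thus reduces to at most $d+1$ membership queries, each costing $\calO(\Lambda^2 \cdot |S|^2)$ (the products for the $\alpha - e_i$ are bounded by $\Lambda$), for the stated $\calO(d \cdot \Lambda^2 \cdot |S|^2)$ total.

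To compute the entire set of minimal vectors I would invoke Lemma~\ref{lem-cover-min-bounded}, by which every minimal element of $\cover(s)$ lies in the box $\{1,\dots,B\}^d$ with $B = (d \cdot \ell \cdot |S|)^{d!}$. I would enumerate all candidate vectors in this box, run the membership test on each, and retain the $\le$-minimal vectors among those that pass; correctness is immediate since no minimal element can fall outside the box. There are $B^d = (d \cdot \ell \cdot |S|)^{d \cdot d!}$ candidates, and each membership test costs $\calO(\Lambda^2 \cdot |S|^2)$ with $\Lambda \le B^d$, so the total (including the final comparisons) stays within $(d \cdot \ell \cdot |S|)^{\calO(d \cdot d!)}$.

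The main obstacle lies not in this assembly but in the ingredient it rests on, the double-exponential bounding box of Lemma~\ref{lem-cover-min-bounded}: without an a priori bound on the components of minimal covering vectors there is no finite candidate set to enumerate, and (as noted after the lemma) covering vectors lack the clean inductive structure that let Lemma~\ref{lem-safe-min-bounded} achieve the much smaller bound $d \cdot \ell \cdot |S|$. A secondary point requiring care is the correctness of the reload-capping in $\mathcal{G}_\alpha$, namely that capping reloads at $\alpha$ never deprives $\Box$ of a safe continuation compatible with the covering constraint; this is precisely where one uses that only player~$\Box$ ever performs reloads.
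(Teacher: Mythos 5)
Your proposal is correct and takes essentially the same approach as the paper: membership in $\cover(s)$ is decided by computing the winning region of the finite-state safety game obtained by constraining the resource loads by $\alpha$, minimality by $d$ further membership queries on the vectors $\alpha-\vec{1}_i$, and the set of all minimal vectors by enumerating the candidate box supplied by Lemma~\ref{lem-cover-min-bounded}, with $\PSPACE$-hardness deferred to the reduction of Appendix~\ref{app-lb}. You merely make explicit some details the paper leaves implicit (the reload-capping argument, which indeed rests on $\omega$-labels occurring only at $\Box$-states, and the exemption of coordinates with $\alpha(i)=1$ from the minimality test).
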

\end{samepage}

\section{Algorithms for One-Player and Decreasing 
Consumption Games}
\label{sec-restricted}

In this section we present more efficient algorithms for the two 
subclasses of decreasing and one-player consumption games.
Observe that these special classes of games can still retain a rich modeling power.
In particular, the decreasing subclass is quite natural as systems 
that do not decrease some of the resources for a long time most probably 
stopped working completely (also recall that the game considered in
Section~\ref{sec-intro} is decreasing). One-player consumption games 
are  useful for modelling a large variety of scheduling problems, 
as it is illustrated in the following example.

Consider the following (a bit idealized) problem of supplying shops
with goods such as, e.g., bottles of drinking water.  This problem may
be described as follows: Imagine a map with $c$ cities connected by
roads, $n$ of these cities contain shops to be supplied, $k$ cities
contain warehouses with huge amounts of the goods that should be
distributed among the shops. The company distributing the goods owns
$d$ cars, each car has a bounded capacity. The goal is to distribute
the goods from warehouses to all shops in as short time as possible.
This situation can be modeled using a one-player consumption game as
follows. States would be tuples of the form $(c_1,\ldots,c_d,A)$ where
each $c_i \in \{1,\ldots,c\}$ corresponds to the city in which the
$i$-th car is currently located, $A\subseteq \{1,\ldots,n\}$ lists the
shops that have already been supplied (initially $A=\emptyset$ and the
goal is to reach $A=\{1,\ldots,n\}$).  Loads of individual cars and
the total time would be modelled by a vector of resources,
$(\ell(1),\ldots,\ell(d),t)$, where each $\ell(i)$ models the current
load of the $i$-th car and $t$ models the amount of time which elapsed
from the beginning (this resource is steadily decreased until
$A=\{1,...,n\}$). Player $\Box$ chooses where each car should go next.
Whenever the $i$-th car visits a city with a warehouse, the
corresponding resource $\ell(i)$ may be reloaded. Whenever the $i$-th
car visits a city containing a shop, player $\Box$ may choose to
supply the shop, i.e. decrease the resource $\ell(i)$ of the car by
the amount demanded by the shop.
Now the last component of a minimal safe configuration indicates how
much time is needed to supply all shops.  A cover configuration
indicates not only how much time is needed but also how large cars are
needed to supply all shops.  This model can be further extended with
an information about the fuel spent by the individual cars, etc.

As in the previous section, we fix a consumption game
$\C = (S,E,(S_{\Box},S_{\Diamond}),L)$ of dimension~$d$, and we use
$\ell$ to denote the maximal finite $|\delta(i)|$ such that 
\mbox{$1 \leq i \leq d$} and $\delta$ is a label of some transition.
We first establish the complexity of emptiness and membership 
problem, and then present an algorithm to compute the minimal safe 
configurations.

\subsection{The Emptiness and Membership Problems} 
\label{sec:restricted-bounded-prop}
We first establish the complexity of the emptiness problem for 
decreasing games by a polynomial time reduction to \emph{generalized B\"uchi games}.
A generalized B\"uchi game is a tuple 
\mbox{$\B = (V,\gtran,(V_{\Box},V_\Diamond),B)$}, where 
$(V,\gtran,(V_{\Box},V_\Diamond))$ is a 2-player game with finitely many
vertices, and $B = \{F_1,\ldots,F_m\}$, where $m \geq 1$
and $F_i \subseteq {\gtran}$ for all $1 \leq i \leq m$.
We say that infinite path $w$ \emph{satisfies} the generalized B\"uchi 
condition defined by $B$ iff $\inf(w) \cap F_i \neq \emptyset$ for every
$1 \leq i \leq m$. 
A strategy $\sigma \in \Sigma_{\B}$ is \emph{winning} in $v \in V$ 
if for every $\pi \in \Pi_{\B}$ we have that the $\play_{\sigma,\pi}(v)$ 
satisfies the generalized B\"uchi condition. 
The problem whether player $\Box$ has a winning strategy in state $s$ can be 
decided in polynomial time, with an algorithm of complexity 
$\calO(|V|\cdot |\gtran|\cdot m)$ (see~\cite{DJW97}).

We claim that the following holds:

\begin{lemma}
\label{lem:decreasing-gbuchi}
If $\C$ is a decreasing game then $\safe(s)\neq\emptyset$ if and only if the player $\Box$ has 
winning strategy in generalized B\"uchi game $\B_{\C} = 
(S,E,(S_{\Box},S_{\Diamond}),\{R_1,\dots,R_d\})$ where for each $1 \leq i \leq d$ we have $R_i = $ \mbox{$\{(s,t)~\in~E~\mid L(s,t)(i)~=~\omega\}$}.
\end{lemma}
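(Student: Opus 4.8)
The plan is to route the equivalence through the Streett game $\calS_{\C}$ of Lemma~\ref{lem-con-Street}, which already establishes that $\safe(s)\neq\emptyset$ if and only if player~$\Box$ has a winning strategy in $\calS_{\C}$. Since $\calS_{\C}$ and $\B_{\C}$ are played on exactly the same arena $(S,E,(S_{\Box},S_{\Diamond}))$ and differ only in their acceptance conditions, it suffices to show that an infinite play satisfies the generalized B\"uchi condition $\{R_1,\dots,R_d\}$ if and only if it satisfies the Streett condition $\A=\{(G_1,R_1),\dots,(G_d,R_d)\}$ used in $\calS_{\C}$. Once the two objectives accept the same plays, the two games have identical winning regions, and the lemma follows by chaining this with Lemma~\ref{lem-con-Street}.

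One implication is immediate and does not even need the decreasing hypothesis: if a play $w$ has $\inf(w)\cap R_i\neq\emptyset$ for every $i$, then every Streett implication ``$\inf(w)\cap G_i\neq\emptyset \Rightarrow \inf(w)\cap R_i\neq\emptyset$'' holds trivially, so $w$ satisfies $\A$. Hence every play satisfying the B\"uchi condition satisfies the Streett condition, and in particular any B\"uchi-winning strategy is Streett-winning.

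The converse is where the decreasing assumption enters, and is the only genuine obstacle. Let $w$ be a play satisfying $\A$. Pass to a suffix of $w$ that visits only states occurring infinitely often and uses only edges of $\inf(w)$; as this suffix is an infinite path through finitely many states, some state repeats, and the segment between two consecutive occurrences of it is a cycle $C$ all of whose edges lie in $\inf(w)$. Fix a resource $i$. Applying Definition~\ref{def-cgame} (decreasing) to the cycle $C$ yields an edge $e\in C\subseteq\inf(w)$ whose label $\delta$ satisfies $\delta(i)\neq 0$, i.e.\ $e\in G_i\cup R_i$. If $e\in R_i$, then already $\inf(w)\cap R_i\neq\emptyset$; otherwise $e\in G_i$, so $\inf(w)\cap G_i\neq\emptyset$, and since $w$ satisfies $\A$ this forces $\inf(w)\cap R_i\neq\emptyset$ as well. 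Either way $R_i$ is visited infinitely often, and as $i$ was arbitrary, $w$ satisfies the generalized B\"uchi condition.

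Combining the two implications shows that the Streett and B\"uchi objectives coincide on every play of the shared arena; hence player~$\Box$ wins $\B_{\C}$ from $s$ if and only if it wins $\calS_{\C}$ from $s$, which by Lemma~\ref{lem-con-Street} holds if and only if $\safe(s)\neq\emptyset$. The only subtlety is structural---extracting from $\inf(w)$ a genuine graph cycle so that the decreasing hypothesis can be invoked once per resource---after which the case distinction between reload edges ($R_i$) and decrease edges ($G_i$) closes the argument.
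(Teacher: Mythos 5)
Your proposal is correct and follows essentially the same route as the paper: both reduce via Lemma~\ref{lem-con-Street} to showing that Streett-winning and generalized-B\"uchi-winning coincide on the shared arena, with the B\"uchi-to-Streett direction being trivial and the converse using the decreasing property applied to a cycle formed by edges of $\inf(w)$. The only cosmetic difference is that you extract that cycle directly by a pigeonhole argument on a suffix of the play, whereas the paper invokes the fact that $\inf(w)$ forms a strongly connected (hence cycle-containing) subgraph — the underlying argument is the same.
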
 %

\noindent
Previous lemma immediately gives us that the emptiness of $\safe(s)$ in 
decreasing games is decidable in time $\calO(|S|\cdot|E|\cdot d)$.
We now argue that the emptiness of $\safe(s)$ for one-player games 
can also be achieved in polynomial time.
Note that from Lemma~\ref{lem-con-Street} we have that $\safe(s)\neq\emptyset$ 
if and only if player~$\Box$ has winning strategy in one-player Streett game $\calS_{\C}$. 
The problem of deciding the existence of winning strategy in one-player Streett game 
is exactly the nonemptiness problem for Streett automata that can be solved in time 
$\calO((|S|\cdot d+|E|)\cdot \min\{|S|,d\})$~\cite{KurshanBook}.

\begin{theorem}\label{thrm:dec-one-emptiness}
Given a consumption game $\C$ and a state $s$, 
the emptiness problems of whether $\safe(s)=\emptyset$ and 
$\cover(s)=\emptyset$ can be decided in time 
$\calO(|S|\cdot |E| \cdot d)$ if $\C$ is decreasing,
and in time $\calO((|S|\cdot d+|E|)\cdot \min\{|S|,d\})$
if $\C$ is a one-player game.
\end{theorem}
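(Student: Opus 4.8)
The plan is to prove Theorem~\ref{thrm:dec-one-emptiness} by combining the two reductions already set up in the surrounding text, so the theorem is essentially a bookkeeping corollary of prior lemmas together with the cited algorithmic results for the target game classes. Recall that by Lemma~\ref{lem-con-Street} we have $\safe(s) = \emptyset$ iff $\cover(s) = \emptyset$, so it suffices throughout to decide emptiness of $\safe(s)$; the claimed bounds then apply verbatim to $\cover(s)$.

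\medskip
\noindent
\textbf{Decreasing case.} First I would invoke Lemma~\ref{lem:decreasing-gbuchi}, which states that for a decreasing game, $\safe(s) \neq \emptyset$ iff player~$\Box$ wins the generalized B\"uchi game $\B_{\C}$ built on the same graph $(S,E,(S_{\Box},S_{\Diamond}))$ with the $d$ winning sets $R_i = \{(s,t) \in E \mid L(s,t)(i) = \omega\}$. Note that $\B_{\C}$ has vertex set $S$, edge set $E$, and exactly $m = d$ B\"uchi constraints, and that it is constructed in time linear in the description of $\C$ (we merely read off the $\omega$-labelled edges for each coordinate). I would then apply the cited algorithm of~\cite{DJW97}, which decides the winner of a generalized B\"uchi game in time $\calO(|V| \cdot |{\gtran}| \cdot m)$. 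Substituting $|V| = |S|$, $|{\gtran}| = |E|$, and $m = d$ yields the bound $\calO(|S| \cdot |E| \cdot d)$, as claimed.

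\medskip
\noindent
\textbf{One-player case.} Here I would instead use Lemma~\ref{lem-con-Street}, whose two parts together give that $\safe(s) \neq \emptyset$ iff player~$\Box$ has a winning strategy from $s$ in the Streett game $\calS_{\C}$ with the $d$ pairs $(G_i,R_i)$; this equivalence holds for arbitrary consumption games and in particular for one-player ones. When $\C$ is one-player we have $S_{\Diamond} = \emptyset$, so $\calS_{\C}$ is a one-player Streett game, i.e.\ player~$\Box$ alone chooses the infinite path. Deciding whether such a path satisfying the Streett condition exists is precisely the nonemptiness problem for Streett automata, which by~\cite{KurshanBook} is solvable in time $\calO((|S| \cdot d + |E|) \cdot \min\{|S|,d\})$. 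This gives the second bound.

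\medskip
\noindent
The main (and only real) obstacle is not a deep argument but a matter of correctly lining up the reductions with their complexity parameters: one must check that Lemma~\ref{lem:decreasing-gbuchi} applies exactly in the decreasing case and that the number of generalized B\"uchi/Streett conditions is $d$ (not, say, the number of states or transitions), so that plugging into the cited running times produces precisely the stated polynomials. I would also take care to remark explicitly that both bounds are simultaneously bounds for the $\cover(s)$ emptiness problem via the $\safe(s) = \emptyset \iff \cover(s) = \emptyset$ equivalence noted after Lemma~\ref{lem-con-Street}, since the theorem statement asserts the complexity for both sets at once and no separate reduction for $\cover(s)$ is needed.
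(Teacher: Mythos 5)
Your proposal is correct and follows essentially the same route as the paper: the decreasing case via Lemma~\ref{lem:decreasing-gbuchi} and the $\calO(|V|\cdot|{\gtran}|\cdot m)$ generalized B\"uchi algorithm of~\cite{DJW97}, the one-player case via Lemma~\ref{lem-con-Street} and Streett automata nonemptiness~\cite{KurshanBook}, and the transfer to $\cover(s)$ through the equivalence $\safe(s)=\emptyset \iff \cover(s)=\emptyset$. No gaps; the parameter substitutions ($|V|=|S|$, $|{\gtran}|=|E|$, $m=d$) are exactly as in the paper's argument.
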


\noindent
We now study the complexity of the membership problem for $\safe(s)$.
We prove two key lemmas and the lemmas bound the number of steps 
before all resources are reloaded.
The key idea is to make player~$\Box$ reload resources as soon as possible.
Formally, we say that a play $\play_{\sigma,\pi}(s,\alpha)$ induced by a 
sequence of transitions $s_0\tran{\delta_1} \cdots \tran{\delta_k} s_k$ \emph{reloads $i$-th resource in $j$-th step} 
if $\delta_j(i)=\omega$. 
We first present a lemma for decreasing games and then for one-player games.

\begin{lemma}\label{lem-claim1}
Consider a decreasing consumption game $\C$ and 
a configuration $(s,\alpha)$ such that $\alpha\in \safe(s)$.
There is a safe strategy $\sigma$ of player $\Box$ in $(s,\alpha)$ such that every 
$\play_{\sigma,\pi}(s,\alpha)$ reloads all resources in the first $d\cdot |S|$ steps.
\end{lemma}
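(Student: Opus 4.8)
The plan is to start from an arbitrary safe strategy $\sigma_0$ witnessing $\alpha\in\safe(s)$ and to reshape it into a strategy that reloads \emph{greedily}. First I would record a structural fact: in a decreasing game every safe play reloads \emph{every} resource infinitely often. Indeed, if some resource $i$ were reloaded only finitely often along a safe play, then from some point on $i$ is never reloaded; since the play is infinite and $S$ is finite it traverses a cycle infinitely often, and by the decreasing condition of Definition~\ref{def-cgame} that cycle carries an edge with $\delta(i)<0$, so the $i$-th component would fall without bound and eventually hit~$0$, contradicting safety. For a play following $\sigma_0$ let $U_j\subseteq\{1,\dots,d\}$ be the set of resources not yet reloaded after $j$ steps. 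The chain $U_0\supseteq U_1\supseteq\cdots$ is non-increasing and strictly drops at most $d$ times, so it splits the prefix into at most $d$ maximal \emph{windows} on which $U_j$ is constant; the statement is equivalent to bounding each window by $|S|$.

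The core is a loop-cutting (exchange) step. Fix a window on which the un-reloaded set equals some nonempty $U$, and suppose it lasts more than $|S|$ steps. Then two of its positions carry the same control state, giving a cycle $s_a\to\cdots\to s_b$ (with $s_a=s_b$) during which no resource of $U$ is reloaded. By the decreasing property every $i\in U$ has an edge with $\delta(i)<0$ on this cycle, so the cycle strictly lowers every $U$-component; excising it yields a shorter play whose $U$-components are \emph{pointwise larger} from the re-join point on, hence still positive. For the already-reloaded resources $j\notin U$ I would exploit that a reload may raise a resource arbitrarily: whenever $\sigma$ reloads a resource it sets it to a fixed value exceeding $d\cdot\ell\cdot|S|$, so such a resource survives the at most $d\cdot|S|$ remaining steps of the first round (a decrease of at most $d\cdot|S|\cdot\ell$) and stays positive no matter what the excised cycle removed. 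Iterating the excision brings every window down to at most $|S|$ steps, and the at most $d$ windows give the bound $d\cdot|S|$.

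The remaining and most delicate task is to realise this as one strategy $\sigma$ robust against \emph{every} $\pi$, rather than as a family of ad hoc shortened plays. I would define $\sigma$ phase by phase: in a phase with current un-reloaded set $U$, $\sigma$ drives the play, inside the \emph{finite} game $(S,E)$, along the attractor of $\Box$ toward the edges reloading some resource of $U$. Two facts make this work. First, $s$ lies in this attractor whenever the configuration is safe: otherwise $\Diamond$ could keep the play off every $U$-reload forever, and by the decreasing argument above such a play exhausts some resource of $U$, contradicting $\alpha\in\safe(s)$; since an attractor in a finite game has radius at most $|S|$, a $U$-reload is then forced within $|S|$ steps irrespective of $\pi$. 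Second, safety \emph{during} the phase comes from the exchange step: any safe play postponing all $U$-reloads past $|S|$ steps contains a cuttable cycle that strictly lowers the $U$-components, so a safe witness of length at most $|S|$ exists, and the fixed large reloads guarantee that already-reloaded resources cannot cause a failure within the round. I expect the main obstacle to be exactly the \emph{simultaneous} fulfilment of these two demands: forcing a $U$-reload within $|S|$ steps (which favours the attractor, ignoring loads) while keeping \emph{every} resource positive (which favours $\sigma_0$, ignoring speed). Reconciling them reduces to showing that from a safe configuration $\Box$ always has a move that is at once safe and attractor-decreasing, which I would establish by the loop-cutting argument together with monotonicity of safety in the load (larger loads never disable a move of $\Box$ nor aid $\Diamond$); once this is in place, concatenating the at most $d$ phases yields the bound $d\cdot|S|$ and, by Corollary~\ref{cor-global-bound}, a safe continuation thereafter.
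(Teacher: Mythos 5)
Your windows decomposition and your cycle-cutting step are exactly the combinatorial core of the paper's own proof (there the cycle is excised from a longest witness play, and the reloads made \emph{before} the cycle are boosted by the maximal in-cycle reload value $B$, which plays the role of your ``fixed huge reload''). The gap is in the last third, where you must turn the per-play excisions into a single strategy that is safe against \emph{every} $\pi$. You reduce this to the claim that from a safe configuration player~$\Box$ always has a move that is simultaneously safe and decreasing for the load-oblivious attractor rank, and this claim is false. Take $d=1$ and the one-player game with states $t,u,v_1,v_2,w$ and transitions $t \tran{-5} u$, $u \tran{\omega} w$, $t \tran{-1} v_1$, $v_1 \tran{-1} v_2$, $v_2 \tran{\omega} w$, $w \tran{\omega} w$; the only cycle is the $\omega$-labelled self-loop, so the game is decreasing. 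The attractor ranks towards the reloading edges are $1$ for $u$, $v_2$, $w$, rank $2$ for $v_1$, and rank $2$ for $t$, attained through $u$; hence the unique rank-decreasing move at $t$ is $t \tran{-5} u$. The configuration $(t,3)$ is safe, witnessed by $t \tran{-1} v_1 \tran{-1} v_2 \tran{\omega} w$, yet the rank-decreasing move is fatal ($3-5\leq 0$), and the unique safe move $t \tran{-1} v_1$ does not decrease the rank. Monotonicity of safety in the load cannot repair this: resource $1$ has not been reloaded yet, so no boosting convention applies to it. Thus the plain attractor cannot serve as your progress measure, and the phase construction does not go through as described (note that the lemma itself holds in this example --- the reload happens within $3\leq d\cdot|S|$ steps --- it is only your intermediate claim that fails).

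What is missing is a device that makes the excision argument produce one strategy, and this is where the paper's proof takes a different route. It picks, among all safe strategies, one minimizing first $F_\sigma = \max_\pi F_{\sigma,\pi}$, the maximal number of steps after which all resources have been reloaded (finite because the game is decreasing and, once $\sigma$ is fixed, finitely branching), and then minimizing the number $H_\sigma$ of distinct witness prefixes of the maximal length $F_\sigma$. If $F_\sigma > d\cdot|S|$, a single excision, performed \emph{inside the strategy} --- on histories extending the cut prefix, $\sigma'$ answers as $\sigma$ would on the history with the cycle re-inserted, and all reloads before the cut are raised by $B$ --- yields a safe strategy $\sigma'$ with $(F_{\sigma'},H_{\sigma'})$ lexicographically smaller, a contradiction. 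This extremal, well-founded argument (or an equivalent load-aware progress measure, such as the minimal multi-distance fixed point of Section~\ref{sec:reachability}) is precisely what your proposal lacks; without it, the attractor-based statement you would need is simply not true.
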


\noindent
Now let us consider {\it one-player} games. As player $\Diamond$ has only one trivial strategy, $\pi$, we write only $\play_{\sigma}(s,\alpha)$ instead of $\play_{\sigma,\pi}(s,\alpha)$.
\begin{lemma}\label{lem-claim2}
Consider a one-player consumption game $\C$ and 
a configuration $(s,\alpha)$ such that $\alpha\in \safe(s)$.
There is a safe strategy $\sigma$ of player $\Box$ in $(s,\alpha)$ such that for the $\play_{\sigma}(s,\alpha)$ and every $1\leq i\leq d$ we have that either the $i$-th resource is reloaded in the first $d\cdot |S|$ steps, or it is never decreased from the $(d\cdot |S|+1)$-st step on.
\end{lemma}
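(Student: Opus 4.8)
The plan is to exploit that $\C$ is one-player: since player~$\Diamond$ has only the trivial strategy, a safe strategy is the same thing as a single safe run, and the whole argument reduces to manipulating one infinite run (this is what distinguishes the situation from Lemma~\ref{lem-claim1}, where non-decreasing resources cannot occur). So I fix any safe run $w = s_0\tran{}s_1\tran{}\cdots$ witnessing $\alpha\in\safe(s)$, with $s_0=s$. First I classify the resources: let $E$ be the set of resources reloaded at least once along $w$, and $D=\{1,\dots,d\}\setminus E$ those never reloaded. The aim is to establish the first disjunct for every $i\in E$ (reload within $d\cdot|S|$ steps) and the second disjunct for every $i\in D$ (never decreased after the initial segment). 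For this I record an elementary dichotomy: in any safe run, a resource reloaded only finitely often is decreased only finitely often, since after its last reload its value is non-increasing and stays $\geq 1$, so it can drop only finitely many times. Hence every $i\in D$ is decreased only finitely often, and I fix a step $M$ such that (i) every resource of $E$ has already been reloaded before step $M$, and (ii) no resource of $D$ is decreased at or after step $M$.

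Next comes the compression of the prefix $s_0\tran{}\cdots\tran{}s_M$, following the slogan of reloading \emph{as soon as possible}. I mark the (at most $|E|\le d$) transitions at which some $E$-resource is reloaded for the \emph{first} time; between two consecutive marks, and before the first mark, the run consists of ``free'' transitions. On each free stretch I delete all cycles, turning it into a simple path of at most $|S|-1$ transitions, and I attach each mark to its preceding free stretch, so that every block ``free stretch $+$ mark'' has at most $|S|$ transitions. Crucially, at each first reload I reload the corresponding resource to a \emph{huge} value $V$ (reloads may be arbitrarily high), chosen larger than $d\cdot|S|\cdot\ell$ plus the value of that resource at $s_M$ along $w$.

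The counting then yields the bound. There are at most $|E|$ marked blocks. If $D=\emptyset$ then $E=\{1,\dots,d\}$, no terminal stretch is needed, and the prefix has at most $|E|\cdot|S|\le d\cdot|S|$ transitions. If $D\neq\emptyset$ then $|E|\le d-1$, and I append one final simple free stretch of length $\le|S|-1$ reaching $s_M$, for a total of at most $(d-1)\cdot|S|+(|S|-1)<d\cdot|S|$ transitions. In either case every first reload of an $E$-resource occurs within the first $d\cdot|S|$ steps, giving the first disjunct for $E$. From the reached configuration I continue with the original tail $w_M$: it is safe, its labels never decrease a $D$-resource by (ii), and reusing it therefore keeps every $D$-resource constant, so no $D$-resource is decreased after step $d\cdot|S|$, giving the second disjunct for $D$.

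The main obstacle is to argue that deleting cycles preserves safety, since a deleted cycle may contain reloads as well as decreases. This is exactly what the huge reload values are for: a resource already first-reloaded sits at value $\ge V-d\cdot|S|\cdot\ell>0$ throughout the remaining prefix, so losing a \emph{repeated} reload of it inside a deleted cycle is harmless; a resource not yet first-reloaded on the current stretch is never reloaded inside a deleted cycle (its first reload is a later mark, or it lies in $D$), so deleting the cycle only removes decreases and hence only raises its load. To splice on the tail I invoke monotonicity: the compressed configuration dominates, componentwise, the configuration of $w$ at $s_M$ (for $E$-resources because $V$ is huge, for $D$-resources because compression only deletes decreases), so by upward closure of $\safe$ the tail stays safe when started from the larger load and with its own reloads scaled up. Assembling the compressed prefix with this tail gives the required safe strategy $\sigma$.
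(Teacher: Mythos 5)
Your proof is correct and takes essentially the same route as the paper's: both hinge on excising cycles that contain no first-time reloads, compensating for any lost repeated reloads by enlarging earlier reload values, and both obtain the $d\cdot|S|$ bound from the same pigeonhole over the at most $d$ first-reload positions (the paper's formal proof literally finds indices $i<j$ with $s_i=s_j$ such that every resource reloaded between them was already reloaded before, deletes that segment, and bumps earlier reloads by the constant $B$). The only difference is packaging: the paper removes one cycle at a time inside a minimal-counterexample argument (a safe strategy minimizing the stabilization time $F_\sigma$), whereas you compress the whole prefix in a single pass with a huge reload value $V$ and splice the original tail back via upward-closedness of $\safe(s)$ --- just make sure $V$ is chosen relative to the actual splice point, e.g.\ by taking $M$ minimal (in particular, when $D=\emptyset$, take $M$ to be the step right after the last first reload, so that the spliced tail really is $w_M$ and your choice of $V$ dominates the right configuration).
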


\noindent
As a consequence of Lemma~\ref{lem-claim1}, Lemma~\ref{lem-claim2} 
and the hardness results presented in Proposition~\ref{prop:hard-main} (in appendix) 
we obtain the following:
\begin{theorem}\label{thrm:dec-one-membership}
The membership problem of whether  $\alpha \in \safe(s)$ is $\NP$-complete for one-player consumption games and 
$\PSPACE$-complete for decreasing consumption games. The problem whether $\alpha$ is a minimal element of $\safe(s)$ is $\DP$-complete for one-player consumption games and 
$\PSPACE$-complete for decreasing consumption games.
\end{theorem}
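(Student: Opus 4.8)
The plan is to prove Theorem~\ref{thrm:dec-one-membership} by combining the structural bounds from Lemma~\ref{lem-claim1} and Lemma~\ref{lem-claim2} (for the upper bounds) with the hardness results from Proposition~\ref{prop:hard-main} (for the lower bounds). I will treat the four claims separately: membership in $\safe(s)$ for one-player and for decreasing games, and minimality in $\safe(s)$ for the same two classes.

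\smallskip
\noindent\textbf{Upper bound for one-player membership ($\in\NP$).}
Let $\C$ be one-player and fix $\alpha$ and $s$. By Lemma~\ref{lem-claim2}, if $\alpha \in \safe(s)$ then there is a safe strategy $\sigma$ such that along the single play $\play_\sigma(s,\alpha)$, every resource is either reloaded within the first $d\cdot|S|$ steps, or never decreased after step $d\cdot|S|$. First I would argue that such a play decomposes into a short prefix of length at most $d\cdot|S|$, followed by a cyclic behaviour in which only resources already reloaded (hence ``free'' to use $\omega$ again) are decreased. This yields a polynomial-size witness: the certificate is the prefix of transitions of length $\leq d\cdot|S|$ together with a description of a lasso (a reachable cycle) that the strategy follows thereafter, along with the reload values used on the prefix. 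The verifier simulates the prefix on $\alpha$ (each step changes the load by a bounded label, so the arithmetic stays polynomial in $|\alpha|$), checks positivity is maintained, checks that on the cycle the only strictly-decreased resources are ones reloaded on the prefix (so the cycle can be repeated forever safely after one more reload per iteration), and confirms the goal condition. Since the witness size and the verification are polynomial, membership is in $\NP$. The main subtlety is to certify the infinite cyclic tail in polynomial space; here the decomposition guaranteed by Lemma~\ref{lem-claim2} is exactly what makes a lasso certificate sufficient.

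\smallskip
\noindent\textbf{Upper bound for decreasing membership ($\in\PSPACE$).}
Let $\C$ be decreasing. By Lemma~\ref{lem-claim1}, if $\alpha \in \safe(s)$ then player~$\Box$ has a safe strategy that reloads \emph{all} resources within the first $d\cdot|S|$ steps along every play. I would use this to bound the relevant part of the configuration graph: it suffices to explore the finite game in which loads are capped by the reload bound $d\cdot\ell\cdot|S|$ from Corollary~\ref{cor-global-bound} and to search only to depth where all reloads have occurred, after which Corollary~\ref{cor-frp} lets us treat every resource as capped. The membership question then reduces to solving a safety game on a configuration graph whose vertices are $(s,\beta)$ with each $\beta(i)$ bounded by a value computable from $\alpha$ and the reload bound; this graph has at most exponentially many vertices but each vertex is describable in polynomial space, so an alternating polynomial-space procedure (guessing $\Box$'s moves, universally branching over $\Diamond$'s moves, maintaining the current configuration and a step counter) decides winning, giving a $\PSPACE$ upper bound via $\mathbf{APSPACE}=\PSPACE$.

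\smallskip
\noindent\textbf{Lower bounds and minimality.}
For the hardness directions I would invoke Proposition~\ref{prop:hard-main}, which supplies $\NP$-hardness for one-player membership and $\PSPACE$-hardness for decreasing membership, matching the upper bounds above and yielding $\NP$-completeness and $\PSPACE$-completeness respectively. For the minimality claims I would observe that $\alpha$ is a minimal element of $\safe(s)$ iff $\alpha \in \safe(s)$ and, for every coordinate $i$, the vector $\alpha^{-i}$ obtained by decrementing $\alpha(i)$ by one is \emph{not} in $\safe(s)$. In the one-player case, the positive part $\alpha\in\safe(s)$ is an $\NP$ query and each of the $d$ negative parts $\alpha^{-i}\notin\safe(s)$ is a $\coNP$ query, so minimality lies in $\DP$ (the conjunction of an $\NP$ and a $\coNP$ condition), with $\DP$-hardness again from Proposition~\ref{prop:hard-main}; this gives $\DP$-completeness. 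In the decreasing case both the positive and the negative queries are $\PSPACE$ (and $\PSPACE$ is closed under complement and conjunction), so minimality stays in $\PSPACE$, with matching hardness, giving $\PSPACE$-completeness. The step I expect to be the main obstacle is the one-player $\NP$ upper bound: turning the structural dichotomy of Lemma~\ref{lem-claim2} into a genuinely polynomial-size lasso certificate whose infinite tail is safely repeatable requires care to ensure that the cyclic part reloads each of its strictly-decreased resources once per iteration, so that repeating the cycle forever never exhausts any resource.
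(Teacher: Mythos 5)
Your lower bounds and the minimality reductions (one membership query plus $d$ non-membership queries for the vectors $\alpha-\vec{1}_i$, giving $\DP$ for one-player and $\PSPACE$ for decreasing games) coincide with the paper's argument and are fine. Both of your upper bounds, however, contain genuine flaws. For decreasing games the complexity-class identity you invoke is false: alternating polynomial \emph{space} equals $\EXPTIME$ (Chandra--Kozen--Stockmeyer), not $\PSPACE$; it is alternating polynomial \emph{time} that equals $\PSPACE$. The procedure you actually describe --- an alternating search over the exponentially large capped configuration graph, maintaining a step counter --- runs for exponentially many steps along a branch, so it only yields $\EXPTIME$. To get $\PSPACE$ you must use Lemma~\ref{lem-claim1} the way the paper does: after removing states with empty $\safe$ (Theorem~\ref{thrm:dec-one-emptiness}), $\alpha\in\safe(s)$ iff player~$\Box$ can keep all loads positive \emph{and reload every resource} within $d\cdot|S|$ steps, since reaching a configuration in which every resource has just been reloaded to the cap of Corollary~\ref{cor-global-bound} is already winning. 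That is an alternating polynomial-\emph{time} computation, whence $\PSPACE$.

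In the one-player case your verifier is unsound as written: you check that ``on the cycle the only strictly-decreased resources are ones reloaded on the prefix,'' but a resource reloaded only on the prefix and decreased on every iteration of the cycle, with no $\omega$-reload on the cycle itself, is exhausted after finitely many iterations, so this certificate accepts unsafe plays. The correct condition --- every resource decreased on the cycle carries an $\omega$-reload on the cycle --- appears only in your closing paragraph and contradicts the verifier you defined; moreover, you still owe a proof that a polynomial-length cycle witness always exists when $\alpha\in\safe(s)$. Both obligations disappear with the paper's certificate, which guesses no cycle at all: guess only the prefix of length at most $d\cdot|S|$ (always reloading to $2\cdot d\cdot|S|\cdot\ell$), and then run the deterministic polynomial-time emptiness algorithm of Theorem~\ref{thrm:dec-one-emptiness} on the pruned game $\C'$ obtained from $\C$ by deleting every transition that decreases some resource not reloaded on the guessed prefix. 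Lemma~\ref{lem-claim2} is precisely the statement that this test is sound and complete, so the infinite tail is certified by a polynomial-time subroutine rather than by a lasso whose existence and repeatability you would otherwise have to establish.
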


\subsection{Minimal Safe Configurations and Multi-Distance Reachability}
\label{sec:reachability}

In the rest of the paper we present algorithms for computing the minimal safe configurations in one-player and decreasing
consumption games. Both algorithms use the iterative algorithm for \emph{multi-distance reachability problem}, which is described below, as a subprocedure. Although their worst-case complexity is the same as the complexity of generic algorithm from Section \ref{sec-general}, we still deem them to be more suitable for practical computation due to some of their properties that we state here in advance:
\begin{itemize}
 \item The generic algorithm always constructs game of size $(|S|\cdot d \cdot \ell)^{\calO(d)}$. In contrast, algorithms based on solving multi-distance reachability construct a game whose size is linear in size of $\C$ \emph{for every fixed choice of parameter} $d$.
 \item The multi-distance reachability algorithms iteratively construct sets of configurations that are safe but may not
be minimal before the algorithm stops. Although the time complexity of this iterative computation is $(|S|\cdot d \cdot \ell)^{\calO(d)}$ at worst, it may be the case that the computation terminates much earlier. Thus, these algorithms have a chance to terminate earlier than in $(|S|\cdot d \cdot \ell)^{\calO(d)}$ steps (unlike the generic algorithm, where the necessary construction of the ``large'' safety game always requires this number of steps).
 \item Moreover, the algorithm for one-player games presented in Section \ref{subsec-oneplayer} decomposes the problem into many parallel subtasks that can be processed
independently.
\end{itemize}

Let $\D$ denote a $d$-dimensional consumption game with
transitions labeled by vectors over $\Zset_{\leq 0}$ (i.e. there is no
$\omega$ in any label). Also denote $D$ the set of states of game $\D$. We say that vector $\alpha$ is
a \emph{safe multi-distance} (or just \emph{safe distance}) from state $s$ to state $r$ if there is a strategy $\sigma$ for player $\Box$ such that for any strategy $\pi$ for
player $\Diamond$ the infinite path $\play_{\sigma,\pi}(s,\alpha)$
visits a configuration of the form $(r,\beta)$. That is, $\alpha$ is a
safe distance from $s$ to $r$ if player $\Box$ can enforce reaching
$r$ from $s$ in such a way that the total decrease in resource values
is less than $\alpha$.

We denote by $\safe_{\D}(s,r)$ the set of all safe distances from $s$
to $r$ in $\D$, and by $\lambda_{\D}(s,r)$ the set of all minimal
elements of $\safe_{\D}(s,r)$. If $\safe_{\D}(s,r) = \emptyset$, then
we set $\lambda_{\D}(s,r) = (\infty,\dots,\infty)$, where the symbol
$\infty$ is treated accordingly with the usual conventions (for any $c
\in \Zset$ we have $\infty-c = \infty$, $c < \infty$; we do not use
the $\omega$ symbol to avoid confusions).

We present a simple fixed-point iterative algorithm which computes the
set of minimal safe distances from $s$ to $r$. Apart from the standard set
operations, the algorithm uses the following operations on sets of 
vectors: for a given set $M$ and a given vector $\alpha$, the 
operation {\tt min-set}$(M)$
returns the set of minimal elements of $M$, and $M - \alpha$ returns the set
$\{\beta - \alpha\mid \beta \in M\}$. Further, given a sequence of sets of
vectors $M_1,\dots,M_m$ the operation {\tt cwm}$(M_1,\dots,M_m)$
returns the set $\{\alpha_1 \vee \dots \vee \alpha_m \mid \alpha_1 \in
M_1,\dots,\alpha_m \in M_m\}$, where each $\alpha_1 \vee \dots \vee
\alpha_m$ denotes a component-wise maximum of the vectors $\alpha_1,
\dots, \alpha_m$.
Technically, the algorithm iteratively solves the following optimality
equations: for any state $q$ with outgoing transitions
$q\ltran{\delta_1} q_1,\dots,q\ltran{\delta_m} q_m$ we have that
\[
 \lambda_{\D}(q,r) = \begin{cases}
                    \text{{\tt min-set}}\left(\lambda_{\D}(q_1,r) -\delta_{1} \cup \dots \cup \lambda_{\D}(q_m,r) - \delta_{m}\right) & \text{if } q \in D_{\Box} \\
		    \text{{\tt min-set}}\left(\ \text{{\tt cwm}}(\lambda_{\D}(q_1,r) - \delta_1,\dots,\lambda_{\D}(q_m,r) - \delta_{m})\ \right) & \text {if } q \in D_{\Diamond}
                    \end{cases}
\]

The algorithm iteratively computes the $k$-step approximations of
$\lambda_{\D}(q,r)$, which are denoted by $\lambda^k_{\D}(q,r)$.  Intuitively,
each set $\lambda^k_{\D}(q,r)$ consists of all minimal safe distances
from $q$ to $r$ over all plays with at most $k$ steps. The set
$\lambda^0_{\D}(q,r)$ is initialized to $\{(\infty,\ldots,\infty)\}$
for $q\not = r$, and to $\{(1,\ldots,1)\}$ for $q=r$. Each
$\lambda^{k+1}_{\D}(q,r)$ is computed from $\lambda^{k}_{\D}(q,r)$
using the above optimality equations until a fixed point is reached.
In Appendix~\ref{app-restricted} we show that this fixed point is the
correct solution for the minimal multi-distance problem.

Since the algorithm is based on standard methods, we postpone its 
presentation to Appendix~\ref{app-restricted} and state only 
the final result.
We 
call \emph{branching degree} of $\D$ the maximal number of transitions outgoing from any state of $\D$.
\begin{theorem}
 \label{thm:min-dist}
There is an iterative procedure Min-dist$(\D,s,r)$ that correctly computes the set of minimal safe distances from $s$ to $r$ in time $\calO\left(|D| \cdot a \cdot b\cdot N^2\right)$, where $b$ is the branching degree of $\D$, $a$ is the length of a longest acyclic path in $\D$ and $N=\max_{0 \leq k \leq a}|\lambda^k_{\D}(q,r)|$. 

Moreover, the procedure requires at most $a$ iterations to converge to the correct solution and thus
the resulting set $\lambda_{\D}(s,r)$ has size at most $N$. Finally, the number $N$ can be bounded from above by $(a\cdot \ell)^d$.
\end{theorem}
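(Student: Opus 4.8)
The plan is to establish correctness and complexity of the iterative fixed-point procedure separately, treating the optimality equations as the specification.

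First I would prove that the operator defined by the optimality equations is monotone and that the $k$-step approximations $\lambda^k_{\D}(q,r)$ genuinely capture the minimal safe distances achievable by plays of length at most $k$. This is an induction on $k$: the base case $\lambda^0$ is correct by the initialization ($\{(1,\dots,1)\}$ at $r$, $\{(\infty,\dots,\infty)\}$ elsewhere), and in the inductive step I would argue for both kinds of states. At a $\Box$-state, player~$\Box$ picks one outgoing transition, so the minimal safe distances are the union of the shifted approximations $\lambda^k_{\D}(q_j,r) - \delta_j$ followed by \texttt{min-set}; at a $\Diamond$-state, player~$\Diamond$ may choose any successor adversarially, so a single initial load must suffice against every choice, which is exactly the componentwise-maximum combination \texttt{cwm} of the shifted successor sets. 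The key structural fact, which I would isolate, is that since all labels lie in $\Zset_{\leq 0}$ (no $\omega$), distances only accumulate, so the approximations form a monotone (non-increasing in the ordering on upward-closed sets) sequence that stabilizes.

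Next I would bound the number of iterations and the size $N$ of the intermediate sets. For convergence in at most $a$ iterations, where $a$ is the length of the longest acyclic path, I would show that the fixed point is reached once every state's value reflects its optimal acyclic witness: because the game is decreasing-free of positive updates, any play witnessing a minimal safe distance can be assumed to be acyclic (a cycle only wastes resources and can be excised without increasing the distance), so no path longer than $a$ contributes a new minimal element. For the bound $N \leq (a\cdot\ell)^d$, I would observe that any minimal safe distance vector is achieved along an acyclic path of length at most $a$, so each component is at most $a\cdot\ell$; hence the minimal elements form an antichain inside the box $\{1,\dots,a\cdot\ell\}^d$, and a crude count gives at most $(a\cdot\ell)^d$ of them.

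Finally I would assemble the running time. Each iteration recomputes, for each of the $|D|$ states, a combination over its (at most $b$) outgoing transitions; each \texttt{cwm} or union over these branches manipulates sets of size at most $N$, and the dominating cost is the pairwise combination and subsequent \texttt{min-set}, which is $\calO(N^2)$ per state per iteration (comparing vectors componentwise is absorbed into the $d$-dependence hidden in the model size). Multiplying the $a$ iterations, the $|D|$ states, the branching degree $b$, and the $\calO(N^2)$ set cost yields $\calO(|D|\cdot a\cdot b\cdot N^2)$. The main obstacle I expect is the convergence-in-$a$-steps claim: making precise the acyclicity-of-witnesses argument in the two-player setting requires care, since player~$\Diamond$'s adversarial branching means one must argue that an optimal strategy for~$\Box$ can enforce reaching $r$ along paths whose relevant (resource-decreasing) prefix is acyclic, rather than merely exhibiting a single short path. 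I would address this by a strategy-pruning argument showing any loop in the reachability strategy can be bypassed without raising any component of the guaranteed distance.
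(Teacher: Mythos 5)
Your proposal is correct and follows essentially the same route as the paper's proof: an induction on $k$ establishing that the $k$-step approximations satisfy the optimality equations (union plus \texttt{min-set} at $\Box$-states, \texttt{cwm} at $\Diamond$-states), a strategy-pruning cycle-excision argument showing player~$\Box$ can enforce reaching $r$ within $a$ steps (the paper makes your pruning terminate via a lexicographic measure on the longest witness length and the number of longest witnesses), the bound $N \leq (a\cdot\ell)^d$ from components being bounded by $a\cdot\ell$, and the same per-iteration cost accounting. You correctly identified the two-player convergence claim as the crux, and your proposed fix is exactly the paper's.
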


\noindent
Note that the complexity of the procedure Min-dist$(\D,s,r)$ crucially depends on parameter $N$. The bound on $N$ presented in the previous theorem follows from the obvious fact that components of all vectors in $\lambda_{\D}^{k}(s,r)$ are either all equal to $\infty$ or are all bounded from above by $k\cdot \ell$. However, for concrete instances the value of $N$ can be substantially smaller. For example, if the consumption game $\D$ models some
 real-world problem, then it can be expected that the number of $k$-step minimal
distances from states of $\D$ to $r$ is small, because changes in resources
 are not entirely independent in these models (e.g., action that
 consumes a large amount of some resource may consume a large
 amount of some other resources as well). This observation forms the core of our claim that algorithms based on multi-distance reachability may terminate much earlier than the generic algorithm from Section \ref{sec-general}.

\subsection{Computing $\safe(s)$ in One-Player Consumption Games}\label{subsec-oneplayer}

Now we present an algorithm for computing minimal elements of $\safe(s)$ in one-player consumption games. The algorithm computes the solution by solving several instances of minimum multi-distance reachability problem.
We assume that all states $s$ with $\safe(s)=\emptyset$
were removed from the game. This can be done in polynomial time using
the algorithm for emptiness
(see Theorem~\ref{thrm:dec-one-emptiness}).

We denote by $\Pi(d)$ the set of all permutations of the set
$\{1,\dots,d\}$. We view each element of $\Pi(d)$ as a finite sequence
$\pi_1 \dots \pi_d$, e.g., $\Pi(2)=\{12,21\}$.
We use the standard notation $\pi$ for permutations: confusion with
strategies of player $\Diamond$ should not arise since~$S_\Diamond = \emptyset$.

We say that a play $\play_{\sigma}(s,\alpha)$ {\em matches} a
permutation $\pi$ if for every $1 \leq i<j \leq d$ the following
holds: If both $\pi_i$-th and $\pi_j$-th resources are reloaded during
$\play_{\sigma}(s,\alpha)$, then the first reload of $\pi_i$-th resource occurs
before or at the same time as the first reload of $\pi_j$-th resource.
A configuration $(s,\alpha)$ matches $\pi$ if there is a strategy
$\sigma$ that is safe in $(s,\alpha)$ and $\play_{\sigma}(s,\alpha)$
matches $\pi$.  We denote by $\safe(s,\pi)$ the set of all vectors
$\alpha$ such that $(s,\alpha)$ matches $\pi$.  Note that
$\safe(s)=\bigcup_{\pi\in\Pi(d)} \safe(s,\pi)$.

As indicated by the above equality, computation of safe configurations
in $\C$ reduces to the problem of computing, for every permutation
$\pi$, safe configurations that match~$\pi$. The latter problem, in
turn, easily reduces to the problem of computing safe multi-distances in
specific one-player consumption games $\C(\pi)$.  Intuitively, each game $\C(\pi)$
simulates the game $\C$ where the resources are forced to be reloaded
in the order specified by $\pi$. So the states of each $\C(\pi)$ are
pairs $(s,k)$ where $s$ corresponds to the current state of the
original game and $k$ indicates that the first $k$ resources, in the
permutation $\pi$, have already been reloaded.  Now the crucial point
is that if the first $k$ resources have been reloaded  when some configuration $c=(s,\beta)$ of the original game is visited, and there is a
safe strategy in $c$ which does not decrease any of the resources with the index
greater than $k$, then we may safely conclude that the {\em initial}
configuration is safe. So, in such a case we put a transition from the
state $(s,k)$ of $\C(\pi)$ to a distinguished target state $r$ (whether or not to put in
such a transition can be decided in polynomial time due to
Theorem~\ref{thrm:dec-one-emptiness}). Other transitions of $\C(\pi)$
correspond to transitions of $\C$ except that they have to update the
information about already reloaded resources, cannot skip any resource
in the permutation (such transitions are removed), and the components
indexed by
$\pi_1,\dots,\pi_k$ are substituted with $0$ in transitions incoming
to states of the form $(q,k)$ (since already reloaded resources become
unimportant as indicated by the above observation).

A complete construction of $\C(\pi)$ is presented in 
Appendix~\ref{app-restricted} as a part of a formal proof of the 
following theorem:

\begin{theorem}
\label{thm:oneplayer-reach}
For every permutation $\pi$ there is a polynomial time constructible
consumption game $\C(\pi)$ of size $\calO(|S|\cdot d)$ and branching degree $\calO(|S|)$ such that for
every vector $\alpha$ we have that $\alpha \in \safe(s,\pi)$ in $\C$ iff
$\alpha$ is a safe distance from $(s,0)$ to $r$ in
$\C(\pi)$. %
\end{theorem}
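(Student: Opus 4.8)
The plan is to let $\C(\pi)$ simulate an initial segment of a safe play of $\C$ in which the resources are first reloaded in the order fixed by $\pi$, reaching a fresh sink $r$ precisely when the still‑unreloaded resources will never have to be touched again. Concretely, the states of $\C(\pi)$ are the pairs $(s,k)$ with $s\in S$ and $0\le k\le d$, plus $r$, where $k$ records that $\pi_1,\dots,\pi_k$ have already been first reloaded. For an edge $s\tran{\delta}t$ of $\C$ and a source $(s,k)$ let $B=\{i\in\{\pi_{k+1},\dots,\pi_d\}\mid\delta(i)=\omega\}$ be the set of resources reloaded here for the first time. If $B$ is a block $\{\pi_{k+1},\dots,\pi_{k'}\}$ (so $B=\emptyset$ gives $k'=k$) we add $(s,k)\to(t,k')$ labelled by $\delta'$, where $\delta'(\pi_j)=0$ for $j\le k'$ and $\delta'(\pi_j)=\delta(\pi_j)$ for $j>k'$ (these are $\le 0$, never $\omega$); otherwise the edge is dropped, as it would reload resources out of $\pi$‑order. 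All labels thus lie in $\Zset_{\leq 0}^d$, so $\C(\pi)$ is a valid input to the multi‑distance procedure. Finally we add a $\vec{0}$‑labelled edge $(s,k)\to r$ exactly when $\Box$ has a safe strategy in $\C$ from $s$ that never touches (decreases or reloads) any of $\pi_{k+1},\dots,\pi_d$; equivalently $\safe(s)\ne\emptyset$ in the game obtained from $\C$ by deleting every edge whose label is nonzero on one of these components, which is decidable in polynomial time by Theorem~\ref{thrm:dec-one-emptiness}. There are $|S|(d{+}1)+1$ states, the out‑degree of any $(s,k)$ is at most the out‑degree of $s$ in $\C$ plus one, giving branching degree $\calO(|S|)$, and the $\calO(|S|\cdot d)$ emptiness checks keep the construction polynomial.

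\emph{Soundness $(\Rightarrow)$.} Fix $\alpha\in\safe(s,\pi)$ and a safe $\sigma$ with $\play_\sigma(s,\alpha)$ matching $\pi$. By the matching condition the ever‑reloaded resources form a prefix $\pi_1,\dots,\pi_m$, and simultaneously first‑reloaded resources always form a contiguous block of $\pi$ (if $\pi_a,\pi_b$ with $a<b$ are reloaded together then so is every $\pi_c$, $a<c<b$); hence each transition of the play is a transition of $\C(\pi)$ and the play traces a path out of $(s,0)$ along which $k$ grows from $0$ to $m$. Since $\pi_{m+1},\dots,\pi_d$ are never reloaded they are decreased only finitely often, so past the first reload of $\pi_m$ and the last decrease of any such resource we reach a state $s_n$ whose tail strategy is safe and touches none of $\pi_{m+1},\dots,\pi_d$; thus the edge to $r$ is present and we take it. For $i\le m$ the decrease of $\pi_i$ accumulated in $\C(\pi)$ is exactly its decrease in $\C$ before its first reload (afterwards the component is zeroed), and for $i>m$ it is its total decrease; in both cases it is $<\alpha(\pi_i)$ because that load stays positive in the safe play. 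Hence $\alpha$ is a safe distance from $(s,0)$ to $r$.

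\emph{Completeness $(\Leftarrow)$.} A safe distance $\alpha$ gives a path $(s,0)\to\cdots\to(s_n,k_n)\to r$ with accumulated decrease $<\alpha$. Realise it in $\C$ by taking the underlying edges and, at every reload, refilling the resource arbitrarily high; at $(s_n,k_n)$ switch to a safe strategy $\tau$ witnessing the edge to $r$, which never touches $\pi_{k_n+1},\dots,\pi_d$. For $i>k_n$ the load of $\pi_i$ only falls along the path, by $<\alpha(\pi_i)$, and is untouched by $\tau$, so it stays positive; for $i\le k_n$ the pre‑reload decrease is $<\alpha(\pi_i)$ and, reloads being unbounded, each such resource can be refilled high enough both to survive the finite remainder of the path and to dominate the fixed initial credit $\tau$ needs. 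The combined strategy is therefore safe. Moreover $\pi_1,\dots,\pi_{k_n}$ are first reloaded in $\pi$‑order along the path and $\tau$ reloads none of $\pi_{k_n+1},\dots,\pi_d$, so the whole play matches $\pi$; thus $\alpha\in\safe(s,\pi)$.

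\emph{Main obstacle.} The crux is the tail after $r$. Getting the matching condition exactly right forces the edge to $r$ to be governed by ``never touches'' rather than merely ``never decreases'': only then is the witnessing $\tau$ guaranteed not to trigger a first reload of some $\pi_j$ with $j>k_n$ out of $\pi$‑order, which is what the completeness argument needs. The second, quantitative subtlety is that $\C(\pi)$ deliberately forgets a resource once reloaded, so the reverse direction must lean on the unboundedness of $\omega$ to refill each already‑reloaded resource high enough to cover the bounded remaining path and to seed the finite initial credit of $\tau$.
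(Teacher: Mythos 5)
Your proposal is, in substance, the paper's own proof: the same state space $\{(s,k)\}\cup\{r\}$, the same ``contiguous block'' rule for keeping or dropping edges (the paper phrases it as removing transitions that skip a resource of the permutation, and zeroing the components $\pi_1,\dots,\pi_{k'}$ on edges entering level $k'$), the same $\vec{0}$-labelled edge to $r$ guarded by non-emptiness of $\safe$ in the zero-subgame whose edges neither decrease nor reload the still-unreloaded resources (decided via Theorem~\ref{thrm:dec-one-emptiness}), and the same completeness argument: replay the path, reload every reloadable resource high enough, then switch at the last state to a safe strategy of the zero-subgame, whose finite sufficient initial credit exists by Corollary~\ref{cor-global-bound}. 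The only cosmetic difference is that the paper adds a dummy sink $h$ with $\vec{-1}$-labelled edges so that every state of $\C(\pi)$ keeps an outgoing transition; in your version a state $(s,k)$ can be left with no successor. Your ``main obstacle'' remarks (``never touches'' rather than ``never decreases'', and leaning on unbounded reloads to seed the tail strategy) are exactly the points the paper's proof also relies on.

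There is, however, one step in your soundness direction that is genuinely false: ``by the matching condition the ever-reloaded resources form a prefix $\pi_1,\dots,\pi_m$.'' The matching condition constrains only pairs of resources that are \emph{both} reloaded, so it says nothing about resources that are never reloaded. Concretely, take $d=2$, $\pi=12$, states $s,t$ with transitions $s\tran{(0,\omega)}t$ and $t\tran{(0,0)}t$: the unique play reloads resource $2$ and never resource $1$, hence it matches $\pi=12$ vacuously and $(1,1)\in\safe(s,12)$; yet in $\C(12)$ the edge out of $(s,0)$ is dropped (it skips $\pi_1$) and $s$ is not $\emptyset$-safe, so $r$ is unreachable from $(s,0)$ and \emph{no} vector is a safe distance. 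So the left-to-right implication you assert fails for this $\pi$. You should know that the paper's proof has exactly the same hole, only implicitly: it writes $K\ge\firstom(w,\pi_{|I|})$ and builds the path in $\C(\pi)$ assuming the reloaded set $I$ equals $\{\pi_1,\dots,\pi_{|I|}\}$, which the stated definition of matching does not guarantee; the theorem as literally stated is false, not just your proof of it. The repair is standard and harmless for the algorithm built on top: either strengthen ``matches $\pi$'' so that the set of reloaded resources must be a prefix of $\pi$ (every safe play still matches at least one permutation in this stronger sense, so $\safe(s)=\bigcup_{\pi}\safe(s,\pi)$ survives), or prove only what the algorithm needs, namely that every safe distance in $\C(\pi)$ lies in $\safe(s,\pi)$ (your completeness direction, which is sound) and that every $\alpha\in\safe(s)$ is a safe distance in $\C(\pi^*)$ for the permutation $\pi^*$ listing the reloaded resources of a witnessing play in order of first reload, extended arbitrarily. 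Under either fix your argument, like the paper's, goes through verbatim.
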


\noindent
By the previous theorem, every minimal element of $\safe(s)$ is an
element of $\lambda_{C({\pi})}((s,0),r)$ for at least one permutation
$\pi$. Our algorithm examines all permutations $\pi \in \Pi(d)$, and for
every permutation it constructs game $C({\pi})$ and computes
$\lambda_{C({\pi})}((s,0),r)$ using the procedure Min-dist from
Theorem \ref{thm:min-dist}. The algorithm also stores the set of all
minimal vectors that appear in some $\lambda_{C({\pi})}((s,0),r)$. In this way, the algorithm eventually finds all minimal elements of $\safe(s)$. The
pseudocode of the algorithm can be found in Appendix~\ref{app-restricted}.

From complexity bounds of Theorems~\ref{thrm:dec-one-emptiness} and~\ref{thm:min-dist} we obtain that the worst case running time of this algorithm is $d!\cdot(|S|\cdot\ell \cdot d)^{\calO(d)}$.
In contrast with the generic algorithm of
Section~\ref{sec-general}, that constructs an exponentially large safety game, the
algorithm of this
section computes $d!$ ``small'' instances of the minimal multi-distance
reachability problem. We can solve many of these instances in
parallel. Moreover, as argued in previous section, each call
of Min-dist$(\C({\pi}),(s,0),r)$ may have much better running time
than the worst-case upper bound suggests.
\newcommand{\wh}{\widehat}

\subsection{Computing $\safe(s)$ in Decreasing Consumption Games}\label{subsec-decreasing}

We now turn our attention to computing minimal elements of $\safe(s)$ in decreasing 
games.
The main idea is again to reduce this task to the computation of minimal 
multi-distances in a certain consumption game. We again assume that states with $\safe(s)=\emptyset$ were removed from the game. 

The core of the reduction is the following observation: if $\C$ is 
decreasing, then $\alpha\in\safe(s)$ iff player~$\Box$ is 
able to ensure that the play satisfies these two conditions: all 
resources are reloaded somewhere along the play; and the 
$i$-th resource is reloaded for the first time before it is decreased by at least 
$\alpha(i)$. %
Now if we %
augment the states 
of $\C$ with an information about which resources have been reloaded at least once in 
previous steps, then the objective of player~$\Box$ is actually to reach a 
state which tells us that all resources were reloaded at least once. 

So the algorithm constructs a game $\wh{\C}$ by augmenting states of
$\C$ with an information about which resources have been reloaded at
least once, and by substituting updates of already reloaded resources
(i.e., the corresponding components of the labels) with zeros. Note
that the construction of $\wh{\C}$ closely resembles the construction
of games $\C(\pi)$ from the previous section. However, in two-player
case we cannot fix an order in which resources are to be reloaded,
because the optimal order depends on a strategy chosen by player
$\Diamond$. Thus, we need to remember exactly which resources have been
reloaded in the past (we only need to remember the set of resources that 
have been reloaded, but not the order in which they were reloaded).

We leave the formal construction of game $\wh{C}$ to 
Appendix~\ref{app-restricted} together with a proof of the following theorem which only states the final reults.

\begin{theorem}
\label{thm:decreasing}
There is a consumption game $\wh{\C}$ of size $\calO(2^d \cdot |S|)$, branching degree $\calO(S)$ and with maximal acyclic path of length $\calO(|S|\cdot d)$, with the following properties: $\wh{\C}$ is constructible in time $\calO(2^d \cdot (|S|+|E|))$ and for every vector $\alpha$ we have $\alpha\in\safe(s)$ in $\C$ iff $\alpha$ is a safe distance from $(s,\emptyset)$ to $r$ in $\wh{C}$. 
\end{theorem}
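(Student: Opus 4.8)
The plan is to turn the informal construction preceding the theorem into a precise definition of $\wh{\C}$ and then to verify the complexity bounds and the correctness of the reduction separately, with the correctness resting on the core observation stated above (for decreasing $\C$, with empty-safe states removed, $\alpha\in\safe(s)$ iff player~$\Box$ can force a play in which (a) every resource is reloaded and (b) resource~$i$ is reloaded for the first time before its cumulative decrease reaches $\alpha(i)$).

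First I would define $\wh{\C}$ explicitly. Its states are the pairs $(s,R)$ with $s\in S$ and $R\subseteq\{1,\dots,d\}$, together with a fresh target $r$; the partition is inherited from $\C$ (so $(s,R)$ belongs to player~$\Box$ iff $s\in S_\Box$), and $r$ is given a self-loop $r\tran{\vec{0}} r$ so that the transition system is total. For every transition $s\tran{\delta} t$ of $\C$ and every $R$ I add an edge $(s,R)\tran{\hat\delta}(t,R')$, where $R'=R\cup\{i\mid \delta(i)=\omega\}$ and $\hat\delta(i)=0$ whenever $i\in R'$ while $\hat\delta(i)=\delta(i)$ otherwise; every $\hat\delta(i)$ then lies in $\Zset_{\leq 0}$, so $\wh{\C}$ is an $\omega$-free game to which the safe-distance machinery of Section~\ref{sec:reachability} applies. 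Finally I add $(s,\{1,\dots,d\})\tran{\vec{0}} r$ for every $s$. The complexity bounds are then immediate: there are $2^d|S|+1$ states; the out-degree of $(s,R)$ equals that of $s$ plus at most one target edge, giving branching degree $\calO(|S|)$; along an acyclic path the set $R$ is nondecreasing, hence takes at most $d+1$ distinct values, and for each such value at most $|S|$ states $(s,R)$ can appear, so the longest acyclic path has length $\calO(|S|\cdot d)$; and iterating over the $2^d$ subsets and the transitions of $\C$ builds the game in time $\calO(2^d(|S|+|E|))$.

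The heart of the proof is the equivalence ``$\alpha\in\safe(s)$ in $\C$ iff $\alpha$ is a safe distance from $(s,\emptyset)$ to $r$ in $\wh{\C}$'', which I would derive from the core observation. Since the extra coordinate $R$ is determined by the history, $\Box$-strategies in $\C$ and in $\wh{\C}$ correspond to one another by projecting and lifting histories; moreover, because post-reload updates are zeroed in $\wh{\C}$, the decrease that $\wh{\C}$ accumulates in resource~$i$ along a play is exactly the decrease suffered by resource~$i$ in $\C$ \emph{before its first reload}, and reaching $r$ is exactly the event that all resources have been reloaded. For the forward direction I would take a safe strategy and invoke Lemma~\ref{lem-claim1} to obtain one that reloads every resource within the first $d\cdot|S|$ steps against every $\pi$; safety forces the pre-reload decrease of resource~$i$ below $\alpha(i)$, so the lifted strategy witnesses that $\alpha$ is a safe distance to $r$. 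For the converse I would lift a safe-distance strategy $\hat\sigma$ to a two-phase strategy in $\C$: in the first phase play $\hat\sigma$, reloading each resource at the moment of its first reload to a value large enough that it is still at least $d\cdot\ell\cdot|S|$ when the all-reloaded state $t$ is reached (such a value exists because $r$ can be forced within finitely, indeed $\calO(|S|\cdot d)$, many steps by Theorem~\ref{thm:min-dist}); condition~(b) keeps every resource positive until its first reload, so the first phase is safe. At $t$ we have $\safe(t)\neq\emptyset$ as empty-safe states were removed, so by Corollary~\ref{cor-global-bound} the vector $(d\cdot\ell\cdot|S|,\dots,d\cdot\ell\cdot|S|)$ lies in $\safe(t)$; since the current load dominates it component-wise and $\safe$ is upward closed, the second phase follows a safe strategy from $t$ forever.

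I expect the main obstacle to be the converse direction, and specifically the bookkeeping that turns ``reach the all-reloaded state with small pre-reload decrease'' into a genuinely safe infinite play: choosing the reload values so that resources topped up early in the first phase have not dropped below $d\cdot\ell\cdot|S|$ by its end, which needs the finite bound on the first-phase length, and then invoking Corollary~\ref{cor-global-bound} to continue. Equally delicate is establishing the core observation itself, where the decreasing hypothesis is precisely what guarantees that a safe strategy \emph{must} reload every resource (an unreloaded resource decreases on every cycle and is eventually exhausted) and, conversely, that nothing beyond the first reload of each resource matters once the finite-reload property of Corollary~\ref{cor-frp} is available. Pinning down this observation cleanly is where the real content lies.
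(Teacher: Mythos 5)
Your construction and both halves of the correctness argument follow the paper's proof almost line by line: the same augmentation of states with the set of already-reloaded resources, the same zeroing of reloaded components (so that accumulated decrease in $\wh{\C}$ equals pre-first-reload decrease in $\C$), the same use of Lemma~\ref{lem-claim1} / the decreasingness argument for the forward direction, and the same two-phase lifting for the converse (phase one mimics the safe-distance strategy with reloads made large enough to survive the $\calO(|S|\cdot d)$ steps guaranteed by Theorem~\ref{thm:min-dist}, phase two switches to a safe strategy that exists by Corollary~\ref{cor-frp} and the removal of empty-safe states). There is, however, one concrete deviation that opens a genuine gap: the partition. The paper assigns \emph{every} state of the form $(s,\{1,\dots,d\})$ to player~$\Box$, even when $s\in S_\Diamond$; you instead inherit the partition from $\C$, so that $(s,R)$ belongs to $\Diamond$ whenever $s\in S_\Diamond$, including when $R=\{1,\dots,d\}$.

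With your partition, your key gloss --- ``reaching $r$ is exactly the event that all resources have been reloaded'' --- is not true by construction, and the forward direction does not go through as written. Once $R=\{1,\dots,d\}$, every remaining label is $\vec{0}$, so there is no resource pressure whatsoever; the transition to $r$ is available only as one choice among others, and at a $\Diamond$-owned state player~$\Diamond$ will simply never take it. If $\Diamond$ could keep the play inside $\Diamond$-owned states forever after the last reload, then $\alpha$ would fail to be a safe distance to $r$ even though $\alpha\in\safe(s)$ in $\C$. Your construction is in fact still correct, but only because of an argument you never supply: in a decreasing game from which all states with $\safe=\emptyset$ have been removed, $\Diamond$ cannot confine the play to $S_\Diamond$ forever, since a set $T\subseteq S_\Diamond$ in which every state has a successor in $T$ is a trap whose states all have empty $\safe$ ($\Diamond$-transitions carry no $\omega$, and by decreasingness every cycle inside $T$ strictly decreases every resource, so any play trapped in $T$ exhausts them); hence after the last reload the play must eventually reach a $\Box$-state, which then moves to $r$. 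You must either prove this no-trap property (it uses both the decreasing hypothesis and the preprocessing, so it is not free) or adopt the paper's simpler fix of declaring all states $(s,\{1,\dots,d\})$ to belong to player~$\Box$, after which $\Box$ moves to $r$ immediately and the issue disappears; note that this reassignment causes no difficulty in the converse direction, because the lifted strategy in $\C$ switches to its second phase the moment the full set is reached, so the choices of the $\wh{\C}$-strategy at those states are never mimicked.
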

\noindent
The previous theorem shows that we can find minimal elements of $\safe(s)$ with a single call of procedure Min-dist$(\wh{\C},(s,\emptyset),r)$. Straightforward complexity analysis reveals that the worst-case running time of this algorithm is $(|S|\cdot d \cdot \ell )^{\calO(d)}$.
However, the game $\wh{\C}$ constructed during the computation is still smaller than the safety game
constructed by the generic algorithm of
Section~\ref{sec-general}. %
Moreover, the length of the longest acyclic path in $\wh{\C}$ is bounded by
$|S|\cdot d$, so the procedure \mbox{Min-dist} does not have to perform many
iterations, despite the exponential size of $\wh{\C}$. Finally, let us once again recall that the procedure Min-dist$(\wh{\C},(s,\emptyset),r)$ may actually require much less than $(|S|\cdot d \cdot \ell )^{\calO(d)}$ steps.

\section{Conclusions}
\label{sec-concl}

As it is witnessed by the results presented in previous sections,
consumption games represent a convenient trade-off between 
expressive power and computational tractability. The presented
theory obviously needs further development before it is implemented
in working software tools. Some of the issues are not yet fully
understood, and there are also other well-motivated problems
about consumption games which were not considered in this paper. 
The list of important open problems includes the following:
\begin{itemize}
\item Improve the complexity of algorithms for $\cover(s)$. This
  requires further insights into the structure of these sets.
\item Find efficient controller synthesis algorithms for objectives
  that combine safety with other linear-time properties. That is, decide
  whether player~$\Box$ has a safe strategy such that a play satisfies
  a given LTL property no matter what player~$\Diamond$ does.
\item Find algorithms for more complicated optimization problems, where
  the individual resources may have different priorities. For example,
  it may happen that fuel consumption or the price of batteries with 
  large capacity are much more important than the time spent,
  and in that case we might want to optimize some weight function
  over the tuple of all resources. It may happen (and we have concrete
  examples) that some of these problems are actually solvable even more
  efficiently than the general ones where all resources are treated 
  equally w.r.t.\ their importance.
\end{itemize} 

\noindent
The above list is surely incomplete. The problem of optimal resource 
consumption is rather generic and appears in many different contexts,
which may generate other interesting questions about consumption 
games.

\bibliographystyle{splncs03}
\bibliography{main_cav}

\newpage
\appendix

\noindent
\begin{center}
\huge\bf Technical Appendix
\end{center}
\section{Proofs about Streett Games}
\label{app-Street}

Let  $\C = (S,E,(S_{\Box},S_{\Diamond}),L)$ be a consumption game 
of dimension~$d$.

\subsection*{Proof of Lemma~\ref{lem-con-Street}}
First we observe that every strategy for the players in the Streett game 
is a strategy in the original consumption game and vice versa.
\begin{enumerate}
\item To prove the result of the first item, we show the contrapositive,
i.e., we show that if player~$\Box$ does not have a winning strategy 
in $s$ in the Streett game, then $\safe(s)=\emptyset$. 
If player~$\Box$ does not have a winning strategy, then by the determinacy
of Streett games and by existence of memoryless winning strategies for 
Rabin winning condition (complement of Streett winning condition)~\cite{EJ88}, 
there exists a memoryless winning strategy $\pi$ for player~$\Diamond$.
Fix the memoryless winning strategy $\pi$ and consider the graph 
$\calS_{\C}^\pi$ obtained by fixing the strategy $\pi$ for player $\Diamond$ (in $\calS_{\C}^{\pi}$ 
only player~$\Box$ makes choices).
Consider any strongly connected component 
(scc) $U$ reachable from $s$ in $\calS_{\C}^{\pi}$ (we denote by $U$ the set 
of edges of the scc). 
Then there must exist $1 \leq i \leq d$ such that $U \cap G_i\neq \emptyset$ 
and $U \cap R_i=\emptyset$; otherwise, against $\pi$, player~$\Box$ can execute infinitely often
exactly the edges of $U$ and ensure the Streett winning condition,
contradicting $\pi$ is a winning strategy for player $\Diamond$.
Consider the strategy $\pi$ in the original consumption game, and  
arbitrary strategy $\sigma$ for player~$\Box$.
Let $w=\play_{\sigma,\pi}(s)$ be the play given $\sigma$ and $\pi$ from $s$, 
and let $U'=\inf(w)$.
Then $U'$ is a scc in $\calS_{\C}^\pi$, and hence there exists 
$1 \leq i \leq d$ such that $U' \cap G_i\neq \emptyset$ and 
$U' \cap R_i=\emptyset$, i.e., in dimension $i$ there exists at least one 
transition $e_1$ where the weight
is negative and there is no transition where the weight is positive.
Since $e_1$ is executed infinitely often and there is no increase in dimension 
$i$ from some point on (as the set executed infinitely often is $U'$ with no 
$\omega$-weight transition in dimesion $i$), it follows that irrespective of 
the initial vector $\alpha$ we have $(s,\alpha) \not\in \safe(s)$.
This completes the first item of the result.

\item We now show the second item. 
If there is a winning strategy in the Streett game in $s$, 
then it follows from the results of~\cite{Zie98,DJW97} that there
exists a winning strategy $\sigma$ of memory size at most $d!$.
Fix the winning strategy $\sigma$ of memory size $d!$, and 
consider the graph $\calS_{\C}^{\sigma}$ obtained by fixing $\sigma$ 
(it is the synchronous product obtained from $\calS_{\C}$ and the 
deterministic transducer of size $d!$ describing $\sigma$).
In $\calS_{\C}^\sigma$ only player~$\Diamond$ makes choices, and 
the size of the vertex set of the graph is $d! \cdot |S|$.
Consider a cycle $U$ reachable from $(s,m_0)$ in $\calS_{\C}^\sigma$,
where $m_0$ is the initial memory state.
The cycle $U$ must satisfy that for all $1 \leq i \leq d$ if 
$U \cap G_i \neq \emptyset$, then $U \cap R_i\neq \emptyset$; 
otherwise visiting exactly the cycle $U$ infinitely often against 
$\sigma$ player~$\Diamond$ can contradict that $\sigma$ is a winning 
strategy.
It follows that given the strategy $\sigma$, given any arbitrary 
counter-strategy $\pi$ for player~$\Diamond$, for all $1 \leq i \leq d$, 
within every $d! \cdot |S|$ visits of $G_i$ there is at least one visit to 
$R_i$.
In other words, in the consumption game, the strategy $\sigma$ ensures that 
for every dimension $i$, within $d! \cdot |S|$ visits of negative weight 
transitions there is at least one $\omega$-weight transition.
Since the maximum negative weight is at most $\ell$, it follows that starting
with weight vector 
$(d! \cdot |S|\cdot \ell +1 ,\ldots,d! \cdot |S|\cdot \ell +1)$,
and whenever a $\omega$-weight transition is visited in dimension $i$ 
reloading upto value $d!\cdot |S| \cdot \ell +1$, ensures that 
$(d! \cdot |S|\cdot \ell +1 ,\ldots,d! \cdot |S|\cdot \ell +1) 
\in \safe(s) \cap \cover(s)$.
\end{enumerate}
The desired result follows.

\subsection*{Proof of Lemma~\ref{lem-Street-con}}
 Consider the Streett game $(\C_{\calS})_{\C}$ from Lemma \ref{lem-con-Street} that corresponds to the consumption game $\C_{\calS}$. It is easy to see that $(\C_{\calS})_{\C}=\C$. The rest immediately follows from Lemma~\ref{lem-con-Street}.

\section{Proofs Based on Analyzing Abstract Load Vectors}
\label{app-abstract-load}

In this section we present full proofs of Lemma~\ref{lem-safe-min-bounded}
and Lemma~\ref{lem-cover-min-bounded}. These proofs are based on 
similar underlying ideas, inspired by techniques originally
presented in~\cite{BJK}.

For the rest of this section, we fix a consumption game
$\C = (S,E,(S_{\Box},S_{\Diamond}),L)$ of dimension~$d$, and we use
$\ell$ to denote the maximal finite $|\delta(i)|$ such that 
$1 \leq i \leq d$ and $\delta$ is a label of some transition.

Let us recall the notions introduced in Section~\ref{sec-general}.
An \emph{abstract load vector} $\mu$ is an element of $(\Zset_{>0}^\omega)^d$. 
For every $1 \leq i \leq d$, we say that $\mu(i)$ is \emph{precise}
if $\mu(i) \neq \omega$. The \emph{precision} of $\mu$ is the
number of its precise components. The \emph{type}
of~$\mu$ is the set $T$ of all indexes $i$ such that $\mu(i) \neq \omega$.
For every abstract load vector $\mu$, every
$1 \leq j \leq d$, and every $k \in \Zset_{>0}^{\omega}$, we use
$\mu[j/k]$ to denote the abstract load vector obtained from $\mu$
by replacing its $j$-th component with~$k$. Further, for every 
$\alpha \in (\Zset_{>0})^d$ and every type $T$ we define the
\emph{corresponding} abstract load vector $\mu_\alpha$ of type $T$
where $\mu_\alpha(m) = \alpha(m)$ for all $m \in T$. We also use
$\vec{1}_j$, where $1 \leq j \leq d$, to denote the vector whose
$j$-th component is equal to~$1$ and the other components 
are equal to~$0$.

The standard componentwise ordering is extended also to abstract 
load vectors by stipulating that $c < \omega$
for every $c \in \Zset$. Given an abstract load vector~$\mu$
and a vector $\alpha \in (\Zset_{>0})^d$, we say that $\alpha$ 
\emph{matches $\mu$} if $\alpha(i) = \mu(i)$ for all 
$1 \leq i \leq d$ such that $\mu(i) \neq \omega$. Further, we say that 
$\mu$ is \emph{compatible} with $\safe(s)$ (or $\cover(s)$)
if there is some $\alpha \in \safe(s)$  (or $\alpha \in \cover(s)$) 
that matches~$\mu$. The set of all abstract load vectors that 
are compatible with $\safe(s)$ (or $\cover(s)$) is denoted by
$\scomp(s)$ (or $\ccomp(s)$).

Let $\mu \in \scomp(s)$, and
$\alpha \in \safe(s)$ a vector that matches $\mu$. 
Let $K \in \Zset_{>0}$. We say that a 
strategy $\sigma$ which is safe in $(s,\alpha)$ \emph{stays above $K$
with respect to $\mu$} if for every strategy $\pi$ of 
player~$\Diamond$ and every configuration $(t,\beta)$ visited along
$\play_{\sigma,\pi}(s,\alpha)$ we have that $\beta(i) \geq K$
for every $1 \leq i \leq d$ such that $\mu(i) = \omega$.
The following lemma is immediate.

\begin{lemma}
\label{lem-K}
   Let $\mu \in \scomp(s)$.
   Then for every $K \in \Zset_{>0}$ there is a vector $\alpha \in
   \safe(s)$ and a safe strategy $\sigma$ for $(s,\alpha)$ such that 
   $\alpha$ matches~$\mu$ and $\sigma$ stays above $K$
   with respect to $\mu$.
\end{lemma}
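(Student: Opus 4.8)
The plan is to exploit the upward closure of $\safe(s)$ together with a simple ``buffer shift'' of a witnessing strategy. Since $\mu \in \scomp(s)$, I can fix a vector $\alpha_0 \in \safe(s)$ that matches $\mu$ and a strategy $\sigma_0$ that is safe in $(s,\alpha_0)$. Writing $T$ for the type of $\mu$ (its precise indices) and letting $b = \sum_{j \colon \mu(j) = \omega} \vec{1}_j$ be the indicator of the $\omega$-indices, I would set $\alpha = \alpha_0 + (K-1)\cdot b$. Because $\alpha \geq \alpha_0$ and $\safe(s)$ is upward closed, $\alpha \in \safe(s)$; moreover $\alpha$ and $\alpha_0$ coincide on $T$, so $\alpha$ still matches $\mu$. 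It thus remains to produce a strategy that is safe in $(s,\alpha)$ and keeps every $\omega$-component at least $K$.

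First I would define the required strategy $\sigma$ by coupling its plays with those of $\sigma_0$, maintaining the invariant that the current load always equals the load reached by $\sigma_0$ along a corresponding play plus the constant offset $(K-1)\cdot b$. Concretely, on a history $h$ starting in $(s,\alpha)$, the strategy $\sigma$ consults $\sigma_0$ on the \emph{projected} history obtained from $h$ by subtracting $(K-1)\cdot b$ from every visited load, and then mimics $\sigma_0$: it picks the same outgoing transition of $\C$ and, at $\omega$-reloads, adds exactly the same non-negative amount that $\sigma_0$ adds. Dually, whenever player $\Diamond$ selects a transition in the $\sigma$-play, the same transition is selected in the projected play; this is consistent because transitions leaving $S_\Diamond$ carry no $\omega$ and hence have a forced, load-independent effect.

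Next I would verify the invariant by induction on the length of the play. The decrements prescribed by finite components are forced and identical in both plays, so they preserve the offset; the reload amounts on $\omega$-components are copied verbatim and are non-negative, so they too preserve the offset and constitute legal reloads, since the new value is at least the current one. Enabledness transfers as well: as the offset $(K-1)\cdot b$ is componentwise non-negative, every load in the $\sigma$-play dominates the corresponding load in the projected play, so any transition that does not fail in the projected play (which is a genuine safe play of $\sigma_0$) does not fail in the $\sigma$-play either.

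The conclusion is then immediate from the invariant. As $\sigma_0$ is safe, no projected play visits $F$, so every projected load is positive in all components; adding the offset, every load in $\play_{\sigma,\pi}(s,\alpha)$ is positive — so $\sigma$ is safe — and, on the $\omega$-indices, is at least $1 + (K-1) = K$, so $\sigma$ stays above $K$ with respect to $\mu$. The only real care needed is in setting up the projection so that $\sigma$ is a well-defined history-dependent strategy and the coupling with an arbitrary $\pi$ is legitimate; once that bookkeeping is in place the statement follows, which is why it is labelled immediate.
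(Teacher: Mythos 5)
Your proof is correct: the offset-and-coupling argument (pad the $\omega$-components of a witnessing vector by $K-1$, let the new strategy mimic the witnessing safe strategy transition-for-transition and reload-for-reload, and transfer enabledness and safety through the componentwise domination of loads) is precisely the monotonicity reasoning the statement rests on. The paper itself offers no proof --- it declares the lemma ``immediate'' --- so your write-up just supplies the bookkeeping (the projection of histories, the coupling of $\Diamond$'s choices, and the invariant that loads differ by the constant offset) that the authors left implicit, and it does so correctly.
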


\noindent
Now we have all the tools needed to prove Lemma~\ref{lem-safe-min-bounded}.

\subsection*{Proof of Lemma~\ref{lem-safe-min-bounded}}
We show that for all $0 \leq i \leq d$ and $s\in S$, the precise components
of all \emph{minimal} $\mu \in \scomp(s)$ with precision~$i$
are bounded by $i \cdot \ell \cdot |S|$.
This clearly suffices, because the minimal
$\mu \in \scomp(s)$ with precision~$d$ are exactly the minimal
elements of $\safe(s)$.

We proceed by induction on~$i$. The case when $i=0$ is immediate,
because the only abstract vector with precision~$0$ is
$(\omega,\ldots,\omega)$, and the claim holds trivially. Now
assume that the claim holds for some~$i<d$. Let us fix some 
minimal  $\mu \in \scomp(s)$ with precision~$i$. Let $T$ be 
the type of $\mu$, and let $j \not\in T$, $1 \leq j \leq d$, 
be an arbitrary (but fixed) index. Further, let $k$ be the
\emph{least} number such that $\mu[j/k] \in \scomp(s)$, and
let $K \in \Zset_{>0}$ be (some) number such that 
\begin{itemize}
\item for every $t \in S$ and every $\mu' \in \scomp(t)$ of type 
  $T \cup \{j\}$ such that $\mu' \leq \mu[j/k]$ we have that the 
  vector obtained  from $\mu'$ by substituting every $\omega$-component 
  with~$K$ belongs to $\safe(t)$;
\item for every  $t \in S$ and every \emph{minimal} $\mu' \in \scomp(t)$
  of precision~$i$ we have that the 
  vector obtained  from $\mu'$ by substituting every $\omega$-component 
  with~$K$ belongs to $\safe(t)$.
\end{itemize}
Such a $K$~clearly exists because the total number of all such 
$\mu'$ is finite.

We claim that $k \leq (i+1) \cdot \ell \cdot |S|$. Assume the converse.
We show that then $\mu[j/k{-}1] \in \scomp(s)$, which contradicts the
minimality of~$k$.

Since $\mu[j/k] \in \scomp(s)$, due to Lemma~\ref{lem-K}
there exists $\alpha \in \safe(s)$ and a strategy $\sigma$
for player~$\Box$ such that 
\begin{itemize}
\item $\alpha$ matches $\mu[j/k]$,
\item $\sigma$ is safe in $(s,\alpha)$ and stays above $K$
  with respect to $T \cup \{j\}$.
\end{itemize}
It suffices to show that there is a safe strategy $\sigma'$ 
for player~$\Box$ in $(s,\alpha[j/k{-}1])$. The strategy $\sigma'$ 
keeps mimicking the moves of $\sigma$ until one of the following
conditions is satisfied:
\begin{enumerate}
  \item[C1.] The play visits a configuration $(t,\gamma)$ such that
     the corresponding abstract load vector $\mu_\gamma$ for type 
     $T \cup\{j\}$ satisfies $\mu_\gamma \in \scomp(t)$ and 
     $\mu_\gamma \leq \mu[j/k]$.
  \item[C2.] The play visits a configuration $(t,\gamma)$ such that
     $t \in S_\Box$ and there is a transition 
     $(t,\gamma) \tran{\delta} (v,\beta)$
     such that $\delta(m) = \omega$ for some $m \in T \cup \{j\}$ and 
     the configuration $(v,\beta+\vec{1}_j)$ is safe.
\end{enumerate}
First we show that if player~$\Box$ follows the strategy $\sigma'$ from
$(s,\alpha[j/k{-}1])$, then the \mbox{$j$-th} resource cannot be decremented
by more than $|S|-1$ times along a play unless C1 or C2 happens.
Assume the converse, i.e., player~$\Diamond$ has a strategy $\pi$
such that the $j$-th resource is decremented $|S|$ 
times along $\play_{\sigma',\pi}(s,\alpha[j/k{-}1])$ without encountering
C1 or C2. Then there must be two configurations of the form
$(t,\gamma)$ and $(t,\varrho)$ such that for every $m \in T \cup \{j\}$
we have that $\mu[j/k](m) \geq \gamma(m) \geq \varrho(m)$ and
$\gamma(j) > \varrho(j)$. We show that $(t,\gamma)$ satisfies C1.
Clearly, $\mu_\gamma \leq \mu[j/k]$. It remains to show that
$\mu_\gamma \in \scomp(t)$. This is achieved by identifying
another abstract load vector $\mu'$ of type $T \cup \{j\}$ such that
$\mu' \leq \mu_\gamma$ and $\mu' \in \scomp(t)$.  Since $\sigma'$ mimics 
$\sigma$ and $\sigma$ is safe in $(s,\alpha[j/k])$, we have that 
$(t,\varrho+\vec{1}_j)$ is safe. However, this means that the corresponding
vector $\mu_{\varrho+\vec{1}_j}$ for type $T \cup\{j\}$ satisfies
$\mu_{\varrho+\vec{1}_j} \in \scomp(t)$, and we also have that 
$\mu_{\varrho+\vec{1}_j} \leq \mu_\gamma$ as required.

Obviously, if $C1$ happens, then the configuration $(t,\gamma)$ is safe
because $\sigma$ stays above~$K$ and $\sigma'$ mimics $\sigma$. Hence,
player~$\Box$ can simply switch to a safe strategy for $(t,\gamma)$.
If $C2$ happens before $C1$, i.e., the play visits a configuration
of $(t,\gamma)$ such that $t \in S_\Box$ and there is a transition 
$(t,\gamma) \tran{\delta} (v,\beta)$
such that $\delta(m) = \omega$ for some $m \in T \cup \{j\}$ and 
the configuration $(v,\beta+\vec{1}_j)$ is safe,
then $\beta(j) \geq i \cdot \ell \cdot |S|$, because the \mbox{$j$-th} 
resource could not be decremented by more than $|S|-1$ transitions so far.
Now consider the corresponding abstract load vector $\mu_{\beta+\vec{1}_j}$ for 
type $(T \cup \{j\}) \smallsetminus \{m\}$. Since
$\mu_{\beta+\vec{1}_j} \in \scomp(v)$ and its precision is $i$, 
we can apply induction 
hypothesis and conclude that there is another $\mu' \in \scomp(v)$
of the same type $(T \cup \{j\}) \smallsetminus \{m\}$ such that
$\mu' \leq \mu_{\beta+\vec{1}_j}$ and all precise components of $\mu'$ are bounded
by $i \cdot \ell \cdot |S|$. However, this means that
$\mu' \leq \mu_\beta$, where $\mu_\beta$ is the abstract load 
vector for type $(T \cup \{j\}) \smallsetminus \{m\}$ corresponding to
$\beta$. Hence, if $\sigma'$ executes the transition 
$(t,\gamma) \tran{\delta} (v,\beta')$ where $\beta'$ is the same as $\beta$
but the $m$-th component is reloaded to at least~$K$, then $(v,\beta')$
is a safe configuration, and $\sigma'$ can switch to a safe strategy for 
$(v,\beta')$.

If $C1$ and $C2$ are not encountered at all, then the $j$-th resource
cannot be decremented by more that $|S|-1$ transitions at all, and hence
$\sigma'$ can mimic $\sigma$ safely forever.

So, we have shown that if $\mu \in \scomp(s)$ is a minimal abstract load
vector of type $T$ with precision~$i$, then for every $j \not\in T$,
$1 \leq j \leq d$, we have that the least $k$ such that 
$\mu[j/k] \in \scomp(s)$ is bounded by  $(i+1) \cdot \ell \cdot |S|$.
However, not all minimal elements of $\scomp(s)$ with precision $i+1$ 
are necessarily of this form. So, let $\mu' \in \scomp(s)$ be some 
(unspecified) minimal abstract load vector of type $T'$ with precision 
$i+1$, and suppose that $\mu'(m) > (i+1) \cdot \ell \cdot |S|$ for some
$m \in T'$. Let $\mu''$ be the abstract load vector of type 
$T' \smallsetminus \{m\}$ obtained from $\mu'$ be replacing $\mu'(m)$
with~$\omega$. Since the precision of $\mu''$ is~$i$ and
$\mu'' \in \scomp(s)$, we can apply induction hypothesis and conclude
that there is a minimal $\hat{\mu} \in \scomp(s)$ of type 
$T' \smallsetminus \{m\}$ such that $\hat{\mu} \leq \mu''$. By the
above presented argument, the vector $\hat{\mu}[m/k]$ of type 
$T'$, where $k = (i+1) \cdot \ell \cdot |S|$,
belongs to $\scomp(s)$. Since $\hat{\mu}[m/k] \leq \mu'$ and the
$m$-component of $\hat{\mu}[m/k]$ is even strictly smaller, 
we have a contradiction with the minimality of $\mu'$.

\subsection*{Proof of Lemma~\ref{lem-cover-min-bounded}}

Now we show how to compute the set of minimal elements of $\cover(s)$
for every $s \in S$. For every $0 \leq i \leq d$, let $B_i \in \Zset_{>0}$
be the least number such that for every $s \in S$, every
minimal abstract load vector $\mu \in \ccomp(s)$ of precision $i$, and
every $\alpha \in \Zset_{>0}^d$ obtained from $\mu$ by substituting
every $\omega$ with $B_i$ we have that $\alpha \in \cover(s)$.

  Similarly as in the proof of Lemma~\ref{lem-safe-min-bounded}, we
  show that for all $0 \leq i \leq d$ and $s\in S$, the precise components
  of all \emph{minimal} $\mu \in \ccomp(s)$ with precision~$i$
  are bounded by $(d \cdot \ell \cdot |S|)^{i!}$. By induction on~$i$.
  The base case when $i = 0$ is immediate. Now
  assume that the claim holds for some~$i<d$. Let us fix some 
  minimal  $\mu \in \ccomp(s)$ with precision~$i$. Let $T$ be 
  the type of $\mu$, and let $j \not\in T$, $1 \leq j \leq d$, 
  be an arbitrary (but fixed) index. Consider the least 
  $k$~such that $\mu[j/k] \in \ccomp(s)$. We argue that 
  $k \leq (d \cdot \ell \cdot |S|)^{(i+1)!}$. To see this, realize
  that we can construct another consumption game $\C_{\mu}$ of dimension
  $d-i$ where the current values of all resources indexed by~$T$
  are encoded explicitly in the states of $\C_{\mu}$. By applying
  induction hypothesis, we can conclude that the number of states
  of $\C_{\mu}$ is bounded by $|S| \cdot (d \cdot \ell \cdot |S|)^{i \cdot i!}$.
  By applying Corollary~\ref{cor-global-bound} to $\C_{\mu}$, we can conclude
  that a sufficient value for $k$ is at most
  \mbox{$(d-i) \cdot \ell \cdot |S| \cdot (d \cdot \ell \cdot |S|)^{i \cdot i!}$},
  which is bounded by $(d \cdot \ell \cdot |S|)^{(i+1)!}$. The rest of the
  argument is the same as in the proof of Lemma~\ref{lem-safe-min-bounded}

\subsection*{Proof of Upper Bounds of Theorem~\ref{thm-cover}}
\noindent
Due to Lemma~\ref{lem-cover-min-bounded}, 
all components of all minimal elements
of $\cover(s)$ are bounded by $(d \cdot \ell \cdot |S|)^{d!}$.
Hence, there are $(d \cdot \ell \cdot |S|)^{d \cdot d!}$ possible candidates 
for the minimal elements of $\cover(s)$, and we can (in principle)
check them by computing a winning region in a safety game of size
$|S| \cdot (d \cdot \ell \cdot |S|)^{d\cdot d!}$. Hence, the set all
minimal elements of all $\cover(s)$, $s \in S$, is computable in time
$(d \cdot \ell \cdot |S|)^{\calO(d\cdot d!)}$. 
Thus, we obtain the
upper complexity bounds of Theorem~\ref{thm-cover}.

\section{Proofs of Lower Bounds}
\label{app-lb}
The purpose of this appendix is to provide proofs of lower bounds from Theorem \ref{thm-safe}, Theorem \ref{thm-cover} and Theorem \ref{thrm:dec-one-membership}. More precisely, we will prove the following proposition:

\tikzstyle{sbox}=[thick,draw,minimum size=1.6em,inner sep=0.1em]
\tikzstyle{sdiam}=[diamond,thick,draw,minimum size=1.8em,inner sep=0.1em]
\tikzstyle{tran}=[thick,draw,->]
\tikzstyle{dummy}=[inner sep=0em]

\begin{proposition}
\label{prop:hard-main}
The problems whether a given vector $\alpha \in \Zset^d_{\geq 0}$ is an element of $\safe(s)$, element of $\cover(s)$, minimal element of $\safe(s)$ or minimal element of $\cover(s)$, in state $s$ of given $d$-dimensional consumption game $\C$ are:
\begin{itemize}
 \item $\NP$-hard for 1-player consumption games.
 \item $\PSPACE$-hard for 2-player consumption games.
\end{itemize}
Moreover the problem whether a given vector $\alpha \in \Zset^d_{\geq 0}$ is a minimal element of $\safe(s)$ is $\DP$-hard for 1-player consumption games. 
Moreover, all of these lower bounds hold even if we restrict ourselves to games where components of all labels are nonzero.
\end{proposition}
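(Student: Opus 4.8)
The plan is to prove all three lower bounds by separate polynomial-time reductions --- from \textsc{SubsetSum} for the $\NP$ bound, from a $\DP$-complete pairing of two \textsc{SubsetSum} instances for the $\DP$ bound, and from \textsc{Tqbf} for the $\PSPACE$ bound --- all built on one arithmetic gadget, and then to explain a uniform transformation that enforces the ``all labels nonzero'' clause.

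For the $\NP$-hardness in the one-player case I would reduce from \textsc{SubsetSum}: given $a_1,\dots,a_n \in \Zset_{>0}$ and a target $t$, decide whether some $I \subseteq \{1,\dots,n\}$ satisfies $\sum_{i\in I} a_i = t$; write $S = \sum_{i=1}^n a_i$. The gadget is a $2$-dimensional one-player game with states $s_0,\dots,s_n,r$, where from each $s_{i-1}$ there are two transitions to $s_i$, an \emph{include} transition labeled $(-a_i,0)$ and an \emph{exclude} transition labeled $(0,-a_i)$, and $r$ carries a self-loop labeled $(\omega,\omega)$ reached from $s_n$. Setting $\alpha = (t+1,\,S-t+1)$ and tracking both coordinates shows that the first coordinate stays positive iff the chosen included sum never exceeds $t$, and the second iff the excluded sum never exceeds $S-t$; since the two sums are monotone and add up to $S$, a play survives to $s_n$ (and then loops safely at $r$) \emph{iff} the included sum equals exactly $t$. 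Hence $\alpha \in \safe(s_0)$ iff the instance is positive; the same play witnesses $\alpha \in \cover(s_0)$ because all configurations before $r$ satisfy $\beta \le \alpha$ and at $r$ player~$\Box$ chooses to reload only up to $\alpha$. Finally $\alpha$ is automatically \emph{minimal}: lowering either coordinate by one turns the survival condition into ``included sum $\le t-1$'' respectively ``$\ge t+1$'', which is incompatible with an exact hit, so $\alpha-\vec{1}_i\notin\safe(s_0)\supseteq\cover(s_0)$. This single construction therefore yields $\NP$-hardness for all four variants simultaneously.

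For the $\DP$-hardness of minimality I would reduce from the canonical $\DP$-complete problem ``$A$ is a yes-instance and $B$ is a no-instance'' for two \textsc{SubsetSum} instances, exploiting the decomposition that $\alpha$ is minimal in $\safe(s)$ iff $\alpha\in\safe(s)$ (an $\NP$ condition) and $\alpha-\vec{1}_i\notin\safe(s)$ for every $i$ (a $\coNP$ condition). I place the exact-sum gadget for $A$ on one coordinate block so that membership of $\alpha$ encodes ``$A$ yes'' and the $A$-coordinates are forced tight. On a distinguished coordinate $c$ I install a second gadget that is passable at the full budget irrespective of $B$ but whose budget can be lowered by one \emph{iff} $B$ is a yes-instance --- achieved by routing an optional ``cheaper'' branch through an exact-sum test for $B$. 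Then $\alpha-\vec{1}_c\in\safe(s)$ iff $B$ is yes, so coordinate $c$ is tight iff $B$ is no, and combined with the already tight $A$-coordinates, $\alpha$ is minimal iff ($A$ yes and $B$ no). For the $\PSPACE$-hardness I would reduce from \textsc{Tqbf}: variables are assigned along a path in quantifier order, with existential variables assigned in $\Box$-states and universal ones in $\Diamond$-states --- respecting that only $\Box$-states may carry $\omega$-labels, since $\Diamond$ only needs to \emph{consume}. The assignment is recorded in per-variable resource coordinates rather than in the (necessarily polynomial) state space, and a verification phase in which $\Diamond$ challenges a clause and $\Box$ must exhibit a satisfying literal consumes these coordinates so that all clause resources can be kept positive iff the committed assignment satisfies $\phi$; the budget $\alpha$ is chosen so that $\alpha\in\safe(s)$ iff player~$\Box$ wins against every $\Diamond$-strategy, i.e.\ iff the formula is true, and the $\cover$ and minimality variants follow by the same tightness bookkeeping as above.

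The main obstacle I anticipate is the $\PSPACE$ construction: forcing the literal exhibited during verification to be \emph{consistent} with the value committed during assignment, using only resource consumption and without reloads on $\Diamond$-moves, is delicate and is precisely the step that genuinely requires two players and the full configuration space. The $\DP$ combination is the second-hardest point, since the cheaper-branch gadget on coordinate $c$ must be passable at full budget yet strictly improvable exactly when $B$ is solvable. To meet the ``all labels nonzero'' requirement I would not merely replace each $0$ component by $-1$ (which would introduce choice-dependent slack and destroy tightness); instead, whenever one branch at a step is made to consume an extra unit of some coordinate, I add a matching unit to the competing branch, so that the total forced decrement of each coordinate is the same along every play of a given length. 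This shifts each budget by a fixed constant and preserves both feasibility and tightness to within one, so that --- after a uniform inflation of $\alpha$ --- the identical analysis applies verbatim to each of the three reductions.
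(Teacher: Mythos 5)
Your one-player gadget is correct and takes a genuinely different route from the paper: you reduce from \textsc{SubsetSum} in fixed dimension $d=2$ with binary-encoded weights, whereas the paper reduces from SAT with one resource per clause, and your exact-hit argument (complementary budgets $t+1$ and $S-t+1$, shifted by a uniform constant after the nonzero-label conversion) does give hardness of membership and minimality for $\safe$ and $\cover$ simultaneously. The genuine gap is the $\PSPACE$ case: you never actually construct the two-player reduction, and you yourself flag its only nontrivial step --- forcing the literal exhibited in your verification phase to be consistent with the assignment recorded in per-variable resources --- as an unresolved obstacle. As written, the central two-player claim of the proposition is unproven. The paper's construction shows that this challenge/verification architecture is unnecessary: it keeps one resource per \emph{clause}, walks through the variables in quantifier order (existential choices at $\Box$-states, universal at $\Diamond$-states), and when literal $L_i$ is chosen it first decrements every clause resource $j$ with $L_i$ in $C_j$ by $2(n-i+1)$ and then reloads exactly those resources with $\omega$, all other components decreasing by $1$. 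This arithmetic makes every clause resource decrease by \emph{exactly} $2n$ before its first reload, whatever the play, so the budget $\vec{2n+1}$ is safe iff player~$\Box$ can force every clause resource to be reloaded during the assignment walk, i.e.\ iff the formula is true; the reload itself is the satisfaction witness, no consistency gadget is needed, and minimality and the $\cover$ variants follow from the same invariant.

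Your $\DP$ sketch also has a concrete hole. The ``cheaper branch'' routed through an exact-sum test for $B$ must introduce fresh coordinates for that test (you cannot reuse the $A$-coordinates, which sit at value $1$ after the $A$-gadget, nor reload them, since player~$\Box$ would reload them arbitrarily high and trivialize the test), and at the full budget $\alpha$ the default branch never touches those coordinates. Consequently $\alpha - \vec{1}_i$ remains safe for every $B$-test coordinate $i$, so $\alpha$ is \emph{never} minimal, and your reduction answers ``no'' even when $A$ is satisfiable and $B$ is not. This is repairable --- make the default branch consume each $B$-test coordinate by exactly the amount that any surviving traversal of the $B$-gadget consumes, which is a fixed number thanks to the exact-hit property --- but the repair has to be stated; it is precisely the kind of matched-decrement bookkeeping that the paper's SAT-UNSAT construction performs by threading all $m+m'+1$ resources through both gadgets $H_\varphi$ and $H_\psi$ with carefully aligned labels.
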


In the whole section, for any $n \in \Zset \cup \{\omega\}$ we denote $\vec{n}$ the vector $(n,\dots,n)$.

We will prove Proposition \ref{prop:hard-main} in two stages. First, we will show the following:

\begin{lemma}
\label{lem:np}
 The problems whether a given $\alpha \in \Zset^d_{\geq 0}$ is a (minimal) element of $\safe(s)$ ($\cover(s)$) is $\NP$-hard for 1-player consumption games and $\PSPACE$-hard for 2-player consumption games, even if we restrict ourselves to games where components of all labels are nonzero.
\end{lemma}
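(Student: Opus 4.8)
The plan is to give direct reductions from \textbf{SAT} (for the $\NP$ lower bound on $1$-player games) and from \textbf{TQBF} (for the $\PSPACE$ lower bound on $2$-player games) to the membership problem, and then to adapt each reduction to the three remaining variants (membership in $\cover(s)$, and the two ``minimal element'' versions). The central device is to use \emph{one resource per clause} and to let a transition reload (label $\omega$) exactly those clause-resources that are satisfied by the literal chosen on that transition. Fix a CNF instance with clauses $C_1,\dots,C_m$ over $x_1,\dots,x_n$ and set the dimension to $d=m$. The game is a single-pass DAG that fixes a truth assignment variable by variable and then enters an all-$\omega$ self-loop sink $r$ (which is trivially safe). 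In the $1$-player case each variable $x_i$ is handled by a state $u_i\in S_\Box$ with two outgoing transitions $u_i\tran{\delta} u_{i+1}$ (for $x_i=\mathit{true}$ and $x_i=\mathit{false}$): on the true transition we set $\delta(j)=\omega$ if the literal $x_i$ occurs in $C_j$ and $\delta(j)=-1$ otherwise, and symmetrically on the false transition. All label components are then nonzero, as required. Taking $\alpha=\vec{n}$, the resource of an \emph{unsatisfied} clause is decremented at all $n$ variable transitions and reaches $0$ (forcing a move to $F$), whereas the resource of a \emph{satisfied} clause is reloaded before it can be exhausted and can be kept positive up to $r$; hence $\alpha\in\safe(u_1)$ iff the formula is satisfiable. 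Crucially, because the DAG is traversed only once, player~$\Box$ is forced to commit to a single assignment and cannot ``reload different clauses on different loop iterations'', which is exactly what would otherwise defeat the encoding.

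For the $\PSPACE$ bound I would reduce from a prenex formula $Q_1x_1\cdots Q_nx_n\,\varphi$: the state where $x_i$ is set belongs to $S_\Box$ when $Q_i=\exists$ and to $S_\Diamond$ when $Q_i=\forall$, so that the quantifier alternation matches the move order. The one complication is that a $\Diamond$-state may not carry an $\omega$-label (Definition~\ref{def-cgame}); I therefore split a universal step into a $\Diamond$-owned \emph{choice} transition labelled $\vec{-1}$, followed by a forced $\Box$-owned \emph{reload} transition that encodes, via $\omega$'s, the clauses satisfied by the value just chosen. After padding $\alpha$ to absorb the one extra decrement per universal step, the same argument shows that $\Box$ has a safe strategy from the initial configuration iff for every choice of the universal variables $\Box$ can choose the existential ones so as to satisfy all clauses, i.e.\ iff the QBF is true.

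For $\cover(s)$ I reuse the very same games. Since in these gadgets the only way a resource value can increase is through a $\Box$-controlled $\omega$-reload, player~$\Box$ may always reload a resource to exactly $\alpha(j)$ instead of higher; one checks that doing so still leaves enough to reach the next reload or the sink $r$, so every visited vector stays $\le\alpha$. Thus any safe strategy can be capped into a covering one, giving $\alpha\in\cover(s)$ iff $\alpha\in\safe(s)$, and the $\NP$- and $\PSPACE$-hardness transfer verbatim to the covering problem.

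Finally, for the two ``minimal element'' versions I would augment the construction so that the witness $\alpha$ is \emph{minimal} precisely when it is a member, whence ``is $\alpha$ minimal'' decides satisfiability and inherits the same hardness. The natural device is a mandatory \emph{drain} segment that decrements a given coordinate exactly $\alpha(j)-1$ times with no intervening reload of $j$, so that lowering any single component of $\alpha$ by one forces an unavoidable exhaustion regardless of the assignment chosen; then $\alpha$ is minimal iff it is safe iff the formula holds. I expect this to be the main obstacle: reconciling per-coordinate tightness with the satisfiability test is delicate, because in the \emph{safe} setting player~$\Box$ may reload arbitrarily high and so create slack, while the all-labels-nonzero restriction forces every clause-resource to be touched on every transition (so one cannot simply ``freeze'' a coordinate during the choice phase). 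The resolution I would pursue is to make the binding constraint for each coordinate a forced pre-reload drain of fixed length, so that the position of the first reload of each resource is dictated by the gadget rather than by the assignment. Combining such a satisfiability gadget with a complementary unsatisfiability gadget would then also yield the $\DP$-hardness of minimal $\safe(s)$ asserted in Proposition~\ref{prop:hard-main}.
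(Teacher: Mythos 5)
Your two membership reductions (SAT for one-player games, QBF for two-player games) are correct and essentially the paper's own approach: one resource per clause, a single-pass variable-choice DAG ending in an $\omega$-self-loop sink, reloads placed exactly on the clauses satisfied by the chosen literal, quantifier alternation encoded by vertex ownership, and the no-$\omega$-at-$\Diamond$ restriction handled by separating the $\Diamond$-owned choice transition from a $\Box$-owned reload transition. (The paper performs this split for \emph{every} variable, putting all $\omega$'s on the transitions leaving the literal states; you split only the universal variables, which is an immaterial difference.)

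The genuine gap is in the minimality variants, and it is exactly the point you flag as ``delicate'' but leave unresolved. With unit decrements and $\alpha=\vec{n}$, membership and minimality do \emph{not} coincide: if the formula is satisfiable by an assignment whose literal chosen at the very first variable satisfies clause $C_j$, then resource $j$ is reloaded on the first transition, so the vector obtained from $\vec{n}$ by lowering its $j$-th entry to $1$ is still safe; hence $\vec{n}$ is safe but not minimal, and ``satisfiable iff $\vec{n}$ is a minimal element of $\safe(s)$'' is false. Your proposed fix, a drain of fixed length, does not repair this by itself, because the amount by which a resource drops before its \emph{first} reload still depends on \emph{where} in the play that reload occurs, i.e.\ on the assignment; and the position of first reloads cannot be ``dictated by the gadget'' without destroying the satisfiability encoding. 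The paper's resolution is to make the pre-reload drop assignment-independent by weighting the choice transition into the literal state $s_{L_i}$ with $-2(n-i+1)$ on precisely the coordinates of clauses containing $L_i$ (and $-1$ elsewhere), while keeping the $\omega$'s on the following transition. Then every resource decreases by at least $2n$ before it can first be reloaded, no matter how the play goes, so every safe vector is at least $\vec{2n+1}$; consequently $\vec{2n+1}$ is safe iff it is a minimal element of $\safe(s_1)$, and, since $\Box$ then never needs to reload above $2n+1$, iff it is a (minimal) element of $\cover(s_1)$. One construction and one vector settle all four problems simultaneously, whereas your per-problem patches each need this missing idea (your capping argument for $\cover$ also has an off-by-one that depends on the unspecified label of the transition into the sink: with a $\vec{-1}$ entry edge and the exactly-tight $\vec{n}$, a resource whose only reload is at step one cannot be capped at $\alpha(j)$).
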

\begin{proof}
 We will start by proving that deciding whether $\alpha$ is a (minimal) safe vector in state $s$ is $\PSPACE$-hard for 2-player games, by reduction from QBF (we can use the same reduction for both problems). Once the proof is finished, it will be obvious how to adapt the reduction to prove $\NP$-hardness for 1-player case.

Let $\psi = Q_1 x_1 \dots Q_n x_n \varphi$ be a quantified boolean formula, where $Q_{i} \in \{\forall,\exists\}$ for every $1 \leq i \leq n$, and where $\varphi$ is a quantifier-free formula in conjunctive normal form with clauses $C_1,\dots C_m$ over variables $x_1,\dots,x_n$. We will show how to construct (in time polynomial in size of $\psi$) a 2-player consumption game $G_{\psi}$ of dimension $m$ with a distinguished state $s_1$, such that $\psi$ is true iff the vector $\vec{2n+1}$ is minimal safe vector in $s_1$. 

$G_{\psi}$ is constructed as follows: For every $1 \leq i \leq n$ we have states $s_i,s_{x_i}$ and $s_{\neg x_i}$. We also have states $s_{n+1}$ and $r$. For every $1 \leq i \leq n$ the state $s_{i}$ belongs to Player $\Diamond$ iff $Q_i = \forall$. All other states belong to player $\Box$. Next, for every $1\leq i \leq n$ and for both $L_i \in \{x_i,\neg x_i\}$ we have transitions $s_i \ltran{\alpha_{L_i}} s_{L_i}$
and $s_{L_i} \ltran{\beta_{L_i}} s_{i+1}$. The labels on these transitions are defined in the following way: for every $1 \leq i \leq n$, every $L_i \in \{x_i, \neg x_i\}$ and every $1\leq j \leq m$ we set:
\begin{align*}
 \alpha_{L_i}(j) &= \begin{cases}
                    -2(n-i+1) & \text{if } L_i \text{ is in } C_j\\
		    -1 & \text{otherwise}
                    \end{cases} \\
 \beta_{L_i}(j) &= \begin{cases}
                    \omega & \text{if } L_i \text{ is in } C_j\\
		    -1 & \text{otherwise}
                    \end{cases}
\end{align*}
Finally, we have transitions $s_{n+1} \tran{\vec{-1}} r$ and $r \tran{\vec{\omega}} r$.

Figure \ref{fig:pspace} illustrates the construction for a specific formula $\psi$.  %
 \begin{figure}[t]
\centering
\begin{tikzpicture}[x=1.6cm,y=1.6cm,font=\scriptsize]
\foreach \i/\j/\k in {1/0/sbox,2/2/sdiam,3/4/sbox,4/6/sbox}{
\node(s\i) at (\j,0) [\k] {$s_{\i}$};
}

\node(r) at (6,-1) [sbox] {$r$};

\foreach \i/\j in {1/1,2/3,3/5}{
\node(st\i) at(\j,1) [sbox] {$s_{x_{\i}}$};
}

\foreach \i/\j in {1/1,2/3,3/5}{
\node(sf\i) at(\j,-1) [sbox] {$s_{\neg x_{\i}}$};
}

\foreach \i/\j in {1/{(-6,-1,-1)},2/{(-1,-4,-1)},3/{(-2,-2,-1)}}{
\draw (s\i.north) to [->,tran,out=90,in=180] node[left] {$\j$} (st\i.west);
}

\foreach \i/\j in {1/{(-1,-6,-6)},2/{(-4,-1,-4)},3/{(-1,-1,-1)}}{
\draw (s\i) to [->,tran,out=270,in=180] node[left] {$\j$} (sf\i);
}

\foreach \i/\j/\k in {1/2/{(\omega,-1,-1)},2/3/{(-1,\omega,-1)},3/4/{(\omega,\omega,-1)}}{
\draw (st\i) to [->,tran,out=270,in=180] node[left] {$\k$} (s\j);
}

\foreach \i/\j/\k in {1/2/{(-1,\omega,\omega)},2/3/{(\omega,-1,\omega)},3/4/{(-1,-1,-1)}}{
\draw (sf\i) to [->,tran,out=90,in=180] node[left] {$\k$} (s\j);
}

\draw (s4) to [->,tran] node[right] {$(-1,-1,-1)$} (r);
\draw (r) to [->,tran, loop right] node[right] {$(\omega,\omega,\omega)$} (r);
\end{tikzpicture}
\caption{Game $G_{\psi}$ for formula $\psi = \exists x_1 \forall x_2 \exists x_3 (x_1 \vee \neg x_2 \vee x_3) \wedge (\neg x_1 \vee x_2 \vee x_3) \wedge (\neg x_1 \vee \neg x_2)$.}
\label{fig:pspace}
\end{figure}

Let us note that any strategy of player $\Box$ is safe in configuration $(s,\vec{2n+2})$.

Now, any pair of strategies $\sigma,\pi$ of players $\Box$ and $\Diamond$, respectively, determines a truth assignment $v_{\sigma,\pi} \colon{x_1,\dots,x_n} \rightarrow \{0,1\}$ in obvious way: we set $v_{\sigma,\pi}(x_i) = 1$ iff $\play_{\sigma,\pi}(s_1,\vec{2n+2})$ visits state $s_{x_i}$ (the definition is clearly correct). Thus, formula $\psi$ is true iff player $\Box$ has a strategy $\sigma$ such that for every strategy $\pi$ of player $\Diamond$ the assignment $v_{\sigma,\pi}$ satisfies $\varphi$. But $v_{\sigma,\pi}(\varphi)=1$ iff for every $1 \leq j \leq m$ we have $v_{\sigma,\pi}(L)=1$ for some literal $L$ in $C_j$, which happens if and only if $\play_{\sigma,\pi}(s_1,\vec{2n+2})$ visits state $s_{L}$. Furthermore, $\play_{\sigma,\pi}(s_1,\vec{2n+2})$ visits some $s_{L}$ with $L$ in $C_j$ if and only if the $j$-th resource can be reloaded on $\play_{\sigma,\pi}(s_1,\vec{2n+2})$ \emph{before} state $s_{n+1}$ is reached. Putting these observations together we see that $\psi$ is true if and only if player $\Box$ can \emph{enforce the reload of every resource} before the state $s_{n+1}$ is reached (i.e. during the first $2n$ steps). We claim that this happens if and only if $\vec{2n+1}$ is safe in $s_{1}$. 

Note that no matter what the players do, no resource can decrease by more than $2n$ before the play either reaches $s_{n+1}$ or uses a transition that permits reload of this resource. 
Thus, if $\Box$ can enforce reload of every resource in first $2n$ steps, then any vector $\alpha \geq \vec{2n+1}$ must be safe (whenever player $\Box$ has the opportunity to reload some resource, it suffices to reload it to $2n+1$). On the other hand, if for every strategy $\sigma$ of player $\Box$ the player $\Diamond$ can prevent reload of at least one resource during first $2n$ steps, then, no matter what player $\Box$ does, at least one resource must have value exactly 1 when $s_{n+1}$ is reached (provided that in $s_{1}$ all resources where initialized to $2n+1$). Then player $\Box$ cannot use the transition from $s_{n+1}$ to $r$ and he is thus forced to visit configuration $F$, showing that $\vec{2n+1}$ cannot be safe in $s_1$.

We have proved that $\psi$ is true iff $\vec{2n+1}$ is safe in $s_1$ (and thus we have already proved the $\PSPACE$-hardness of deciding whether a given vector is safe).
Now we show that no vector $\gamma \leq \vec{2n+1}$, $\gamma \neq \vec{2n+1}$ can be safe in $s_{1}$. This will prove that $\psi$ is true iff $\vec{2n+1}$ is \emph{minimal} safe vector in $s_1$. The crucial observation is that no matter what the two players do, every resource decreases by \emph{at least} $2n$ before it can be reloaded \emph{for the first time}. Thus, in order to win, player $\Box$ must start with all resources initialized to at least $2n+1$.

Now to show the $\NP$-hardness of decision problems for 1-player games, it suffices to show a reduction from SAT, i.e. from restricted version of QBF where all quantifiers are existential. It is easy to see that in this restricted case the reduction presented above produces a 1-player game. 

Finally, as we have already observed, if $\vec{2n+1}$ is safe in $s_1$ then player $\Box$ never needs to reload any resource to value greater than $2n+1$ in order to win. Thus, in our reductions $\vec{2n+1}$ is (minimal) element of $\safe(s_1)$ if and only if it is the (minimal) element of $\cover(s_1)$. This gives us the lower bounds on membership problems for $\cover(s)$.\qed
\end{proof}

For better clarity we again restate the important observation about game $G_{\varphi}$ from the proof above. This observation will be useful in the proof of $\DP$-hardness below.
\begin{claim}(1.)
\label{claim:lb-help}
 Let $\varphi$ be a propositional formula (i.e. existentially quantified boolean formula) in CNF with $m$ clauses and $n$ variables. Then the player $\Box$ can enforce the reload of every resource before he reaches the second-to-last state of $G_{\varphi}$ $\Leftrightarrow$ $\varphi$ is satisfiable.
\end{claim}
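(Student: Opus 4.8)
The plan is to unwind the construction of $G_\varphi$ and read off a bijection between the maximal plays of player~$\Box$ and the truth assignments of $x_1,\dots,x_n$, under which \emph{resource $j$ gets reloaded before $s_{n+1}$} corresponds exactly to \emph{clause $C_j$ is satisfied}. Since $\varphi$ is purely existential, $G_\varphi$ is a one-player game, so a strategy of player~$\Box$ from $s_1$ is nothing but a choice, in each gadget~$i$, of one of the two literal states $s_{x_i}$ or $s_{\neg x_i}$. Consequently every play has the forced shape $s_1 \to s_{L_1} \to s_2 \to \cdots \to s_n \to s_{L_n} \to s_{n+1} \to r \to \cdots$ with $L_i \in \{x_i,\neg x_i\}$, and $s_{n+1}$ is the second-to-last state, the only terminal state being the self-looping~$r$.

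First I would fix the correspondence: to the play visiting $s_{L_1},\dots,s_{L_n}$ I associate the assignment $v$ with $v(x_i)=1$ iff $L_i=x_i$; equivalently, passing through $s_{L_i}$ is exactly what makes the literal $L_i$ true under $v$. This is clearly a bijection between plays reaching $s_{n+1}$ and assignments over $\{x_1,\dots,x_n\}$. Next I would characterise the reloads. The only labels with an $\omega$-component occurring strictly before the self-loop on $r$ are those of the $\beta$-transitions $s_{L_i} \ltran{\beta_{L_i}} s_{i+1}$: the $\alpha$-transitions $s_i \ltran{\alpha_{L_i}} s_{L_i}$ and the transition $s_{n+1}\to r$ carry only non-positive entries. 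By the definition of $\beta_{L_i}$ we have $\beta_{L_i}(j)=\omega$ iff $L_i\in C_j$, so the step $s_{L_i}\to s_{i+1}$ reloads resource~$j$ iff $L_i\in C_j$.

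Hence resource~$j$ is reloaded somewhere before $s_{n+1}$ (including on the last gadget step $s_{L_n}\to s_{n+1}$, which is still the $2n$-th step and therefore precedes reaching $s_{n+1}$ in the relevant sense) iff the play visits some $s_{L_i}$ with $L_i\in C_j$, which by the correspondence means precisely that $v$ makes some literal of $C_j$ true, i.e.\ that $v$ satisfies $C_j$. Finally I would conclude: all of the resources $1,\dots,m$ are reloaded before $s_{n+1}$ iff $v$ satisfies every clause $C_1,\dots,C_m$, i.e.\ iff $v\models\varphi$. As player~$\Box$ can realise any assignment $v$ by the matching sequence of choices, player~$\Box$ can enforce the reload of every resource before reaching the second-to-last state $s_{n+1}$ iff some $v$ satisfies $\varphi$, that is, iff $\varphi$ is satisfiable.

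I do not expect a genuine obstacle, since this is a direct unwinding of the labelling. The only points needing care are the bookkeeping identity ``visiting $s_{L_i}$ $\Leftrightarrow$ making $L_i$ true'' and the boundary convention that a reload on the transition \emph{into} $s_{n+1}$ counts as happening before $s_{n+1}$. The precise values $-2(n-i+1)$ of the $\alpha$-decrements are irrelevant to this equivalence; they are needed only elsewhere to pin down the \emph{minimal} safe vector $\vec{2n+1}$. Moreover, with any initial credit exceeding $2n$ in every component the chosen play is safe up to all its reloads, because no resource can drop by more than $2n$ before either being reloaded or reaching $s_{n+1}$, so the word ``enforce'' carries no hidden safety difficulty here.
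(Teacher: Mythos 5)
Your proof is correct and takes essentially the same approach as the paper: the claim is extracted there from the proof of Lemma~\ref{lem:np}, using the very same play--assignment correspondence ($v(x_i)=1$ iff the play visits $s_{x_i}$) and the observation that resource $j$ is reloaded exactly on a $\beta$-transition out of some $s_{L}$ with $L$ in $C_j$, i.e.\ iff the induced assignment satisfies $C_j$. Your explicit handling of the boundary convention (a reload on the step into $s_{n+1}$ counts, matching the paper's ``during the first $2n$ steps'') and of safety is consistent with the paper's reading.
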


\begin{lemma}
 The problem, whether a given vector $\alpha$ is a minimal safe vector in a given state $s$ is $\DP$-hard for 1-player games.
\end{lemma}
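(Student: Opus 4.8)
The plan is to reduce from the canonical $\DP$-complete problem \textsc{Sat-Unsat}: given propositional CNF formulas $\varphi_1,\varphi_2$ over disjoint variable sets, decide whether $\varphi_1$ is satisfiable and $\varphi_2$ is unsatisfiable. I want to build, in polynomial time, a one-player consumption game with a distinguished state $s_0$ and a vector $\alpha$ such that $\alpha$ is a \emph{minimal} element of $\safe(s_0)$ if and only if $\varphi_1$ is satisfiable and $\varphi_2$ is unsatisfiable. Throughout, for a coordinate $j$ I write $\alpha^{(j)}$ for the vector obtained from $\alpha$ by decreasing its $j$-th component by one; since $\safe(s_0)$ is upward closed, $\alpha$ is a minimal element exactly when $\alpha\in\safe(s_0)$ and no $\alpha^{(j)}$ is safe.

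The building blocks are two copies of the gadget $G_\varphi$ from the proof of Lemma~\ref{lem:np}. Writing $m_i,n_i$ for the number of clauses and variables of $\varphi_i$, the copy $G_{\varphi_1}$ acts on the coordinate block $\{1,\dots,m_1\}$ and $G_{\varphi_2}$ on the disjoint block $\{m_1+1,\dots,m_1+m_2\}$; I add one further ``mode'' coordinate $c:=m_1+m_2+1$, so $d=m_1+m_2+1$. By Claim~\ref{claim:lb-help}, player~$\Box$ can enforce reloading every coordinate of block~$i$ before reaching the penultimate state of $G_{\varphi_i}$ iff $\varphi_i$ is satisfiable, and this requires each coordinate of block~$i$ to start at $2n_i+1$ (each such resource drops by $2n_i$ before its first possible reload), while any strictly smaller start value on a block coordinate makes the traversal impossible. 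The game starts in a $\Box$-state $s_0$ offering two options. Option~A first takes a transition that decrements only $c$ by one, then runs $G_{\varphi_1}$ (leaving the $\varphi_2$-block and $c$ untouched), then passes through a short \emph{dummy drain} that decrements each $\varphi_2$-coordinate by exactly $2n_2$ and immediately reloads it, and finally enters a sink $r$ that reloads everything. Option~B leaves $c$ untouched, runs $G_{\varphi_2}$ followed by $G_{\varphi_1}$ (each leaving the other block and $c$ alone), and enters $r$. I take $\alpha$ equal to $2n_1+1$ on the $\varphi_1$-block, $2n_2+1$ on the $\varphi_2$-block, and $2$ on coordinate $c$; the sole purpose of the dummy drain is to keep the $\varphi_2$-coordinates \emph{tight} on the branch that bypasses $G_{\varphi_2}$.

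Then I would verify three facts. (i)~With the full budget $\alpha$, Option~A is available (its $-1$ on $c$ leaves $c=1>0$) and is safe exactly when $\varphi_1$ is satisfiable, while Option~B is safe exactly when both $\varphi_1$ and $\varphi_2$ are satisfiable; hence $\alpha\in\safe(s_0)$ iff $\varphi_1$ is satisfiable, the Option-B disjunct being absorbed by $\varphi_1\vee(\varphi_1\wedge\varphi_2)=\varphi_1$. (ii)~Lowering $c$ blocks Option~A, since its entry transition now leads to the fail configuration $F$ (which $\Box$ avoids), so $\alpha^{(c)}$ is safe iff Option~B is, i.e.\ iff $\varphi_1\wedge\varphi_2$ is satisfiable. (iii)~Lowering any $\varphi_1$-coordinate kills the $G_{\varphi_1}$-traversal in both options, and lowering any $\varphi_2$-coordinate kills either the dummy drain (Option~A) or the $G_{\varphi_2}$-traversal (Option~B); so every such $\alpha^{(j)}$ is unsafe. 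By (ii)--(iii) the only candidate safe decrement is $\alpha^{(c)}$, so $\alpha$ is minimal iff $\alpha\in\safe(s_0)$ and $\alpha^{(c)}$ is unsafe, i.e.\ iff $\varphi_1$ is satisfiable and $\varphi_1\wedge\varphi_2$ is unsatisfiable; as the variables are disjoint this is exactly ``$\varphi_1$ satisfiable and $\varphi_2$ unsatisfiable.''

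The conceptual crux, and the step I expect to be the main obstacle to get exactly right, is arranging that $\alpha$-safety encodes $\varphi_1$ \emph{alone} whereas the single admissible decrement encodes $\varphi_1\wedge\varphi_2$. This asymmetry is forced: upward closedness of $\safe(s_0)$ means any safe decrement of $\alpha$ already entails $\alpha\in\safe(s_0)$, so the reduced-budget branch must be \emph{harder} than the full-budget one; this is precisely why Option~B must re-verify $\varphi_1$ in addition to $\varphi_2$. The remaining care is purely combinatorial bookkeeping to guarantee minimality, namely that no coordinate is slack on the branch realizing safety, which is handled by the tightness of the two $G_{\varphi_i}$ blocks (Claim~\ref{claim:lb-help}) and by the dummy drain for the block bypassed in Option~A. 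Finally, as in Lemma~\ref{lem:np}, the zero labels placed on ``untouched'' blocks can be replaced by $-1$ with correspondingly inflated budgets, giving the same bound for games whose label components are all nonzero; and since \textsc{Sat-Unsat} is already $\DP$-complete with $\varphi_1,\varphi_2$ purely existential, the whole construction stays a one-player game.
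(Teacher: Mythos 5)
Your reduction is correct, and at its core it is the same reduction as the paper's: both go from \textsc{Sat-Unsat}, both build two copies of the $\NP$-hardness gadget of Lemma~\ref{lem:np} acting on disjoint coordinate blocks plus one extra ``selector'' coordinate, and both rest on the same skeleton --- the full budget is safe iff $\varphi_1$ is satisfiable, while the budget decremented by one unit on the selector coordinate is safe iff $\varphi_1\wedge\varphi_2$ is satisfiable, with all remaining coordinates kept tight. The difference is topological: the paper chains the gadgets \emph{in series} ($H_\varphi$ then $H_\psi$) and places the discriminating choice at the very end, via two exit transitions $\pi,\pi'$ that trade one unit of the last resource against one unit of each intermediate resource, so that saving a unit of the last resource is possible exactly when the intermediates could be reloaded, i.e.\ when $\psi$ is satisfiable; you instead branch \emph{at the start} (Option A vs.\ Option B), pay the selector upfront on the cheap branch, and use a dummy drain to keep the bypassed block tight. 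Both arrangements work, and yours makes the logic ``$\varphi_1\vee(\varphi_1\wedge\varphi_2)$ at $\alpha$ versus $\varphi_1\wedge\varphi_2$ at $\alpha^{(c)}$'' quite transparent.

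The one genuine gap is your closing sentence claiming that the zero labels can be replaced by $-1$ ``with correspondingly inflated budgets.'' This is not automatic in your construction, precisely because your two branches have different lengths. With $-1$ everywhere, a block-1 coordinate loses roughly $2n_2$ more on Option B (which traverses $G_{\varphi_2}$ first) than on Option A; if you inflate its budget to the Option-B requirement, it has slack on Option A, and then its unit decrement remains safe via Option A whenever $\varphi_1$ is satisfiable, destroying minimality in the ``$\varphi_1$ sat, $\varphi_2$ unsat'' case; if instead you set it to the Option-A requirement, Option B becomes unavailable even when $\varphi_1\wedge\varphi_2$ is satisfiable, destroying the characterization of $\alpha^{(c)}$. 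The selector $c$ is affected too, since it now drains along Option B as well. This is repairable, but only by adding compensating decrements that equalize, coordinate by coordinate, the total pre-reload loss on the two branches --- exactly the kind of arithmetic the paper's serial construction performs natively (which is why its vector $\xi$ has components $2n{+}1$, $2n{+}2n'{+}4$, $2n{+}2n'{+}5$ along a single forced path skeleton). Since the lemma as stated does not require nonzero labels, your proof of the lemma itself is complete; only the ``moreover'' clause of Proposition~\ref{prop:hard-main} would need this extra rebalancing.
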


We will present a reduction from SAT-UNSAT.

Let $(\varphi,\psi)$ be any pair of propositional formulas in CNF, where $\varphi=C_1 \wedge \dots \wedge C_m$ is formula over variables $x_1,\dots,x_n$ and $\psi=C_{m+1} \wedge \dots \wedge C_{m+m'}$ contains only variables $y_{1},\dots,y_{n'}$. Denote $M=m+m'+1$. We will show how to construct (in time polynomial in size of $(\varphi,\psi)$) a 1-player consumption game $G_{\varphi,\psi}$ of dimension $M$ with distinguished state $s_1$ and vector $\xi$, such that $(s_1,\xi)$ is a safe configuration of $G_{\varphi,\psi}$ if and only if $\varphi$ is satisfiable and $\psi$ is unsatisfiable.

Game $G_{\varphi,\psi}$ consists of two gadgets, $H_{\varphi}$ and $H_{\psi}$. Construction of these gadgets is very similar to construction of games $G_{\varphi}$, $G_{\psi}$ from previous lemma (where we treat $\varphi$ and $\psi$ as existentially quantified formulas). However, apart from different dimension of labels there are other subtle differences. Therefore, we give explicit description of these constructions.

Let us start with $H_{\varphi}$: for every $1 \leq i \leq n$ we have states $s_i,s_{x_i}$ and $s_{\neg x_i}$. We also have states $s_{n+1}$ and $r$. For every $1 \leq i \leq n$ there are transitions $s_{i} \tran{} s_{x_i}$, $s_{i} \tran{} s_{\neg x_i}$, $s_{x_i} \tran{} s_{i+1}$ and $s_{\neg x_i } \tran{} s_{i+1}$. Moreover, there is transition $s_{n+1} \tran{} r$. For the time being we leave $r$ without outgoing transition: we will append gadget $H_{\psi}$ to $r$ later.

We now define labeling of transitions in $H_{\varphi}$. For every $1 \leq i \leq n$ and $L_i \in \{x_i,\neg x_i\}$ denote $\alpha_{L_i}$ the label of transition $s_i \tran{} s_{L_i}$ and $\beta_{L_i}$ the label of the single transition outgoing from $s_{L_i}$. Then, for every $1 \leq j \leq M$ we define
\begin{align*}
 \alpha_{L_i}(j) &= \begin{cases}
                    -2(n-i+1)& \text{if } j\leq m \text{ and } L_i \text{ is in } C_j\\
		    -1 & \text{otherwise}
                    \end{cases} \\
 \beta_{L_i}(j) &= \begin{cases}
                    \omega & \text{if }j\leq m \text{ and } L_i \text{ is in } C_j\\
		    -1 & \text{otherwise}
                    \end{cases}
\end{align*}
Finally, we have $s_{n+1} \tran{\vec{-1}} r$.

We will now continue with the construction of $H_{\psi}$. For every $1 \leq i \leq n'$ we have states $t_i,t_{y_i}$ and $t_{\neg y_i}$ together with additional states $t_{n'+1},p,p'$. For every $1 \leq i \leq n'$ and every $L_i\in \{y_i,\neg y_i\}$ we have transitions $t_{i} \ltran{\gamma_{L_i}} t_{L_i} $ and $t_{L_i} \ltran{\delta_{L_i}} t_{i+1}$. Finally we have transitions $t_{n'+1} \tran{\pi} p$, $t_{n'+1} \tran{\pi'} p'$, $p \tran{\vec{\omega}} p$ and $p' \tran{\vec{\omega}} p'$. The labels are defined as follows: we have $\pi = (\underbrace{\omega,\dots,\omega}_{m},\underbrace{-1,\dots,-1}_{m'},-2)$, $\pi' = (\underbrace{\omega,\dots,\omega}_{m},\underbrace{-2,\dots,-2}_{m'},-1)$ and for every $1 \leq i \leq n'$ and every $L_{i}\in \{y_{i},\neg y_{i}\}$ we have:
\begin{align*}
 \gamma_{L_i}(j) &= \begin{cases}
                    -2(n'-i+1) & \text{if } m < j \leq M \text{ and } L_i \text{ is in } C_j\\
		     \omega & \text{if } 1 \leq j \leq m\\
		    -1 & \text{otherwise}
                    \end{cases} \\
 \delta_{L_i}(j) &= \begin{cases}
                    \omega & \text{if } 1 \leq j \leq m\text{ or if }m < j \leq M \text{ and } L_i \text{ is in } C_j\\
		    -1 & \text{otherwise}
                    \end{cases}
\end{align*}

The example of gadgets for specific formulas can be seen on Figure \ref{fig:dp}. Note that $H_{\varphi}$ can be viewed as game $G_{\varphi}$ from proof of Lemma \ref{lem:np}, and thus the observation in Claim (1.) can be applied to this game.

\begin{figure}[t]
\centering
\begin{tikzpicture}[x=1.8cm,y=1.4cm,font=\scriptsize]
 \node(d1) at (0,1) [dummy] {$H_{\psi}:$};
 \node(d2) at (0,4) [dummy] {$H_{\varphi}:$};
 
 \foreach \i/\j/\k/\l in {1/s/1/4,2/s/3/4,3/s/5/4,1/t/1/1,2/t/3/1,3/t/5/1 }{
 \node(\j\i) at (\k,\l) [sbox] {$\j_{\i}$};
 }
 
 \foreach \i/\j/\k/\l/\m in {1/s/2/5/x,2/s/4/5/x,1/t/2/2/y,2/t/4/2/y }{
 \node(t\j\i) at (\k,\l) [sbox] {$\j_{\m_{\i}}$};
 }

 \foreach \i/\j/\k/\l/\m in {1/s/2/3/x,2/s/4/3/x,1/t/2/0/y,2/t/4/0/y }{
 \node(f\j\i) at (\k,\l) [sbox] {$\j_{\neg\m_{\i}}$};
 }

 \node(r) at (6,4) [sbox] {$r$};
 \node(p) at (6,2) [sbox] {$p$};
 \node(pp) at (6,0) [sbox] {$p'$};

 \foreach \i/\j/\k in {1/s/{(-4,-1,-1,-1)},2/s/{(-1,-1,-1,-1)},1/t/{(\omega,-5,-1,-1)},2/t/{(\omega,-3,-1,-1)}}{
 \draw (\j\i) to [tran,out=90,in=180] node[left] {$\k$} (t\j\i);
 }
 \foreach \i/\j/\k in {1/s/{(-1,-1,-1,-1)},2/s/{(-2,-1,-1,-1)},1/t/{(\omega,-1,-5,-1)},2/t/{(\omega,-1,-3,-1)}}{
 \draw (\j\i) to [tran,out=270,in=180] node[left] {$\k$} (f\j\i);
 }
 \foreach \i/\j/\k/\l in {1/s/{(\omega,-1,-1,-1)}/2,2/s/{(-1,-1,-1,-1)}/3,1/t/{(\omega,\omega,-1,-1)}/2,2/t/{(\omega,\omega,-1,-1)}/3}{
 \draw (t\j\i) to [tran,out=270,in=180] node[left] {$\k$} (\j\l);
 }
 \foreach \i/\j/\k/\l in {1/s/{(-1,-1,-1,-1)}/2,2/s/{(\omega,-1,-1,-1)}/3,1/t/{(\omega,-1,\omega,-1)}/2,2/t/{(\omega,-1,\omega,-1)}/3}{
 \draw (f\j\i) to [tran,out=90,in=180] node[left] {$\k$} (\j\l);
 }

 \draw (s3) to [tran] node[above] {$\vec{-1}$} (r);
 \draw (t3) to [tran,out=90,in=180] node[left] {$(\omega,-1,-1,-2)$} (p);
 \draw (t3) to [tran,out=270,in=180] node[left] {$(\omega,-2,-2,-1)$} (pp);
 \draw (p) to [tran, loop right] node[right] {$\vec{\omega}$} (p);
 \draw (pp) to [tran, loop right] node[right] {$\vec{\omega}$} (pp);
 \draw (r) [draw,->,>=stealth,dashed,rounded corners]  -- ++(0,-1.5) -- node[above] {$(\omega,-1,-1,-1)$} ++(-5.8,0) -- ++ (0,-1.5)-- (t1);
\end{tikzpicture}
\caption{Gadgets $H_{\varphi}$ and $H_{\psi}$ for $\varphi = x_1 \vee \neg x_2$ and $\psi = (y_1 \vee y_2) \wedge (\neg y_1 \vee \neg y_2)$. The transition that joins gadgets into game $G_{\varphi,\psi}$ is indicated. The game starts in configuration $(s_1,(5,12,12,13))$.}
\label{fig:dp}
\end{figure}

We now join gadgets $H_{\varphi}$ and $H_{\psi}$ into one game $G_{\varphi,\psi}$ by adding transition $r' \tran{\rho} t_{1}$, where $\rho = (\underbrace{\omega,\dots,\omega,}_{m}-1,\dots,-1)$. Let $\xi$ be such that
\[
 \xi({j})= \begin{cases}
           2n + 1& \text{if } 1 \leq j \leq m \\
	   2n +2n' +4 & \text{if } m< j \leq m'\\
	   2n+2n'+5 & \text{if } j = M
          \end{cases}
\]

We claim that $(\varphi,\psi)$ is a positive instance of SAT-UNSAT iff $\xi$ is minimal safe vector in $s_1$. We will refer to resources with indexes between $m+1$ and $m+m'$ as \emph{intermediate resources}.

Suppose that $\xi$ truly is a minimal safe vector in $s_1$. Then the player must be able to reach one of the states $p$, $p'$ without exhausting any resource. In particular, in order to be able to use transition from $s_{n+1}$ to $r$ the player must be able to reload each of the first $m$ resources \emph{before} he reaches $s_{n+1}$, because each of these resources decrease by at least $2n$ before this state is reached. From Claim (1.) it follows that $\varphi$ is satisfiable. Moreover, when player reaches $t_1$, the value of all intermediate resources is exactly $2n'+2$. Assume, that $\psi$ is satisfiable. Then the player can continue from $t_1$ and reload all intermediate resources before $t_{n'+1}$ is reached, by playing in accordance with the satisfying assignment for $\psi$. The player thus reaches $t_{n'+1}$ with the last resource having value exactly 3 and all other resources having arbitrary large value. He can then use transition from $t_{n'+1}$ to $p'$ decreasing the last resource by 1. It is clear, that the player can use this strategy to reach $p'$ even if he starts with resources initialized to $\xi - (0,\dots,0,1)$, a contradiction with minimality of $\xi$. This shows, that $\psi$ must be unsatisfiable.

Now assume that $(\varphi,\psi)$ is a positive instance of SAT-UNSAT. Since $\varphi$ is satisfiable, the player can reload all of the first $m$ resources before he reaches $s_{n+1}$. Thus, he can reach $t_{1}$ with first $m$ resources having arbitrary large value, the intermediate resources having value exactly $2n'+2$ and the last resource having value exactly $2n'+3$. Then, no matter which path through $H_{\psi}$ the player chooses, he can reach $t_{n'+1}$ without exhausting any resource and with intermediate resources having value at least $2$ and the last resource having value exactly $3$ when $t_{n'+1}$ is reached (he just needs to reload any intermediate resource to $2n'+2$ if he has the possibility to do so along the path). He can then use the transition to $p$ and win the game. This proves that $\xi$ is safe in $s_1$. Suppose that there is some $\gamma \leq \xi$, $\gamma \neq \xi$ that is safe in $s_1$. Clearly, all components of $\gamma$ and $\xi$, except for the last one, have to be equal: each of the first $m$ resources is decreased by at least $2n$ and every intermediate resource is decreased by at least $2n+2n'+3$ before $p$ or $p'$ is reached. Thus, we must have $\gamma({M}) \leq 2n+2n'+4$. But if the player starts with $M$-th resource initialized to $\gamma({M})$, the value of this resource is at most $2$ when $t_{n'+1}$ is reached. Now the only thing the player can possibly do in order to win is to use the transition from $t_{n'+1}$ to $p'$. But to do this, he must reload every intermediate resource before he reaches $t_{n'+1}$. But this is not possible, since $\psi$ is unsatisfiable. Thus, $\gamma$ cannot be safe in $s_1$.

\section{Proofs for Decreasing and One-Player Consumption Games}
\label{app-restricted}

Before we present the proofs for restricted classes of consumption games, let us make the following convenient observation: If we fix the initial configuration, $(s,\alpha)$, then strategies of player $\odot$, where $\odot\in \{\Box,\Diamond\}$, in $\C$ can be seen as functions that for every history $w \in S^{*}S_{\odot}$ return a transition of $\C$ outgoing from the last state of $w$, together with values to which those resources, that can be reloaded by the selected transition, are reloaded. In the following, we will view strategies in this way where it is convenient.

\subsection*{Proof of Lemma~\ref{lem:decreasing-gbuchi}}
From Lemma~\ref{lem-con-Street} we know that $\safe(s)\neq \emptyset$ if 
and only if the player $\Box$ has a winning strategy in $s$ in Streett game 
$\calS_{\C}=(S,E,(S_{\Box},S_{\Diamond}),\A)$ with Streett winning condition 
$\A=\{(G_1,R_1),\ldots,(G_d,R_d)\}$, where
$G_i = \{(s,t) \in E \mid L(s,t)(i) < 0\}$, and
$R_i = $ \mbox{$\{(s,t)~\in~E~\mid L(s,t)(i)~=~\omega\}$} for every 
$1 \leq i \leq d$. Thus, in order to prove Lemma~\ref{lem:decreasing-gbuchi} it suffices to prove that player $\Box$ has a winning strategy in some state $s$ of $\B_{\C}$ iff he has a winning strategy in the corresponding state $s$ in $\calS_{\C}$,

We first observe that a winning strategy in the generalized B\"uchi game 
ensures that every $R_i$ is visited infinitely often, and hence also 
satisfies the Streett condition.
Thus a winning strategy in the generalized B\"uchi game is a winning strategy
in the Streett game.
We now argue for the other direction. 
Consider a winning strategy $\sigma$ for the Streett game. 
We argue that the strategy also ensures the generalized B\"uchi condition.
Consider an arbitrary strategy $\pi$ for player~$\Diamond$, and let 
$w=\play_{\sigma,\pi}(s)$.
Consider $U=\inf(w)$. 
Then $U$ is a scc, and hence must contain cycles. 
Since the game is decreasing, it follows that for every dimension $i$, 
there is at least one transition in $U$ with negative weight or $\omega$ 
in dimension $i$.
Hence for all $1 \leq i \leq d$ either $U \cap R_i\neq \emptyset$ or $U \cap G_i \neq \emptyset$.
Since $\sigma$ is winning for the Streett condition, for all 
$1 \leq i \leq d$ if $U \cap G_i \neq \emptyset$, then $U \cap R_i\neq \emptyset$.
Thus, for all $1 \leq i \leq d$ we have $U \cap R_i \neq \emptyset$. 
Hence all the B\"uchi objectives are satisfied. 
This shows that $\sigma$ is a winning strategy for the generalized B\"uchi game.
The desired result follows.

\subsection*{Proof of Lemma~\ref{lem-claim1}}
Intuitively, it follows from the fact that if a state of $S$ is visited twice (i.e. the play follows a {\it cycle in the state space}) without reloading a ``new'' resource in the cycle (i.e. a resource which has not been already reloaded before the cycle), then player $\Box$ may improve her strategy by simply omitting the cycle. 
More precisely, assume that there is a play which visits the same state in the $i$-th and the $j$-th steps ($i<j$) and
every resource reloaded between the $i$-th and the $j$-the steps has already been reloaded before the $i$-th step.
Then note that the values of the resources that are not reloaded between the $i$-th and the $j$-th steps can only become smaller there, and
as every resource reloaded between the $i$-th and the $j$-th steps has already been reloaded before the $i$-th step, it suffices to make the reloads {\it before} the $i$-th step a bit larger to compensate for removing the reloads {\it between} the $i$-th and the $j$-th step. 

Thus we may modify the strategy $\sigma$ for player $\Box$ so that whenever a play follows the first $i$ steps of the above fixed play, the new strategy, $\sigma'$, starts behaving as if it has already followed $j$ steps of the play (thus removing the part between the $i$-th and the $j$-th steps). Moreover, to be safe, the new strategy has to reload a bit more before reaching the $i$-the step (by e.g.
the total value reloaded between the $i$-th and the $j$-th step of the fixed play).
The resulting strategy will be safe and will reload all resources sooner than the original strategy (at least in some plays). We need the decreasing games in order to be sure that the process of removing cycles will eventually stop (observe that in decreasing games every safe strategy reloads all resources in a bounded number of steps no matter what player $\Diamond$ is doing) and produces a strategy which reloads all resources in at most $d\cdot |S|$ steps. %

Now we present the formal proof.

Let us fix a safe strategy $\sigma$.
Denote by $F_{\sigma,\pi}$ the number of steps the $\play_{\sigma,\pi}(s,\alpha)$ needs to reload all resources.
Denote by $F_{\sigma}$ the maximum $\max_{\pi} F_{\sigma,\pi}$.
It follows from the fact that $\C$ is decreasing that $F_{\sigma}$ is finite for every safe $\sigma$ (otherwise a play won by $\pi$ could be easily constructed due to the fact that once $\sigma$ is fixed, the game is finitely branching).

If $F_{\sigma}\leq d\cdot |S|$, we are done. So assume that $F_{\sigma}>d \cdot |S|$ and that $\sigma$ minimizes $F_{\sigma}$  among all safe strategies. Given $\pi$, denote by $w_{\sigma,\pi}$ the shortest prefix of $\play_{\sigma,\pi}(s,\alpha)$ in which all resources have been reloaded. Denote by $H_{\sigma}$ the number of distinct paths $w_{\sigma,\pi}$ of length $F_{\sigma}$ (i.e., $H_{\sigma}=|\{w_{\sigma,\pi}\mid \pi \in \Pi,\len{w_{\sigma,\pi}}=F_{\sigma}\}|$) and assume that $\sigma$ minimizes $H_{\sigma}$ among all safe strategies.
We show that there is $\sigma'$ such that either $F_{\sigma'}<F_{\sigma}$, or $F_{\sigma'}=F_{\sigma}$ and $H_{\sigma'}<H_{\sigma}$, a contradiction with the minimality of $F_{\sigma}$ and $H_{\sigma}$.

Consider one of the paths $\wh{w}_{\sigma,\pi}=(s_0,\alpha_0)\cdots (s_k,\alpha_k)$ of length $k=F_{\sigma}$. As $F_{\sigma}>d\cdot |S|$, there must be $i<j$ such that $s_i=s_j$ and all resources reloaded between the $i$-th step and the $j$-th step of $\wh{w}_{\sigma,\pi}$ have already been reloaded before the $i$-th step of $\wh{w}_{\sigma,\pi}$. Denote by $B$ the largest value to which any of the resources is reloaded between the $i$-th and $j$-th step of $\wh{w}_{\sigma,\pi}$.
We define choices of a new strategy, $\sigma'$, for all possible histories $w\in S^*S_{\Box}$: 
\begin{itemize}
\item If $w=s_0 \cdots s_{k'}$, where $k'<i$, then $\sigma'$ chooses the same transition as $\sigma$ for the history $w$, but whenever $\sigma$ reloads a resource to $m$, $\sigma'$ reloads the same resource to  $m+B$.
\item If $w=s_0 \cdots s_i t_1\cdots t_{k'}$, then $\sigma'$ chooses the same transition as $\sigma$ for the history $s_0\cdots s_i s_{i+1} \cdots s_j t_1\cdots t_{k'}$. Moreover, whenever $\sigma$ reloads a resource to a value $m$, $\sigma'$ reloads the same resource to the value $m$.
\item For all other histories the strategy $\sigma'$ chooses the same transition as $\sigma$ and reloads resources to the same values as $\sigma$.
\end{itemize}
Note that $\sigma'$ is still safe. Indeed, if 
\[
\play_{\sigma',\pi}(s,\alpha)=(s_0,\alpha_0) \cdots (s_i,\alpha_i)(t_1,\alpha_{i+1})(t_2,\alpha_{i+2})\cdots
\]
then there is a strategy $\pi'$ such that
\[
\play_{\sigma,\pi'}(s,\alpha)=(s_0,\alpha'_0) \cdots (s_i,\alpha'_i)(s_{i+1},\beta_1) \cdots (s_j,\beta_{j-i})(t_1,\alpha'_{i+1})(t_2,\alpha'_{i+2})\cdots
\]
It is easy to observe that $\alpha'_i(j)\leq \alpha_i(j)$ for all $i,j$.

Note also that $F_{\sigma'}\leq F_{\sigma}$. Moreover, 
all plays according to $\sigma'$ that start with the sequence of states $s_0\cdots s_i$ reload all resources in less than $F_{\sigma}$ steps.
Thus either  
$F_{\sigma'}<F_{\sigma}$, or $F_{\sigma'}=F_{\sigma}$ and $H_{\sigma'}<H_{\sigma}$, a contradiction with the minimality of $F_{\sigma}$ and $H_{\sigma}$.

\qed

\subsection*{Proof of Lemma~\ref{lem-claim2}}
The idea is basically the same as in the case of decreasing games. Assume that $\sigma$ does not satisfy the conclusion of the lemma. Then the play $\play_{\sigma}(s,\alpha)$ must follow a cycle in the state space which does not reload any ``new'' resource that has not been reloaded before the cycle. This cycle can be simply omitted and its reloads may be added to the reloads made before the cycle. Eventually, after removing finitely many cycles, all resources that are decreased infinitely many times in $\play_{\sigma}(s,\alpha)$ will be reloaded in the first $d\cdot |S|$ steps, and resources that are decreased finitely many times in $\play_{\sigma}(s,\alpha)$ will be decreased only in the first $d\cdot |S|$ steps.

Formally, let us fix a safe strategy $\sigma$.
Denote by $F_{\sigma}$ the least number of steps the play $\play_{\sigma}(s,\alpha)$ needs to reach a point where every resource is either reloaded, or is never decreased in the future. Clearly, $F_{\sigma}$ must be finite, since otherwise $\sigma$ could not be safe.
If $F_{\sigma}\leq d\cdot |S|$, we are done. So assume that $F_{\sigma}>d\cdot |S|$ and that $\sigma$ minimizes $F_{\sigma}$ among all safe strategies. 

Consider a prefix $w_{\sigma}=(s_0,\alpha_0)\cdots (s_k,\alpha_k)$ of $\play_{\sigma}(s,\alpha)$ of length $k=F_{\sigma}$. As $F_{\sigma}>d\cdot |S|$, there must be $i<j$ such that $s_i=s_j$ and all resources reloaded between the $i$-th step and the $j$-th step of $w_{\sigma}$ have already been reloaded before the $i$-th step of $w_{\sigma}$. Denote by $B$ the largest value to which any of the resources is reloaded between the $i$-th and $j$-th steps of $w_{\sigma}$.
We define choices of a new strategy, $\sigma'$, for all possible histories $w\in S^+$: 
\begin{itemize}
\item If $w=s_0 \cdots s_{k'}$, where $k'<i$, then $\sigma'$ chooses the same transition as $\sigma$ for the history $w$, but whenever $\sigma$ reloads a resource to $m$, $\sigma'$ reloads the same resource to  $m+B$.
\item If $w=s_0 \cdots s_i t_1\cdots t_{k'}$, then $\sigma'$ chooses the same transition as $\sigma$ for the history $s_0\cdots s_i s_{i+1} \cdots s_j t_1\cdots t_{k'}$. Moreover, whenever $\sigma$ reloads a resource to a value $m$, $\sigma'$ reloads the same resource to the value $m$.
\item For all other histories the strategy $\sigma'$ chooses the same transition as $\sigma$ and reloads resources to the same values as $\sigma$.
\end{itemize}
It is easy to see that $\sigma'$ is safe and that $F_{\sigma'}<F_{\sigma}$, a contradiction with the minimality of $F_{\sigma}$.

\qed

\subsection*{Proof of Theorem~\ref{thrm:dec-one-membership}}
The proofs of the lower bounds were given in the previous section, so it remains to prove the upper bounds.

We assume that all states $s$ satisfying $\safe(s)=\emptyset$ have been removed from the game. 
This preprocesing can be done in polynomial time for 1-player and decreasing games (Theorem~\ref{thrm:dec-one-emptiness}).

First, consider decreasing games. According to Lemma~\ref{lem-claim1}, $\alpha\in \safe(s)$ iff there is a safe strategy $\sigma$ of player $\Box$ in configuration $(s,\alpha)$ which reloads all resources in at most $d\cdot |S|$ steps no matter what player $\Diamond$ is doing. However, existence of such a strategy can be decided using polynomial time bounded alternating Turing machine, which implies that the problem belongs to $\PSPACE$. 

Second, consider 1-player consumption games. According to Lemma~\ref{lem-claim2},
$\alpha\in \safe(s)$ iff there is a safe strategy $\sigma$ of player $\Box$ in configuration $(s,\alpha)$ such that $\play_{\sigma}(s,\alpha)$ reaches in at most $d\cdot |S|$ steps a configuration $(t,\beta)$ satisfying the following condition: there is a safe strategy $\sigma'$ in $(t,\beta)$ such that $\play_{\sigma'}(t,\beta)$ does not decrement any recource which has not been reloaded in 
$\play_{\sigma}(s,\alpha)$ before reaching $(t,\beta)$. Note that existence of such $\sigma$ and $\sigma'$ can be decided in non-deterministic polynomial time as follows: first, guess a path $w$ of length at most $d\cdot |S|$ initiated in $(s,\alpha)$ to a configuration $(t,\beta)$ (reloads do not matter too much here, as it suffices to always reload to $2\cdot d\cdot |S|\cdot \ell$). According to what resources have been reloaded in $w$, decide whether $(t,\beta)$ is a safe configuration in a consumption game $\C'$ obtained from $\C$ by pruning all transitions that decrease resources not reloaded in $w$ (this can be done in polynomial time with the algorithm from Theorem~\ref{thrm:dec-one-emptiness}).

Let us now consider the problem whether given $\alpha$ is a minimal element of $\safe(s)$. Clearly, $\alpha$ is a minimal element of $\safe(s)$ iff it is an element of $\safe(s)$ and none of the $d$ vectors $\alpha - \vec{1}_i$, for $1\leq i \leq d$, is safe in $s$. (Recall that $\vec{1}_j$ denotes a vector whose $j$-th component is equal to~$1$ and the other components 
are equal to~$0$.) For decreasing consumption games this amounts to running $d+1$ calls of the polynomial space algorithm for membership, which is again a polynomial space procedure.

To prove $\DP$ upper bound for one-player games, we have to prove that the language of all triples of the form $(\C,s,\alpha)$, where $\C$ is a one-player consumption game, $s$ is a state of $\C$ and $\alpha$ is some minimal element of $\safe(s)$, is an intersection of $\NP$ language $L_1$ and $\coNP$ language $L_2$. For $L_1$ we can take the language of all triples where $\alpha$ is an element of $\safe(s)$, which lies in $\NP$ by the first part of the theorem. For $L_2$ we can take the complement of the language of all triples $(\C,s,\alpha)$ where at least one of the vectors $\alpha - \vec{1}_i$, for $1 \leq i \leq d$, is safe in $s$. Then the complement of $L_2$ lies in $\NP$, because for a given instance $(\C,s,\alpha)$ it suffices to guess $i$ such that $\alpha-\vec{1}_i \in \safe(s)$ and then verify this guess with the non-deterministic polynomial time algorithm for membership presented above.

\subsection*{Proof of Theorem~\ref{thm:min-dist}}

Recall that in the following we have a consumption game $\D=(D,E,(D_{\Box},D_{\Diamond}),L)$ with transitions labeled only by vectors over $\Zset_{\leq 0}$, and a state $r$ of $\D$.

The procedure Min-dist is presented on Figure \ref{fig:reach-alg}.

\begin{figure}[t]%
\begin{procedure}[H]
\DontPrintSemicolon
\caption{Min-dist()($\D$,$s$,$r$)}
\SetKwInOut{Input}{input}
\SetKwInOut{Output}{output}
\SetKwInOut{Input}{input}
\SetKwInOut{Output}{output}
\SetKwData{dist}{dist-set}
\SetKwData{work}{M}
\SetKwData{mindist}{Min-Dist}
\SetKwFunction{minimum}{min-set}
\SetKwData{distiter}{dist}
\SetKwFunction{cwm}{cwm}
\SetKwData{tempset}{temp}

\Input{A one-player consumption game $\D=(D,E,(D_{\Box},D_{\Diamond}),L)$ with labels over $\Zset_{\leq 0}$; states $s,r \in D$}
\Output{The set $\lambda_{\D}(s,r)$}

\lForEach{ $q \in D$}{
  $d[q]$ $\leftarrow \{(\infty,\dots,\infty)\}$, $n[q]\leftarrow d[q]$\;
}
$d[r]$ $\leftarrow \{(1,\dots,1)\}$, $n[r]\leftarrow d[r]$\;
\Repeat{$n[q]=d[q]$ for every $q \in {D}$}{
 \lForEach{$q \in D$}{
 $d[q]\leftarrow n[q]$, $n[q]\leftarrow\emptyset$\;
 }
 \ForEach{$q \in {D}$}{
    let $q\tran{\delta_1} q_1,\dots,q\tran{\delta_m} q_m$ be all transitions in $\D$ outgoing from $q$\;
   \uIf{$q \in D_{\Box}$}{
    \For{ $i = 1$ \KwTo $m$}{\label{alg:reach1}
    $n[q] \leftarrow n[q] \cup (d[q_i]-\delta_i)$\;\label{alg:reach4}
      }
    $n[q]\leftarrow$\minimum{$n[q]$}\label{alg:endfor1}\;
   }
   \Else{ \tempset $\leftarrow d[q_1]$\label{alg:reach5}\;
      \For{$i = 2$ \KwTo $m$}{\label{alg:reach2}
      \tempset$ \leftarrow $\cwm{\tempset,$d[q_i]-\delta_i$} \label{alg:reach3}}
      $n[q]\leftarrow$\minimum{\tempset}\label{alg:endfor2}\;
   }
 }
}
\Return{$d[s]$}
\label{alg:r}
\end{procedure}
\caption{Algorithm for minimal safe distances.}
\label{fig:reach-alg}
\end{figure}

In the following, we denote by $\vec{1}$ the vector $(1,\dots,1)$.

\begin{lemma}
\label{lem:reach-termination}
 Let $a$ be the length of the longest acyclic path in $\D$. Then the procedure Min-dist$(\C,s,r)$ terminates after at most $a$ iterations of the repeat-until loop. Moreover, upon termination we have $d[s]=\lambda_{\D}(s,r)$ for every state $s \in D$. The running time of the algorithm is $\calO\left(|{D}| \cdot b\cdot a \cdot |M|^{2} \right)$ where $b$ is the branching degree of $\D$ and $|M|$ is the maximal possible size of sets $ n[q]$, $d[q]$ and {\tt temp}.
\end{lemma}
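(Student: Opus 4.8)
The plan is to reduce the whole statement to an analysis of the $k$-step approximations. For $k\geq 0$ let $\safe^k_{\D}(q,r)$ be the (upward-closed) set of vectors $\alpha$ such that player~$\Box$ has a strategy under which every play from $(q,\alpha)$ reaches a configuration at $r$ within $k$ steps, and let $\lambda^k_{\D}(q,r)$ be its set of minimal elements, with the base case $\lambda^0_{\D}(r,r)=\{(1,\dots,1)\}$ and $\lambda^0_{\D}(q,r)=\{(\infty,\dots,\infty)\}$ for $q\neq r$. First I would show, by induction on the number of executions of the repeat-until body, that one execution realises exactly one application of the optimality operator $F$ from the equations of Section~\ref{sec:reachability}: after the $k$-th execution we have $n[q]=\lambda^k_{\D}(q,r)$ and $d[q]=\lambda^{k-1}_{\D}(q,r)$ for every $q$ (the value at the target $r$ staying pinned to $\{(1,\dots,1)\}$), and the loop halts precisely when $\lambda^k_{\D}=\lambda^{k-1}_{\D}$, returning $d[s]$. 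The Lemma then follows from three ingredients: correctness of this approximation, a bound of $a$ on the number of iterations, and an operation count.

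For the approximation correctness I would prove $\lambda^{k+1}_{\D}=F(\lambda^k_{\D})$ by unfolding the first transition of a play. If $q\in D_{\Box}$, player~$\Box$ selects one outgoing transition $q\ltran{\delta_i}q_i$; since every label is non-positive, $\alpha$ is a safe distance via this transition iff $\alpha+\delta_i$ is a safe distance from $q_i$, i.e. iff $\alpha\in\safe^k_{\D}(q_i,r)-\delta_i$ (note $-\delta_i\geq\vec{0}$, so positivity of $\alpha$ is automatic). Taking the union over $i$ and then minimal elements is exactly the $D_{\Box}$ branch of the pseudocode. If $q\in D_{\Diamond}$, player~$\Diamond$ chooses the transition, so $\alpha$ must be safe for every choice; hence $\safe^{k+1}_{\D}(q,r)=\bigcap_i\bigl(\safe^k_{\D}(q_i,r)-\delta_i\bigr)$, and for upward-closed sets the minimal elements of an intersection are the component-wise maxima of the minimal elements of the operands, which is exactly the {\tt cwm} branch. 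Matching these two identities with the code gives the inductive step, the base case being the initialisation.

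The heart of the argument is the bound of $a$ iterations, which amounts to showing $\lambda^a_{\D}=\lambda_{\D}$, i.e. that every safe distance is already witnessed by plays reaching $r$ within $a$ steps. Here I would exploit that all labels lie in $\Zset_{\leq 0}$, so traversing a cycle can only lower every resource. Concretely, if $\sigma$ witnesses that $\alpha$ is a safe distance from $(s,\alpha)$ and some play revisits a state $q'$ at steps $i<j$ before reaching $r$, then the load at step $j$ is component-wise below the load at step $i$; since safety is monotone in the load, player~$\Box$ may, from step $i$, adopt the entire sub-strategy that $\sigma$ uses from the step-$j$ node, remaining safe and strictly shortening that play. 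Systematically eliminating all such repetitions leaves a strategy under which every play reaches $r$ along an acyclic prefix, hence within $a$ steps, so $\safe_{\D}(s,r)=\safe^a_{\D}(s,r)$ and $\lambda^a_{\D}=\lambda_{\D}$. Combined with monotonicity (the upward closures of the $\lambda^k_{\D}$ form an increasing chain, since a $\leq k$-step play is also a $\leq(k{+}1)$-step play), the sequence stabilises no later than step $a$, and once two consecutive iterates agree they equal the stable value $\lambda_{\D}$; thus the loop performs at most $a$ iterations and returns $d[s]=\lambda_{\D}(s,r)$. I expect the careful game-theoretic formulation of this cycle-elimination — making the short-cuts simultaneously consistent across the whole branching tree of plays — to be the main obstacle, and the non-positivity of labels is precisely what makes it go through.

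For the running time I would simply count. Each of the at most $a$ iterations processes all $|D|$ states, and for a state $q$ with at most $b$ outgoing transitions the work consists of at most $b$ set operations ($M-\delta_i$ together with either $\cup$ or {\tt cwm}) followed by one {\tt min-set}, on sets of size at most $|M|$. Each such operation costs $\calO(|M|^2)$: for {\tt cwm} one forms at most $|M|^2$ component-wise maxima and reduces to minimal elements, and for the $D_{\Box}$ branch one reduces a set of size $\calO(b\cdot|M|)$ to its minimal elements. This gives $\calO(b\cdot|M|^2)$ per state, $\calO(|D|\cdot b\cdot|M|^2)$ per iteration, and $\calO(|D|\cdot b\cdot a\cdot|M|^2)$ in total, as claimed.
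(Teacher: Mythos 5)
Your overall route coincides with the paper's proof: define the $k$-step safe distances and their minimal elements $\lambda^{k}_{\D}(q,r)$, verify the optimality equations by unfolding one transition (union followed by {\tt min-set} at $\Box$-states; intersection of upward-closed sets realised by {\tt cwm} followed by {\tt min-set} at $\Diamond$-states), identify the $k$-th iteration of the repeat-until loop with the $k$-th application of the operator, bound convergence by $a$ via cycle elimination, and count operations. The optimality-equation step and the complexity count are essentially as in the paper (with the same mild looseness about the size of the accumulated set before the final {\tt min-set}).

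The genuine gap is precisely the step you flag as ``the main obstacle'' and then assert: that systematically eliminating repetitions leaves a strategy under which \emph{every} play reaches $r$ along an acyclic prefix. A single surgery---making $\sigma$, after the history $s_0\cdots s_i$, behave as it would after $s_0\cdots s_j$---shortens the one play you fixed, but it modifies $\sigma$ on the entire subtree of histories extending $s_0\cdots s_i$, so other plays change as well, and nothing you say rules out the elimination process continuing forever (each surgery shortening one branch while a branch of maximal length persists or is newly created). The paper closes exactly this hole with a well-founded lexicographic measure: it first argues that $F_{\sigma}=\max_{\pi}F_{\sigma,\pi}$, the worst-case number of steps a witnessing strategy $\sigma$ needs to reach $r$, is finite (once $\sigma$ is fixed the game is finitely branching), then picks $\sigma$ minimizing $F_{\sigma}$ and, among those, minimizing the number $H_{\sigma}$ of distinct witness paths of maximal length $F_{\sigma}$; a single surgery on such a $\sigma$ produces $\sigma'$ with either $F_{\sigma'}<F_{\sigma}$, or $F_{\sigma'}=F_{\sigma}$ and $H_{\sigma'}<H_{\sigma}$, contradicting minimality. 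Without this measure (or an equivalent termination argument for the iterated surgery), your third paragraph asserts the lemma's key point rather than proving it; everything else in your write-up matches the paper's argument.
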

\begin{proof}
 For every $j \geq 0$ we say that $\alpha$ is a $j$\emph{-step safe distance} from some state $s$ to some state $r$ if there is a strategy $\sigma$ for player $\Box$ such that for any strategy $\pi$ of player $\Diamond$ the infinite path $\play_{\sigma,\pi}(s,\alpha)$ visits a configuration of the form $(r,\beta)$ \emph{after at most} $j$ steps. We denote $\safe_{\D}^{j}(s,r)$ the set of all $j$-step safe distances from $s$ to $r$ in $\D$. Finally, we denote $\lambda_{\D}^{j}(s,r)$ the set of all minimal elements of $\safe_{\D}^{j}(s,r)$. We start by proving the validity of optimality equations from Section~\ref{sec:reachability}.

\begin{claim}
The following holds:
\begin{enumerate}
 \item For every state $s$ of $\D$ we have 
    \[\lambda_{\D}^{0}(s,r)= \begin{cases}
                           \{\vec{1}\} & \text{if } s=r\\
			   \{(\infty,\dots,\infty)\} &\text{otherwise}
                           \end{cases}
\]
 \item For every $j \geq 1$ and every state $s$ of $\D$ such that $s \ltran{\delta_1} s_1,\dots,s \ltran{\delta_m} s_m$ are all transitions outgoing from $s$ we have
   \[
     \lambda_{\D}^{j}(s,r)= \begin{cases}
                          \text{{\tt min-set}}( \bigcup_{i=1}^{m} (\lambda_{\D}^{j-1}(s_i,r) - \delta_i )) & \text{if } s \in D_{\Box}\\
			  \text{{\tt min-set}}(\text{{\tt cwm}}( \lambda_{\D}^{j-1}(s_1,r) - \delta_1,\dots,\lambda_{\D}^{j-1}(s_m,r) - \delta_m)) & \text{if } s \in D_{\Diamond}
                          \end{cases}
   \]
\end{enumerate}
\end{claim}
(1.) is trivial. In (2.), if $s \in D_{\Box}$, then $\alpha \in \safe_{\D}^{j}(s,r)$ if and only if there is a transition $s \ltran{\delta_i} s_i$ such that $\alpha + \delta_i \in \safe_{\D}^{j-1}(s_i,r)$. Thus, minimal elements of $\safe_{\D}^{j}(s,r)$ are exactly elements of {\tt min-set}$(\bigcup_{i=1}^{m} (\safe_{\D}^{j-1}(s_i,r) - \delta_i )) = \text{\tt min-set}(\bigcup_{i=1}^{m} \text{{\tt min-set}}(\safe_{\D}^{j-1}(s_i,r) - \delta_i )) = \text{\tt min-set}( \bigcup_{i=1}^{m} (\lambda_{\D}^{j-1}(s_i,r) - \delta_i ))$.

Now assume that $s \in D_{\Diamond}$. Denote $R=\text{{\tt min-set}}(\text{{\tt cwm}}( \lambda_{\D}^{j-1}(s_1,r) - \delta_1,\dots,\lambda_{\D}^{j-1}(s_m,r) - \delta_m))$. Note that $\alpha\in\safe_{\D}^{j}(s,r)$ if and only if there are vectors  $\beta_1 \in \safe_{\D}^{j-1}(s_1,r) - \delta_1,\dots,\beta_m \in \safe_{\D}^{j-1}(s_m,r) - \delta_m$ such that $\alpha \geq \beta_i$ for every $1 \leq i \leq m$. That is, $\alpha\in\safe_{\D}^{j}(s,r)$ if and only if the game will be in the safe configuration in the next step, no matter what the player $\Diamond$ does. Obviously, this happens only if there are vectors $\beta_i \in \lambda_{\D}^{j-1}(s_i,r)-\delta_i$, for $1 \leq i \leq m$, such that $\alpha \geq \beta_i$ for every $1 \leq i \leq m$. Thus, $\alpha\in\safe_{\D}^{j}(s,r)$ if and only if there is some $\beta\in R$ such that $\alpha \geq \beta$. Also note that $R\subseteq \safe_{\D}^{j}(s,r)$. Putting these observations together we immediately get $R \subseteq \lambda_{\D}^{j}(s,r)$.

Conversely, if $\alpha \in \lambda_{\D}^{j}(s,r)$, then we already know that there is $\beta \in R$ such that $\alpha \geq \beta$. From the previous paragraph we know that $\beta \in \lambda_{\D}^{j}(s,r)$. Since members of $\lambda_{\D}^{j}(s,r)$ are pairwise incomparable, we get $\alpha = \beta$ and thus $\alpha \in R$. This finishes the proof of the claim.

Let us now consider a computation resulting from the call of Min-dist$(\D,s,r)$. For any $q \in D$ denote $M^j(q)$ the content of $n[q]$ right before the $j$-th iteration of the repeat-until loop. Then it is easy to see, with the help of the previous claim, that $M^{j}(q) = \lambda_{\D}^{j}(q,r)$. (From associativity of operations $\cup$ and {\tt cwm} it follows that computations on lines \ref{alg:reach1}--\ref{alg:endfor1} and \ref{alg:reach5}--\ref{alg:endfor2} correctly compute the sets $\text{{\tt min-set}}( \bigcup_{i=1}^{m} (M^{j-1}(q_i) - \delta_i ))$ and $\text{{\tt min-set}}(\text{{\tt cwm}}( M^{j-1}(q_1) - \delta_1,\dots,M^{j-1}(q_m) - \delta_m))$, respectively.) 

We will now show that for any state $q$ we have $ \lambda_{\D}(q,r) = \lambda_{\D}^{a}(q,r) $, where $a$ is the length of the longest acyclic path in $\D$. This means that the fixed point is reached after at most $a$ iterations of the repeat-until loop and that upon termination we have computed the correct set. %

It suffices to prove that if $\alpha$ is a safe distance from $s$ to $r$, then player $\Box$ has a strategy $\sigma$ such that for any strategy $\pi$ of player $\Diamond$ the path $\play_{\sigma,\pi}(s,\alpha)$ visits some configuration of the form $(r,\beta)$ in at most $a$ steps. The proof closely follows the proof of Lemma \ref{lem-claim1} from section \ref{sec:restricted-bounded-prop}.

Let us fix a safe strategy $\sigma$ such that for any strategy $\pi$ of player $\Diamond$ the path $\play_{\sigma,\pi}(s,\alpha)$ visits a configuration $(r,\beta)$. We will say that such a strategy \emph{ensures the safe visit of $r$}.
Denote by $F_{\sigma,\pi}$ the number of steps the $\play_{\sigma,\pi}(s,\alpha)$ needs to reach configuration of the form $(r,\beta)$.
Denote by $F_{\sigma}$ the maximum $\max_{\pi} F_{\sigma,\pi}$.
It it is easy to see that $F_{\sigma}$ is finite for every $\sigma$ that ensures safe visit of $r$ (otherwise a play won by $\pi$ could be easily constructed due to the fact that without possibility to reload the game is finitely branching).

If $F_{\sigma}\leq a$, we are done. So assume that $F_{\sigma}>a$ and that $\sigma$ minimizes $F_{\sigma}$  among all safe strategies. Given $\pi$, denote by $w_{\sigma,\pi}$ the shortest prefix of $\play_{\sigma,\pi}(s,\alpha)$ on which a configuration of the form $(r,\beta)$ appears. Denote by $H_{\sigma}$ the number of distinct paths $w_{\sigma,\pi}$ of length $F_{\sigma}$ (i.e., $H_{\sigma}=|\{w_{\sigma,\pi}\mid \pi \in \Pi,\len{w_{\sigma,\pi}}=F_{\sigma}\}|$) and assume that $\sigma$ minimizes $H_{\sigma}$ among all safe strategies.
We show that there is $\sigma'$ such that either $F_{\sigma'}<F_{\sigma}$, or $F_{\sigma'}=F_{\sigma}$ and $H_{\sigma'}<H_{\sigma}$, a contradiction with the minimality of $F_{\sigma}$ and $H_{\sigma}$.

Consider one of the paths $w_{\sigma,\pi}=(s_0,\alpha_0)\cdots (s_k,\alpha_k)$ of length $k=F_{\sigma}$. As $F_{\sigma}>a$, there must be $i<j$ such that $s_i=s_j$. %
We define choices of a new strategy, $\sigma'$, for all possible histories (i.e. sequences of states visited before the choice) $w\in D^*D_{\Box}$: 
\begin{itemize}
\item If $w=s_0 \cdots s_{k'}$, where $k'<i$, then $\sigma'$ chooses the same transition as $\sigma$ for the history $w$. %
\item If $w=s_0 \cdots s_i t_1\cdots t_{k'}$, then $\sigma'$ chooses the same transition as $\sigma$ for the history $s_0\cdots s_i s_{i+1} \cdots s_j t_1\cdots t_{k'}$.%
\item For all other histories the strategy $\sigma'$ chooses the same transition as $\sigma$. %
\end{itemize}
The same argument as in proof of Lemma \ref{lem-claim1} reveals that $\sigma'$ still ensures the safe visit of $r$ from $s$, because for every strategy $\pi$ of player $\Diamond$ with
\[
\play_{\sigma',\pi}(s,\alpha)=(s_0,\alpha_0) \cdots (s_i,\alpha_i)(t_1,\alpha_{i+1})(t_2,\alpha_{i+2})\cdots
\]
there is a strategy $\pi'$ such that
\[
\play_{\sigma,\pi'}(s,\alpha)=(s_0,\alpha_0) \cdots (s_i,\alpha'_i)(s_{i+1},\beta_1) \cdots (s_j,\beta_{j-i})(t_1,\alpha'_{i+1})(t_2,\alpha'_{i+2})\cdots
\]
with $\alpha'_i(j)\leq \alpha_i(j)$ for all $i$ and $j$, and $s_k \neq r$ for every $i \leq k \leq j$.

We again have $F_{\sigma'}\leq F_{\sigma}$. Moreover, all plays according to $\sigma'$ that start with the sequence of states $s_0\cdots s_i$ visit a configuration of the form $(r,\beta)$ in less than $F_{\sigma}$ steps. Thus either  
$F_{\sigma'}<F_{\sigma}$, or $F_{\sigma'}=F_{\sigma}$ and $H_{\sigma'}<H_{\sigma}$, a contradiction with the minimality of $F_{\sigma}$ and $H_{\sigma}$.

To finish the proof of Lemma \ref{lem:reach-termination} let us evaluate the complexity of Min-dist. The worst-case performance of the algorithm is primarily affected by computations on lines \ref{alg:reach4} and \ref{alg:reach3}. The worst-case complexities of operations are $\calO(|M|)$ for $M-\alpha$, $\calO(|M|^2)$ for {\tt min-set}$(M)$ and $\calO(|M_1|\cdot|M_2|)$ for $M_1 \cup M_2$ and {\tt cwm}$(M_1,M_2)$. It is then easy to see that the complexity of the algorithm is indeed $\calO\left(|{D}| \cdot b\cdot a \cdot |M|^{2} \right)$.
\qed
\end{proof}

To finish the proof of Theorem \ref{thm:min-dist} we need to prove that number $M$ from Lemma \ref{lem:reach-termination} can be bounded by $(a\cdot\ell)^d$.
It is easy to see that the following holds at any time before the $i$-th iteration of the repeat-until loop (even during the previous iterations): for every $q \in D$ and every $\alpha \in n[q] \cup d[q] \cup $ {\tt temp} we have either $\alpha = (\infty,\dots,\infty)$ or $\alpha \leq (i\cdot \ell,\dots, i\cdot \ell)$. Because the algorithm stops after at most $a$ iterations, the size of sets $n[q]$, $d[q]$ and {\tt temp} is at most $(a\cdot\ell)^d$.

Finally, let us once again restate the property of minimal safe distances that was proved above:
\begin{lemma}
\label{lem:fast-reachability}
 Let $\C$ be a consumption game with transitions labeled by vectors over $\Zset_{\leq}$ and let $s$ and $r$ be states of $\C$.
 If $\alpha$ is a safe distance from $s$ to $r$, then player $\Box$ has a strategy $\sigma$ such that for every strategy $\pi$ of player $\Diamond$ the infinite path $\play_{\sigma,\pi}(s,\alpha)$ visits configuration of the form $(t,\beta)$ in at most $a$ steps, where $a$ is the length of the longest acyclic path in $\C$.
\end{lemma}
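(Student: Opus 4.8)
The plan is to prove this by a \emph{cycle-removal} argument that mirrors the proof of Lemma~\ref{lem-claim1}, but is in fact lighter: since $\D$ carries no $\omega$-labels, every transition can only decrease the resources, so splicing out a loop never hurts safety and no compensating reload is needed. Throughout I write $r$ for the target (the statement's $(t,\beta)$ should read $(r,\beta)$). First I would fix any strategy $\sigma$ witnessing that $\alpha$ is a safe distance, i.e.\ one that for every counter-strategy $\pi$ drives $\play_{\sigma,\pi}(s,\alpha)$ to a configuration $(r,\beta)$ without visiting $F$; call such a $\sigma$ one that \emph{ensures the safe visit of $r$}. For each $\pi$ let $F_{\sigma,\pi}$ be the number of steps before $r$ is first reached and put $F_\sigma=\max_\pi F_{\sigma,\pi}$.

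I would first check that $F_\sigma$ is finite: once $\sigma$ is fixed there are no reload choices (no $\omega$-labels), so the tree of $\sigma$-consistent plays that have not yet reached $r$ is finitely branching; an infinite $F_\sigma$ would, by K\"onig's lemma, yield an infinite $\sigma$-play never reaching $r$, contradicting that $\sigma$ ensures the safe visit of $r$. I then pick $\sigma$ minimizing $F_\sigma$ and, among those, minimizing the number $H_\sigma$ of longest witnessing prefixes, and assume for contradiction that $F_\sigma>a$.

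Taking a witnessing prefix $(s_0,\alpha_0)\cdots(s_k,\alpha_k)$ of length $k=F_\sigma>a$, I would locate a repeated state strictly \emph{before} the end. Since $r$ is reached for the first time at step $k$ we have $s_\ell\neq r$ for all $\ell<k$; if $s_0,\dots,s_{k-1}$ were pairwise distinct then $s_0\cdots s_k$ would be an acyclic path of length $k>a$, contradicting the definition of $a$. Hence there are $i<j<k$ with $s_i=s_j$, and the excised block $s_i\cdots s_j$ contains no occurrence of $r$. I would then define $\sigma'$ exactly as in Lemma~\ref{lem-claim1}: it copies $\sigma$ up to step $i$, and on any later history $s_0\cdots s_i\,t_1\cdots t_{k'}$ it plays what $\sigma$ would play on the history $s_0\cdots s_i\,s_{i+1}\cdots s_j\,t_1\cdots t_{k'}$, thereby short-circuiting the loop.

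The step I expect to be the main obstacle is verifying that $\sigma'$ still ensures the safe visit of $r$; this is where the absence of reloads is essential. The plan is to couple plays: for every $\pi$ there is a $\pi'$ so that $\play_{\sigma,\pi'}(s,\alpha)$ coincides with $\play_{\sigma',\pi}(s,\alpha)$ except that the former additionally traverses the deleted block, and because all labels are non-positive, deleting that block can only raise every resource component at every subsequent configuration. Thus, at corresponding positions the $\sigma$-play values lie componentwise below the $\sigma'$-play values; since $\sigma$ kept all components positive until reaching $r$, so does $\sigma'$, and since the deleted block contained no $r$, the shortcut play under $\sigma'$ still reaches $r$, now strictly sooner on this branch. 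Consequently $F_{\sigma'}\le F_\sigma$ and either $F_{\sigma'}<F_\sigma$, or $F_{\sigma'}=F_\sigma$ with $H_{\sigma'}<H_\sigma$, contradicting the choice of $\sigma$. Therefore $F_\sigma\le a$, which is exactly the asserted bound. I would close by noting that this is precisely the fast-reachability property needed to justify that Min-dist converges after at most $a$ iterations (cf.\ Lemma~\ref{lem:reach-termination}).
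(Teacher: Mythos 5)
Your proposal is correct and follows essentially the same route as the paper's own argument (embedded in the proof of Theorem~\ref{thm:min-dist}): the same double minimization of $F_\sigma$ and $H_\sigma$, the same cycle-excision construction of $\sigma'$, and the same play-coupling argument showing that removing a loop only increases resource values when all labels are non-positive. Your added care in locating the repeated state strictly before the final step and in noting that the excised block cannot contain $r$ makes explicit what the paper only asserts, but it is the same proof.
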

We will use the previous observation again in one of the following proofs.

\subsection*{Proof of Theorem~\ref{thm:oneplayer-reach}}
In this section we first give a formal proof of Theorem \ref{thm:oneplayer-reach}. Then we will show how to use this theorem to devise an algorithm for computing minimal elements of $\safe(s)$ in one-player games.

We will often work with more consumption games at
once. To avoid confusion, we denote by $\safe_{\C}(s)$ the set of all
safe vectors in a state $s$ of game $\C$ and by $\ell_C$
the maximal finite $|\delta(i)|$ such that $\delta$ is a label of some
transition in $\C$. We drop the subscript if the game is clear from
the context.  

We start by formal construction of game $\C(\pi)$. First, we introduce some notation.

For infinite path $p = v_{0} \tran{\delta_1} v_{1} \tran{\delta_2} \cdots$ in $\C$ and $1 \leq j \leq d$ we denote $\firstom(p,j)$ the least index $m$ such that $\delta_m(j) = \omega$ (or $\infty$ if $\delta_m(j) \neq\omega$ for every $m$).

For any $I \subseteq \{1,\dots,d \}$ and any vector $\alpha$ we define $\alpha^I$ to be the vector that is obtained from $\alpha$ by zeroing the $i$-th component for every $i \in I$, i.e. $\alpha^{I}(i) = 0$ if $i \in I$ and $\alpha^I(i) = \alpha(i)$ if $i\not\in I$.

For every $I \subseteq \{1,\dots,d\}$ we denote $E_{I}$ the set of all transitions $v \tran{\alpha} v' \in E$ such that $\alpha(i) = 0$ for every $i \in \{1,\dots,d\}\setminus I$. We set $S_I=\{s \in S \mid \exists t \in S\colon (s,t) \in E_{I} \vee (t,s) \in E_I\}$. If $(S_I,E_I)$ is a transition system, then we can define a \emph{zero-subgame induced by} $I$ to be the one-player game $\zerog(\C,I) = (S_I,E_I,(S_I,\emptyset),L_{E_I})$, where $L_{E_I}$ is restriction of $L$ to $E_I$. Note that $\zerog(\C,\{1,\dots,d\})=\C$. We say that a state $s$ of $\C$ is \emph{$I$-safe} if $(S_I,E_I)$ is a transition system and $\safe_{\zerog(\C,I)}(s) \neq \emptyset$. Note that every state is $\{1,\dots,d\}$-safe, since we assume that all states with $\safe(s)=\emptyset$ were removed from game $\C$.

Formally, we have $\C({\pi}) = (S({\pi}),E({\pi}),(S({\pi}),\emptyset),L({\pi}))$, where $S(\pi) = \{$\mbox{$(s,m) \mid s \in S$}$,~0 \leq m \leq d\} \cup \{r,h\}\text{, where }r,h \not\in S$ and where
\begin{align*}
 E({\pi}) &= \{((s,m),(t,m)) \mid (s,t) \in E, L(s,t) \in (\Zset_{\leq 0})^d,~0 \leq m\leq  d \} \\
         &\cup\{((s,m),(t,m+k)) \mid (s,t) \in E,~L(s,t)({\pi_i}) = \omega \text{ for every } m < i \leq m+k\}\\
	 &\cup\{((s,m),r) \mid s \text{ is } \{\pi_1,\pi_2,\dots,\pi_{m}\}\text{-safe} \} \\
	 &\cup\{((s,m),h) \mid s \in S,~0 \leq m \leq d \} \cup\{(r,r),(h,h)\}\\ %
 \end{align*}
The labeling $L({\pi})$ is defined for every $(v,v')\in E({\pi})$ as follows:
\begin{itemize}
 \item if $v = (s,m), v'=(t,m+k)$ for some $0 \leq m < d$ and $0 \leq k < d-m$, then $L({\pi})(v,v') = L(s,t)^{\{\pi_1, \dots, \pi_{m+k}\}}$;
 \item if $v' = r$ then $L({\pi})(v,v')=\vec{0}$;
 \item if $v' = h$ then $L({\pi})(v,v')=\vec{-1}$. 
\end{itemize}
Note that all labels in $\C({\pi})$ are vectors over $\Zset_{\leq 0}^d$ and that branching degree of $\C(\pi)$ is at most $|S|+2$. Also note that the game can be constructed in time polynomial in size of $\C$ using the polynomial time algorithm for checking emptiness (Theorem \ref{thrm:dec-one-emptiness}).

We now prove the second part of the theorem.
It clearly suffices to prove that $\safe_{\C}(s,\pi)=\safe_{\C(\pi)}((s,0),r)$.

Suppose that $\alpha\in\safe_{\C}(s,\pi)$. Fix a safe strategy $\sigma$ in $(s,\alpha)$ such that $\play_{\sigma}(s,\alpha)$ matches $\pi$. Denote $\play_{\sigma}(s,\alpha)=(s,\alpha) \tran{\delta_1} (s_1,\alpha_1) \tran{\delta_2} (s_2,\alpha_2) \cdots$ and denote $w=s \tran{\delta_1} s_1 \tran{\delta_2} s_2 \cdots$ the corresponding path in the state space of $\C$.
Finally, let $I = $\mbox{$\{i \in \{1,\dots,d\}\mid \delta_k(i) = \omega \text{ for some } k\geq 1\}$}.

Since $\sigma$ is safe, there must be number $K\geq \firstom(w,\pi_{|I|})$ such that for every $j \geq K$ we have $\delta_j(i) = 0$ for every $i \in \{1,\dots,d\}\setminus I$.
But this means that the state $s_K$ is $I$-safe. Thus, the following sequence $w'$ must be a path in the state space of $\C(\pi)$:
\begin{align*}
 w' &= (s,0)\tran{\delta_1} \cdots (s_{\firstom(p,\pi_1)-1},0) \tran{\delta_{\firstom(p,\pi_1)}^{\pi_1}} (s_{\firstom(p,\pi_1)},n_{\firstom(p,\pi_1)}) \tran{}\cdots\\
    &\tran{} (s_{\firstom(p,\pi_m)-1},n_{\firstom(p,\pi_m)-1}) \tran{\delta^{\pi_1,\dots,\pi_{m}}_{\firstom(p,\pi_m)}}(s_{\firstom(p,\pi_m)},n_{\firstom(p,\pi_m)}) \tran{} \cdots\\
    &\tran{} (s_{\firstom(p,\pi_{|I|})-1},n_{\firstom(p,\pi_{|I|})-1}) \tran{\vec{0}} r \tran{\vec{0}} r \tran{\vec{0}} r\cdots,
\end{align*}
where $n_{l}$ denotes number of distinct resources that were reloaded at least once along the path $w$ during the first $l$ steps.
Moreover, for every $i \in I$ the decrease of the $i$-th resource before the first reload in $\C(\pi)$ along the path $w'$ is the same as decrease of $i$-th resource before the first reload in $C$ along the path $w$. For $i \not \in I$ the decrease of the $i$-th resource in $C(\pi)$ along the path $w'$ is the same as  decrease of this resource along the path $w$ in $\C$. Now if we define strategy $\sigma'$ in $\C(\pi)$ to simply follow the path $w'$ until $r$ is reached (for histories that are not prefixes of $w'$, the strategy $\sigma'$ is defined arbitrarily), then it is easy to see that starting with resources initialized to $\alpha$ no resource is ever depleted, and thus $\play_{\sigma'}((s,0),\alpha)$ visits configuration of the form $(r,\beta)$.
Thus, $\alpha$ is a safe distance from $(s,0)$ to $r$.

On the other hand, assume that $\alpha$ is a safe distance from $(s,0)$ to $r$ in $\C(\pi)$. Let $\sigma$ be a strategy such that $\play_{\sigma}((s,0),\alpha)$ reaches configuration of the form $(r,\beta)$. Let $ \play_{\sigma}((s,0),\alpha)= ((s,0),\alpha) \tran{\delta_1} ((s_1,i_1),\alpha_1) \cdots ((s_k,i_k),\alpha_k) \tran{\vec{0}} (r,\alpha_k)^{\omega}$ and let $w=(s,0)\tran{\delta_1} (s_1,i_1) \tran{\delta_2} (s_2,i_2) \cdots (s_k,i_k) \tran{\vec{0}} r$ be the corresponding finite path in state space of $\C(\pi)$. Then there is a finite path $w'= s \tran{\delta_1 '} s_1 \cdots s_{k-1} \tran{\delta_{k}'} s_{k}$ in the state space of $\C$. If we denote $I$ the set of resources that were reloaded somewhere along the path $w'$, then for every $i \in I$ the decrease of $i$-th resource before the first reload in $\C$ along the path $w'$ is the same as decrease of $i$-th resource before the first reload in $\C(\pi)$ along the path $w$. For $i \not \in I$ we have that decrease of the $i$-th resource along the path $w'$ in $\C$ is the same as decrease of this resource along the path $w$ in $\C(\pi)$. Moreover, we know that state $s_{k}$ is $I$-safe. Thus, the player has the following safe strategy $\sigma'$ in $(s,\alpha)$: play along the path $w'$ until its end is reached. If it is possible to reload any resource along this path, always reload it to value $k\cdot M$, where $M=\ell \cdot |S|\cdot d$. When the play reaches the state $s_{k}$, switch to a safe strategy for $(s_k,(M,\dots,M))$ in game $\zerog(\C,I)$. Such a strategy must exist due to the Corollary \ref{cor-frp}.

Strategy $\sigma'$ is safe because when $s_{k} $ is reached, resources in $I$ have value at least $M$, and resources not included in $I$ are never decreased after the $k$-th step. Furthermore, all first reloads happen during the first $k$ steps. Thus, the order of first reloads is the same as in $w$. This proves that $\alpha \in \safe_{\C}(s,\pi)$.

\begin{figure}[t]%
\begin{procedure}[H]
\caption{Min-safe()($\C$,$s$)}
\DontPrintSemicolon
\SetKwInOut{Input}{input}
\SetKwInOut{Output}{output}
\SetKwData{dist}{dist-set}
\SetKwData{work}{M}
\SetKwFunction{mindist}{Min-dist}
\SetKwFunction{minimum}{min-set}
\Input{A $d$-dimensional one-player consumption game $\C$ such that $\safe(t) \neq \emptyset$ for every state of $\C$, state $s$ of $\C$ }
\Output{A set of all minimal elements of $\safe(s)$}

\work$\leftarrow\emptyset$, \dist$\leftarrow \emptyset$\;
\ForEach{$\pi \in \Pi(d)$}{
      construct a game $\C({\pi})$ with a distinguished state $r$\;
      \dist$\leftarrow$\mindist{$\C({\pi})$,$(s,0)$,$r$}\;
      \work $\leftarrow$ \minimum{\dist $\cup$ \work}\;
}
\Return{\work}
\end{procedure}
\caption{Computing minimal elements of $\safe(s)$}
\label{fig:alg-oneplayer}
\end{figure}

Now, we will show, how to use Theorem \ref{thm:oneplayer-reach} to construct procedure Min-safe that computes the set of minimal elements of $\safe(s)$ for a state $s$ of a given one-player consumption game $\C$.
The procedure Min-safe is presented on Figure \ref{fig:alg-oneplayer}. Its correctness is a straightforward consequence of Theorem \ref{thm:oneplayer-reach}.

Let us now discuss the complexity of procedure Min-safe. For a fixed permutation $\pi$ the state space of $\C({\pi})$ can be constructed in $\calO(|S|\cdot d)$ time. In order to compute transitions in $\C({\pi})$ we need to determine whether $\safe_{\zerog(\C,\{\pi_1,\dots,\pi_j\})}(s) \neq \emptyset$, for every $s \in S$ and every $1 \leq j \leq d$. This amounts to solving $|S|\cdot d$ instances of nonemptiness problem for Streett automata, for automata with at most $|S|$ states and $|E|$ transitions, so the complexity of this step is $\calO((|S|^2 \cdot d^2 +|E|\cdot|S|\cdot d)\cdot\min\{|S|,d\})$.

The branching degree of $\C({\pi})$ is at most $|S|+2$ and no acyclic path in $\C({\pi})$ is longer than $|S|\cdot d + 1$. Also, $\ell_{\C(\pi)}\leq \ell_{\C}$. Thus, single call of Min-dist$(\C({\pi}),(s,0),r)$ has complexity \mbox{$\calO(|S|^{2d+3}\cdot \ell^{2d}\cdot d^{2d+2})$}. Moreover, from Theorem \ref{thm:min-dist} we get that the size of the set {\tt M} never exceeds $|S|^d\cdot d^d\cdot \ell^d$, so a single execution of line 5 has complexity at most $\calO(|S|^{2d} \cdot d^{2d} \cdot \ell^{2d})$. We can see that the overall worst-case running time of the algorithm is \mbox{$\calO(d!\cdot |S|^{2d+3}\cdot \ell^{2d}\cdot d^{2d+2})$}.

\subsection*{Proof of Theorem~\ref{thm:decreasing}}
We will again start with the proof of Theorem \ref{thm:decreasing} and then we will discuss the complexity of the resulting algorithm.

First, let us formally define the game $\wh{C}$.
Recall that for any $I \subseteq \{1,\dots,d \}$ and any vector $\alpha$ we denote $\alpha^I$ the vector $\beta$ that is obtained from $\alpha$ by zeroing the $i$-th component for every $i \in I$, i.e. $\alpha^{I}(i) = 0$ if $i \in I$ and $\alpha^I(i) = \alpha(i)$ if $i\not\in I$.

We put $\wh{\C} = (\wh{S},\wh{E},(\wh{S}_{\Box},\wh{S}_{\Diamond}),\wh{L})$, 
where
\begin{align*} 
 \wh{S} &= \{(s,I) \mid s \in S, I \subseteq\{1,\dots,d\} \}\cup \{r\}\text{, where }r \not\in S\\
 \wh{E} &= \{((s,I),(t,I)) \mid (s,t) \in E,~L(s,t) \in (\Zset_{\leq 0})^d\} \\
         &\cup\{((s,I),(t,J)) \mid (s,t) \in E,~J=I \cup\{j\mid L(s,t)({j}) = \omega\}\}\\
	 &\cup\{((s,\{1,\dots,d\}),r) \mid s \in S \}\\
	 &\cup\{(r,r)\}\\ %
 \end{align*}
and where the labeling $\wh{L}$ is defined for every $(v,v')\in \wh{E}$ as follows:
\begin{itemize}
 \item if $v = (s,I), v'=(t,J)$ for some $I \subseteq J \subseteq \{1,\dots,d\}$ 
 then $\wh{L}(v,v') = \alpha^{J}$;
 \item if $v' = r$ then $\wh{L}(v,v')=\vec{0}$.
\end{itemize}
Finally, the partition of state space $(\wh{S}_{\Box},\wh{S}_{\Diamond})$ is defined as follows: a state of the form $(s,I)$ belongs to player $\Box$ iff $s \in S_{\Box}$ or $I=\{1,\dots,d\}$.

Note that the game $\wh{\C}$ can be constructed in time $\calO(2^d \cdot(|S|+|E|))$, the branching degree of $\wh{\C}$ is at most $|S|+1$ and the length of the longest acyclic path in $\wh{\C}$ is at most $|S|\cdot d +1$.

To prove the first part of the theorem, it suffices to prove that $\safe_{\C}(s) = \safe_{\wh{\C}}((s,\emptyset),r)$. The equality of sets of minimal elements then follows.

Suppose that $\alpha\in\safe_{\C}(s)$.
Let $\sigma$ be some safe strategy in $(s,\alpha)$ in game $\C$. We define a strategy $\wh{\sigma}$ for player $\Box$ in $\wh{\C}$. Note that since we have no reloads in $\wh{\C}$, for the fixed initial configuration $((s,\emptyset),\alpha)$ all strategies in $\wh{\C}$ can be viewed as functions from finite sequences of states to transitions in $\wh{\C}$ (current configuration of the game can be always inferred from knowledge of this sequence and initial configuration). Therefore, we assume that initial configuration $((s_0,\emptyset),\alpha_0)$ with $s_0 = s$ and $\alpha_0 = \alpha$ was fixed. We will define the choice of strategy $\wh{\sigma}$ for every history $w$ in $\wh{\C}$.

For every finite path $w=(s_0,\emptyset)(s_1,I_1)(s_2,I_2)\cdots(s_m,I_m)$ in the state space of $\wh{\C}$ we denote $w'$ the corresponding path $s_0 s_1 \dots s_m$ in the state space of $\C$. If $I_m = \{1,\dots,d\}$, then we require $\wh{\sigma}$ to select the transition leading to $r$. Otherwise, let $(s_m,s')$ be the transition in $\C$ selected by $\sigma$ given a history $w$. There is a unique transition of the form $((s_m,I_m),(s',I'))$ in $\wh{\C}$ and we require $\wh{\sigma}$ to select exactly this transition given a history $w$.

We claim that that for any strategy $\pi$ of player $\Diamond$ in $\wh{\C}$ the path $\play_{\wh{\sigma},\pi}((s_0,\emptyset),\alpha)$ reaches configuration a configuration $(r,\beta)$ for some $\beta$ (and thus $\alpha\in\safe_{\wh{\C}}((s,\emptyset),r)$). Fix a strategy $\pi$ of player $\Diamond$. Then there is a strategy $\pi'$ of player $\Diamond$ in $\C$ such that the decrease of the $i$-th resource before its first reload on $\play_{\sigma,\pi'}(s,\alpha)$ is the same as the decrease of the $i$-th resource on $\play_{\wh{\sigma},\pi}((s,\emptyset),\alpha)$, for every $1 \leq i \leq d$. Thus, the $\play_{\wh{\sigma},\pi}((s,\emptyset),\alpha)$ cannot reach configuration $F$, because otherwise $\play_{\sigma,\pi'}(s,\alpha)$ would also visit $F$, a contradiction with $\sigma$ being safe in $(s,\alpha)$. The only other possibility for the $\play_{\wh{\sigma},\pi}((s,\emptyset),\alpha)$ not to reach configuration of the form $(r,\beta)$ is that $\play_{\wh{\sigma},\pi}((s,\emptyset),\alpha)$ visits infinitely many states of the form $(t,I)$, where $I \subset \{1,\dots,d\}$. But then there is a strategy $\pi''$ of player $\Diamond$ in $\C$ such that some resource is never reloaded along the $\play_{\sigma,\pi''}(s,\alpha)$. Because $\C$ is decreasing, this would again imply that $\play_{\sigma,\pi'}(s,\alpha)$ visits $F$, a contradiction with $\sigma$ being safe in $(s,\alpha)$.

Now suppose that $\alpha \in \safe_{\wh{\C}}((s,\emptyset),r)$. Fix a strategy $\sigma$ in $((s,\emptyset),\alpha)$ such that $\play_{\sigma,\pi}((s_0,\emptyset),\alpha)$ reaches $(r,\beta)$ (for some vector $\beta$) for any strategy $\pi$ of $\Diamond$ in at most \mbox{$|S|\cdot d$} steps. Such a strategy exists due to the Lemma \ref{lem:fast-reachability}. We define a safe strategy $\sigma'$ in $(s,\alpha)$ in $\C$. For every finite path $w = s_0s_1\cdots s_m$ in the state space of $\C$, such that not every resource is reloaded along $w$, we denote $w'$ the unique finite path in the state space of $\wh{\C}$, such that $w' = (s_0,\emptyset)(s_1,I_1)\cdots (s_m,I_m)$ for suitable $I_j$, where $1 \leq j \leq m$. %
Now we require $\sigma'(w)$ to select the unique transition of the form $s_{m} \tran{} q$ where $q$ is such that  $\sigma(w') = (q,J)$ for some $J$. If this transition permits reload of any resource, the strategy reloads it to $|S|\cdot d \cdot M$, where $M = \ell \cdot |S| \cdot d$. Once all resources were reloaded at least once, $\sigma'$ starts to behave like some safe strategy in configuration $(q,(M,\dots,M))$, where $q$ is the current state of the game (such a strategy must exist  due to the Corollary \ref{cor-frp}). 

We claim that strategy $\sigma'$ is safe in $(s,\alpha)$. Since $\sigma$ ensured that the play always reaches some configuration of the form $(r,\beta)$ (and thus also some configuration of the form $((q,\{1,\dots,d\}),\beta')$), the strategy $\sigma'$ will always start to behave as some safe strategy in $(q,(M,\dots,M))$, for some state $q$, no matter what the player $\pi$ does. When this happens, all resources have value at least $M$, because the switch happens after at most $|S|\cdot d$ steps. If there is some strategy $\pi$ of player $\Diamond$ such that $\play_{\sigma',\pi}(s,\alpha)$ visits configuration $F$ before all resource could be reloaded, then there is also strategy $\pi'$ of player $\Diamond$ in $\wh{\C}$ such that $\play_{\sigma,\pi'}((s,\emptyset),\alpha)$ visits $F$, a contradiction with $\sigma$ being a safe strategy. It follows that $\sigma'$ is safe in $(s,\alpha)$.

\paragraph*{Complexity}
Let us now discuss the complexity of the algorithm. The  construction of $\wh{\C}$ takes time
$\calO(2^d \cdot(|S|+|E|))$. 
The game $\wh{\C}$ has $\calO(2^d \cdot|S|)$ states and branching degree at most $|S|+2$. 
We also have $\ell_{\wh{\C}} \leq \ell_{\C}$ (i.e., the maximal absolute
value of weight in $\wh{C}$ is bounded by the maximal absolute weight 
in $\C$). 
Finally, the length of every acyclic path in $\wh{\C}$ is bounded by 
$|S|\cdot d + 1$, because from every state $(t,I)$ we can reach only states 
$r,h$ and states of the form $(t,J)$ where $J=I$ or $|J|>|I|$. Thus, 
by Theorem~\ref{thm:min-dist} the call Min-dist$(\wh{\C},(s,\emptyset),r)$ takes time 
$\calO(2^d\cdot d^{2d+2}\cdot|S|^{2d+3}\cdot \ell^{2d})$. 
This is also the total worst case complexity of our algorithm.

\end{document}